\documentclass[a4paper,twocolumn,11pt,accepted=2024-08-27]{quantumarticle}
\pdfoutput=1

\usepackage[utf8]{inputenc}

\usepackage[T1]{fontenc}
\usepackage{hyperref}

\usepackage{braket}
\usepackage{here}
\usepackage{multirow}
\usepackage{longtable}
\usepackage{ascmac}

\usepackage{mathtools} 
\usepackage{graphicx}
\usepackage{amsfonts,amsmath,amssymb,amsthm}

\theoremstyle{plain}
\newtheorem{thm}{Theorem}
\newtheorem*{thm*}{Theorem}

\usepackage{bm}
\usepackage{algorithm}
\usepackage[noend]{algpseudocode}
\usepackage[flushleft]{threeparttable}



\algblock{Input}{EndInput}
\algnotext{EndInput}
\algblock{Output}{EndOutput}
\algnotext{EndOutput}

\newtheorem{lemma}{Lemma}






\begin{document}
\title{Quantum-enhanced mean value estimation via adaptive measurement}
\author{Kaito Wada}
\email{wkai1013keio840@keio.jp}
\affiliation{Department of Applied Physics and Physico-Informatics, Keio University, 3-14-1 Hiyoshi, Kohoku-ku, Yokohama, Kanagawa, 223-8522, Japan}
\author{Kazuma Fukuchi}
\affiliation{Department of Applied Physics and Physico-Informatics, Keio University, 3-14-1 Hiyoshi, Kohoku-ku, Yokohama, Kanagawa, 223-8522, Japan}
\author{Naoki Yamamoto}
\email{yamamoto@appi.keio.ac.jp}
\affiliation{Department of Applied Physics and Physico-Informatics, Keio University, 3-14-1 Hiyoshi, Kohoku-ku, Yokohama, Kanagawa, 223-8522, Japan}
\affiliation{Quantum Computing Center, Keio University, 3-14-1 Hiyoshi, Kohoku-ku, Yokohama, Kanagawa, 223-8522, Japan}

\begin{abstract}
Quantum-enhanced (i.e., higher performance by quantum effects than any classical methods) mean value estimation of observables is a fundamental task in various quantum technologies; in particular, it is an essential subroutine in quantum computing algorithms.
Notably, the quantum estimation theory identifies the ultimate precision of such an estimator, which is referred to as the quantum Cram\'{e}r-Rao (QCR) lower bound or equivalently the inverse of the quantum Fisher information. 
Because the estimation precision directly determines the performance of those quantum technological systems, it is highly demanded to develop a generic and practically implementable estimation method that achieves the QCR bound.
Under imperfect conditions, however, such an ultimate and implementable estimator for quantum mean values has not been developed.
In this paper, we propose a quantum-enhanced mean value estimation method in a depolarizing noisy environment that asymptotically achieves the QCR bound in the limit of a large number of qubits.
To approach the QCR bound in a practical setting, the method adaptively optimizes the amplitude amplification and a specific measurement that can be implemented without any knowledge of state preparation.
We provide a rigorous analysis for the statistical properties of the proposed adaptive estimator such as consistency and asymptotic normality.
Furthermore, several numerical simulations are provided to demonstrate the 
effectiveness of the method, particularly showing that the estimator needs only a modest number of measurements to almost saturate the QCR bound.
\end{abstract}

\maketitle

\section{Introduction}\label{sec:intro}
Mean value (or expectation value) estimation of quantum observables is an important task in many quantum 
information technologies, such as subroutines in quantum algorithms for both near-term 
noisy and future fault-tolerant quantum devices. 
For instance, variational quantum algorithms require estimating mean values iteratively 
to train the parameters of a parameterized quantum circuit 
\cite{cerezo2021variational,peruzzo2014variational,Kandala2017Nat,fuller2021approximate,gao2021applications,amaro2022case, zoufal2022variational,li2017efficient, yao2021adaptive,Benedetti_2021, wada2022,larose2019variational, cerezo2022variational}.
Surely, the performance of those information processing technologies heavily depends on 
the estimation precision. 
In addition, real quantum devices suffer from noise induced by the interaction of system 
and environment, and such noise deteriorates the efficiency for reading out calculation 
results. 
Therefore, developing efficient mean-value estimation methods (in a noisy environment) 
is highly important~\cite{brassard2002quantum,knill2007optimal,wang2019accelerated,huang2020predicting,wang2021minimizing,johnson2022reducing,huggins2022nearly}.

For this purpose, the statistical estimation theory 
\cite{cramer1946mathematical,shao2003mathematical} and their quantum extension 
\cite{holevo2011probabilistic,hayashi2006quantum} are useful.
In particular, the seminal Cram\'{e}r-Rao inequality and Fisher information, which 
identify the limit of estimation precision, have been extended to quantum versions 
\cite{helstrom1969quantum,helstrom1968minimum,hayashi2006quantum,braunstein1994statistical,paris2009quantum}; that is, the quantum Cram\'{e}r-Rao (QCR) inequality and quantum 
Fisher information. 
As in the classical case, they can be used to determine the ultimate 
performance of several quantum information processing systems such as quantum sensors  \cite{giovannetti2011advances,toth2014quantum,degen2017quantum,pirandola2018advances,PhysRevA.87.043833,hong2021quantum}. 
Moreover, the QCR acts as a fundamental limitation on information extractable from 
quantum mechanical systems, e.g., \cite{faist2021time}.
Therefore, extensive investigations have been conducted to characterize estimators 
that (almost) achieve the QCR lower bound i.e., the ultimate precision bound for optimized quantum measurement~\cite{PhysRevLett.75.2944,berry2000optimal,barndorff2000fisher,nagaoka2005asymptotic,Fujiwara_2006,Fujiwara_2011,PhysRevLett.109.130404,PhysRevLett.129.240503,PhysRevResearch.5.013138}.
Note that whether such optimal estimators are implementable depends on available quantum operations in the target system.
Devising an implementable and optimal (in the sense of QCR) estimator is thus one of the most important subjects in quantum sensing and metrology.

In this way, the quantum statistical estimation theory should be appropriately applied to develop efficient estimators for various quantities in quantum computing, such as mean values of quantum observables.
At the same time, toward the early success of quantum computing, there is an increasing need for efficient estimators that can be realized with as few demanding quantum operations as possible.
For instance, some less-demanding implementation methods for phase estimation algorithm~\cite{higgins2007entanglement,higgins2009demonstrating,svore2013faster,PhysRevLett.117.010503,o2019quantum,cimini2023experimental} and amplitude estimation algorithm~\cite{PhysRevA.93.062306,suzuki2020amplitude,aaronson2020quantum,grinko2021iterative,nakaji2020faster,brown2020quantum,tanaka2021amplitude,uno2021modified,giurgica2022low,tanaka2022noisy,callison2022improved} 
have been developed, and their performance has been investigated using the statistical estimation theories.
Nevertheless, in practical noisy settings before the fault-tolerant era, devising an estimator that is both easy to implement and capable of achieving the ultimate QCR bound remains an ongoing subject of research.
Focusing on the mean-value estimation problem in a noisy environment, Ref.~\cite{wang2021minimizing} formulated this problem 
within the framework of amplitude estimation and proposed a Bayesian inference method 
for this problem; however, there was no discussion on how to achieve the QCR bound.
To our best knowledge, only Ref.~\cite{uno2021modified} gave a detailed analysis on 
the QCR bound in the noisy amplitude estimation problem;
in particular, under the assumption that the system is subjected to depolarization 
noise, the estimator given in~\cite{uno2021modified} achieves the precision that is yet strictly lower than the ultimate limit.

In this paper, we focus on the problem of constructing a less-demanding quantum mean-value estimator in a depolarizing noisy environment.
In light of quantum estimation theory and quantum metrology, the following questions are important.
{\it (i) What is the quantum-enhanced precision limit (or, the optimal QCR bound) over less-demanding protocols for estimating a quantum mean value?
(ii) Can we construct an implementable estimator such that the estimation precision achieves the QCR bound?
(iii) Also, is there any mathematical guarantee of the estimation performance?}
The main contribution of this paper is to give answers to all the above questions.

A summary of those contributions are as follows. 
The details are provided in Section~\ref{sec:QME}. 
First, the answer to the question (i) is given in Theorem~\ref{thm:QFI_limit}; we derive the precision limit for estimating a quantum mean value, in the setup where we are allowed to use noisy quantum states but no demanding controlled amplifications.
Based on the findings obtained from the precision limit, we construct a modified maximum likelihood estimator that adaptively optimizes a specific measurement (positive operator-valued measure (POVM)) and the number of amplitude amplification operations, by maximizing the magnitude of 
quantum enhancement.
The optimality of this estimator is supported by Theorem~\ref{thm:opt_marginalPOVM}, stating that there exists a POVM that makes the ratio of the quantum and classical Fisher information close enough to 1 (leading that the QCR bound is almost achieved)  in the limit of large number of qubits. 
Note that, as mentioned in the second paragraph, the implementation of such an optimal estimator is in general nontrivial, but our estimator can be efficiently implemented without any knowledge of the target state, which is usually prepared by a black-box unitary in the mean value estimation problem.
Thus, this is the answer to the question (ii).
As for the question (iii), we prove that the estimator satisfies the following 
statistical properties.
\begin{itemize}
    \item Consistency (Theorem~\ref{thm:consistency}): 
    As the number of measurements increases, the estimation accuracy becomes better. 
    More precisely, the probability that the estimated value is arbitrarily close to the 
    true value approaches 1 in the infinite limit of the number of measurements.
    
    \item Asymptotic normality (Theorem~\ref{thm:asym_normal}): 
    The estimator regularized by its Fisher information asymptotically follows the 
    standard normal distribution, as the number of measurements increases.
    As a result, the estimation error asymptotically achieves the classical Cram\'{e}r-Rao 
    lower bound.
\end{itemize}
Therefore, the proposed estimator has mathematical guarantees in its performance.

In addition to the above theoretical contributions, Section~\ref{sec:results} is 
devoted to thorough numerical simulations. 
In particular, we will demonstrate that the above asymptotic properties hold even 
with a modest number of each measurement ($\sim 500$) and a modest number of qubits 
($\sim 20$).
Also, we focus on the previously-reported problem that the precision of the quantum 
mean value estimation under the depolarizing noise may significantly deteriorate 
depending on the target value to be estimated; but we will numerically demonstrate that 
the proposed method can also get around this problem due to the optimization. 
Note here that the above-mentioned Bayesian inference method~\cite{wang2021minimizing} 
optimizes a quantum circuit containing multi-parameter in order to alleviate the target 
value dependency of estimation efficiency, while our method optimizes just one discrete 
parameter, which is clearly computationally advantageous.


The mean value estimation is essential in many quantum algorithms, such as quantum simulation, but is too time-consuming.
Thus, it becomes crucial for practical and large-scale quantum computing to leverage quantum enhancement for efficiently estimating mean values.
Since our adaptive mean-value estimation method utilizes the maximal quantum enhancement, we believe that our method takes a pivotal role specifically in the early stage of fault-tolerant quantum computing.

\section{Preliminary}\label{method_sec:aewpe}

The quantum-enhanced amplitude estimation algorithm studied in~\cite{suzuki2020amplitude,tanaka2021amplitude,uno2021modified,tanaka2022noisy}, which does not use the hard-to-implement phase estimation algorithm~\cite{brassard2002quantum}, consists of two major steps. 
The first step is to perform amplitude amplification on the quantum state whose amplitude is to be estimated, and then make a measurement; this procedure is repeated with different operation times. 
The second step is to estimate the amplitude by classical post-processing for the combined measurement result, i.e., the maximum likelihood estimation.

The problem is to estimate the amplitude $\sin{(\phi^*)}$, with unknown $\phi^* \in (0, \pi/2)$, of the following $(n + 1)$-qubit quantum state $\mathcal{A}\ket{0}_{n+1}$: 
\begin{align}\label{eq:target_state}
    &\mathcal{A}\ket{0}_{n+1} \notag\\
    &:= \cos{\left(\phi^*\right)}\ket{\psi_0}_n\ket{0}_1 + \sin{\left(\phi^*\right)}\ket{\psi_1}_n\ket{1}_1,
\end{align}
where $\ket{0}_{n+1}$ is the computational basis of $(n+1)$-qubit, $\ket{\psi_0}_n$ and $\ket{\psi_1}_n$ are 
normalized $n$-qubit quantum states.
Here, $\mathcal{A}$ is a black-box (i.e., unknown yet implementable) operation on a device, which is often called an oracle in the Grover search algorithm.
Note that a single application of $\mathcal{A}$ or $\mathcal{A}^\dagger$ is counted as one query.
The first step is to amplify the amplitude via the operator $\mathcal{Q}$ defined as
\begin{align}\label{eq:conv_amplification_operator}
    \mathcal{Q} &:= \mathcal{A}\left(2\ket{0}_{n+1}\bra{0}_{n+1}-I_{n+1}\right)\mathcal{A}^\dagger\left(I_n\otimes Z\right),
\end{align}
where $I_n$ is the identity operator on the $n$-qubit system, and $Z$ denotes the $2\times 2$ Pauli Z matrix.
From the assumption of implementability of $\mathcal{A}$ and $\mathcal{A}^\dagger$ as oracle operators, the amplification operator $\mathcal{Q}$ is also implementable.
As shown originally in~\cite{brassard2002quantum}, the amplification operator corresponds to the following rotation gate in the subspace $\tilde{\mathcal{S}}$ spanned by $\ket{\psi_0}_n\ket{0}_1$ and $\ket{\psi_1}_n\ket{1}_1$:
\begin{equation}\label{eq:grover_rotation}
    \left.\mathcal{Q}\right|_{\tilde{\mathcal{S}}}=\begin{bmatrix}
    \cos{\left(2\phi^*\right)}&-\sin{\left(2\phi^*\right)}\\
    \sin{\left(2\phi^*\right)}&\cos{\left(2\phi^*\right)}
    \end{bmatrix}=e^{-i2\phi^* \tilde{Y}},
\end{equation}
where $\tilde{Y}$ denotes the Pauli Y matrix with respect to the basis of $\tilde{\mathcal{S}}$.
Thus, $m$ applications of $\mathcal{Q}$ to the state (\ref{eq:target_state}) yield
\begin{align}\label{eq:amplified_state_ampest}
    \mathcal{Q}^{m}\mathcal{A}\ket{0}_{n+1}&=\cos{\left[(2m +1)\phi^*\right]}\ket{\psi_0}_n\ket{0}_1\notag\\[4pt]
    &~~~+\sin{\left[(2m +1)\phi^*\right]}\ket{\psi_1}_n\ket{1}_1.
\end{align}
We now measure the last qubit of Eq.~(\ref{eq:amplified_state_ampest}) by the computational basis $\ket{0}_1$ and $\ket{1}_1$. 
Then the probability of obtaining "1" is given by 
\begin{equation}\label{eq:bernoulli_prob_no_noise}
    \mathbf{\tilde{P}}\left(m;\phi^*\right):=\frac{1}{2}-\frac{1}{2}\cos{\left[2(2m+1)\phi^*\right]}.
\end{equation}
%
Note that this is equivalent to the Bernoulli trial with success probability $\mathbf{\tilde{P}}\left(m;\phi^*\right)$. 
By the applications of $\mathcal{Q}$, the resolution on $\phi^*$ in the probability (\ref{eq:bernoulli_prob_no_noise}) is enhanced by the factor $2m+1$; on the other hand, the ambiguity emerges in the sense that the probability $\mathbf{\tilde{P}}\left(m;\phi\right)$ becomes periodic with respect to $\phi$.

In the second step, we estimate the amplitude based on the measurement results obtained in the first step by the maximum likelihood estimation method.
Note that in the first step, we prepare the quantum states with different resolutions in order to eliminate 
the ambiguities, as similar to the methods for phase measurement~\cite{mitchell2005metrology,higgins2009demonstrating,berry2009perform,PhysRevA.102.042613,Dutkiewicz2022heisenberglimited,cimini2023experimental}.
Specifically, for each of predefined odd numbers $2m_k+1~(k = 1, 2, \cdots, M)$, we prepare the state (\ref{eq:amplified_state_ampest}) and measure it $N$ times with the computational basis independently.
Here, $M$ denotes the total number of types of quantum states to be prepared.
We write $x^{(k)} \in \{0, 1, \cdots, N\}$ as the number of hitting "1" for the state $\mathcal{Q}^{m_k}\mathcal{A}\ket{0}_{n+1}$. 
The measurement results for $M$ different states are put together as $\bm{x}_M := (x^{(1)}, x^{(2)}, \cdots , x^{(M)})$. 
Then the likelihood function to have $\bm{x}_M$ is given by
\begin{equation}\label{eq:likelihood_ampest}
    \tilde{\mathcal{L}}_M(\phi;\bm{x}_M) := \prod_{k = 1}^M {\tilde{F}}_k(x^{(k)};m_k,\phi),
\end{equation}
where $\tilde{F}_k(x^{(k)};m_k,\phi^*)$ is the probability of obtaining $x^{(k)}$. 
Given the measurements $\bm{x}_M$, the maximum likelihood estimation assumes that the value of $\phi$ that maximizes $\tilde{\mathcal{L}}_M(\phi;\bm{x}_M)$ is a plausible estimate of the true value $\phi^*$.
In other words, the maximum likelihood estimate $\hat{\phi}_M$ for $\phi^*$ from the $M$ series of measurements is defined as
\begin{equation}
\label{prelimi_optim phase}
    \hat{\phi}_M := \underset{\phi\in\left[0,{\pi}/2\right]}{\rm argmax}~\mathcal{\tilde{L}}_M(\phi;{\bm{x}}_M).
\end{equation}
Note that the maximum likelihood estimator for $\sin{(\phi^*)}$ corresponds to $\sin{(\hat{\phi}_M)}$ due to the invariance property of maximum likelihood estimators. 
Thanks to the quantum-enhanced resolution via amplitude amplification and the elimination of its ambiguities by the product of likelihoods, the estimation precision of $\phi^*$ and 
$\sin{(\phi^*)}$ can be improved quadratically with respect to the number of total queries of $\mathcal{A}$ and $\mathcal{A}^\dagger$, by using a specific sequence $m_k=2^{k-1}$ as demonstrated in~\cite{suzuki2020amplitude}.

We now consider the $n$-qubit depolarizing channel: 
\begin{equation}\label{eq:dim_d_depolarizing}
    \mathcal{D}[\rho]:=p\rho +\frac{1-p}{d}I_n,~~~d:=2^n, 
\end{equation}
where $\rho$ is an arbitrary density operator and $1-p$ is the probability that depolarization occurs.
As in~\cite{wang2021minimizing,uno2021modified,tanaka2021amplitude}, we here consider a noise model in which depolarization occurs with probability $1-p_{\rm q}$ for a single application $\mathcal{A}$ or $\mathcal{A}^\dagger$.
This assumption may be justified when the actual implementation of oracles $\mathcal{A}$ and $\mathcal{A}^\dagger$ requires complicated circuits compared to other operations in Eq.~(\ref{eq:conv_amplification_operator}).
Then the probability of obtaining "1" for the final state under this noise model 
is given by
\begin{equation}\label{eq:bernoulli_prob_noise}
    \mathbf{\tilde{P}}_{\mathcal{D}}\left(m;\phi^*\right):=\frac{1}{2}-\frac{p_{\rm q}^{2m+1}}{2}\cos{\left[2(2m+1)\phi^*\right]}.
\end{equation}
The subscript $\mathcal{D}$ means that the quantum state to be measured has passed through the above-defined depolarizing channels. 
Then the classical Fisher information associated with the probability (\ref{eq:bernoulli_prob_noise}) for $\phi^*$ is calculated as
\begin{align}\label{eq:cFI_odd}
    &\mathcal{\tilde{I}}_{\rm c}(m;\phi^*)\notag\\
    &:=\frac{(2m+1)^2p_{\rm q}^{2(2m+1)}\sin^2{\left[2(2m+1)\phi^*\right]}}{\mathbf{\tilde{P}}_{\mathcal{D}}(m;\phi^*)\left(1-\mathbf{\tilde{P}}_{\mathcal{D}}(m;\phi^*)\right)}.
\end{align}
Equation~(\ref{eq:cFI_odd}) indicates that the estimation may become ineffective 
if $m$ satisfies $\sin{\left[2(2m+1)\phi^*\right]}\simeq 0$.
More precisely, if $m$ and $\phi^*$ satisfy this condition, the classical Cram\'{e}r-Rao lower bound of the estimation error for $\phi^*$ significantly deteriorates~\cite{wang2021minimizing,tanaka2021amplitude}.
In addition, even if we use the above-described maximum likelihood estimation based on the $M$ series of measurements for $m_k~(k=1,2,\cdots,M)$, which are predefined independently to $\phi^*$, the same deterioration may occur~\cite{tanaka2021amplitude}. 
Importantly, this phenomena are also seen in the quantum mean value estimation problem discussed in the next section.

\section{Quantum-enhanced mean value estimation}\label{sec:QME}

First, in Section~\ref{sec:mean_val_est_problem}, we formulate the quantum mean value estimation problem in a similar way to the amplitude estimation problem shown in the previous section.
Then, we provide the precision limit in estimating the mean value when we can prepare and measure noisy amplitude-amplified quantum states, in Section~\ref{sec:precision_limit}.
To achieve the precision limit, we next provide the efficient and implementable measurements in Section~\ref{sec:two_type_meas} and thereby develop an adaptive mean-value estimation algorithm in Section~\ref{sec:proposed_alg}.
In Section~\ref{sec:stat_prop_est_main}, we discuss the statistical properties of our method such as consistency, asymptotic normality, and the (classical) Fisher information that can become large by the adaptive optimization.
Finally, a summary of our protocol is provided in Section~\ref{sec:summary}.

\subsection{Mean value estimation problem}\label{sec:mean_val_est_problem}
Our goal is to estimate the mean value of a known Hermitian operator $\mathcal{O}$ with eigenvalues $\pm1$, i.e., $\langle\mathcal{O}\rangle:=\bra{A}\mathcal{O}\ket{A}$, where $\ket{A}:=A\ket{0}_n$ is an $n$-qubit quantum state with a state preparation unitary $A$.
Here, we assume that $A$ and $A^\dagger$ are black-box operations yet implementable on a device, similar to $\mathcal{A}$ in the previous section. 
Note that $\mathcal{O}$ is a unitary operator, which is assumed to be implementable; also, the projective measurement $\{(I_n+\mathcal{O})/2,(I_n-\mathcal{O})/2\}$ is implementable, which is often assumed in quantum algorithms.
A typical $\mathcal{O}$ is a single Pauli string such as $\mathcal{O}=X\otimes Z \otimes \cdots$; 
note that our method is applicable to estimate the mean value of a linear combination of Pauli 
strings that form a completely anti-commuting set, because there exists a unitary operator 
that groups the linear combination into a single Pauli string~\cite{zhao2020measurement}.

In the estimation problem, we aim for maximizing the estimation precision of the target mean value $\langle\mathcal{O}\rangle$ in given total queries to $A$ and $A^\dagger$~\cite{knill2007optimal,PhysRevA.102.022408,wang2021minimizing,huggins2022nearly}.
In particular, we address this problem without the quantum phase estimation algorithm which requires many controlled operations~\cite{knill2007optimal}.
The amplitude estimation method (with no controlled-$\mathcal{Q}$ operations) in the previous section can be applied to this problem; hence, ideally, we need quadratically less queries of $A$ and $A^\dagger$ compared to the standard algorithm without amplitude amplification, to estimate the mean value with specified precision --- that is, the quantum-enhanced mean value estimation.







Here, we describe the mean value estimation problem in a similar way to the previous section.
Suppose $\ket{A}$ is not an eigenstate of $\mathcal{O}$ (thus $|\langle\mathcal{O}\rangle|\neq1$).
Then the two quantum states $\ket{A}$ and $\mathcal{O}\ket{A}$ are linearly independent and 
form the following subspace~\cite{wang2021minimizing,koh2020framework}:
\begin{align}
\label{eq:def_subspace}
    \mathcal{S}&:={\rm Span}\left\{|A\rangle,\mathcal{O}|A\rangle\right\}={\rm Span}\left\{|A\rangle,|A^{\perp}\rangle\right\},
\end{align}
where $|A^{\perp}\rangle$ is an (unknown) normalized state orthogonal to $\ket{A}$ obtained by Gram-Schmidt procedure to $\ket{A}$ and $\mathcal{O}\ket{A}$.
We denote $\ket{A}$ and $|A^{\perp}\rangle$ as $\ket{\bar0}$ and $\ket{\bar1}$, respectively, 
and identify $\mathcal{S}$ as the 1-qubit Hilbert space likewise the idea of qubitization~\cite{Low2019hamiltonian,martyn2021grand}. 
We then define an $n$-qubit operator $Q$, which corresponds to the amplitude amplification operator in Eq.~(\ref{eq:conv_amplification_operator}), as
\begin{equation}\label{eq:amplification_operator_expectation}
    Q:=A\left(2\ket{0}_n\bra{0}_n-I_{n}\right)A^\dagger\mathcal{O}.
\end{equation}
Recall that the oracles ${A}$ and ${A}^\dagger$ are implementable, meaning that the amplification operator ${Q}$ is also implementable.
This operator keeps the subspace $\mathcal{S}$ invariant because $\mathcal{O}\ket{A}=\braket{\mathcal{O}}\ket{A}+\sqrt{1-\braket{\mathcal{O}}^2}\ket{A^\perp}$ and $\mathcal{O}\ket{A^\perp}=\sqrt{1-\braket{\mathcal{O}}^2}\ket{A}-\braket{\mathcal{O}}\ket{A^\perp}$ hold from the definition of $\ket{A^\perp}$ and the fact that $\mathcal{O}^2=I_n$.
Then, similar to Eq.~(\ref{eq:grover_rotation}), 
the representation of $Q$ on $\mathcal{S}$ is expressed as a unitary operator: $\left.Q\right|_{\mathcal{S}}=\bar{Z}(\cos{(\theta^*)}\bar{Z}+\sin{(\theta^*)}\bar{X})={\rm exp}(i\theta^*\bar{Y})$, 
where we define $\bar{X},\bar{Y},\bar{Z}$ as the Pauli X, Y, Z matrices in the basis $\ket{\bar0}$ and $\ket{\bar1}$.
The rotational angle $\theta^*$ is related to the target mean value as $\theta^*:=\arccos\left(\langle\mathcal{O}\rangle\right)$; in what follows $\theta^*$ is also referred to as the target value.
Note that, for the target mean value $\langle\mathcal{O}\rangle=\cos\theta^*$, the domain of $\theta^*$ is given by $\theta^*\in(0,\pi)$, which is twice as that of $\phi^*$ in the previous section.

Using several queries of the rotation $Q$ on $\mathcal{S}$, we can prepare and measure the quantum state with quantum-enhanced resolution by $\alpha\in\mathbb{N}$, at the cost of $\alpha$ queries to $A$ and $A^\dagger$.
Applying $m$ times of the rotation $Q$ to $\ket{\bar{0}}$ followed by $\mathcal{O}$ and going back to the computational basis by $A^\dagger$, we obtain the following quantum state
\begin{align}\label{eq:even_state}
    \cos{\left(\frac{(2m+2)\theta^*}{2}\right)}\ket{0}_n+\sin{\left(\frac{(2m+2)\theta^*}{2}\right)}A^\dagger\ket{\bar{1}},
\end{align}
where this state uses $2(m+1)$ queries to $A$ or $A^\dagger$ in total.
Here, the basis $A^\dagger\ket{\bar{1}}$ is actually unknown by the black-box assumption of $A^\dagger$; thus, the class of implementable measurements on the state (\ref{eq:even_state}) is restricted, as detailed in the following sections.
On the other hand, the projective measurement of $\mathcal{O}$ on $Q^m\ket{\bar{0}}$ should be considered in $\mathcal{S}$ because of the implicit $\theta^*$-dependence of $\mathcal{O}$.
That is, this process is equivalent to the projective measurement of $\bar{Z}\oplus \left.\mathcal{O}\right|_{\mathcal{S}^\perp}$ on the following state with $2m+1$ queries to $A$ or $A^\dagger$:
\begin{equation}\label{eq:odd_state}
    \cos{\left(\frac{(2m+1)\theta^*}{2}\right)}\ket{\bar{0}}+\sin{\left(\frac{(2m+1)\theta^*}{2}\right)}\ket{\bar{1}}.
\end{equation}
This means that we can measure the quantum state with $(2m+1)$-fold enhanced resolution of $\theta^*$, although available POVM is restricted.
Therefore, for a given $\alpha\in\mathbb{N}$, we can prepare and measure the quantum state Eq.~(\ref{eq:even_state}) ($\alpha=2m+2$) or Eq.~(\ref{eq:odd_state}) ($\alpha=2m+1$), according to the parity of $\alpha$; we write these states in a unified manner as $\rho(\alpha;\theta^*)$.
In the following, we call $\alpha$ the {\it amplified level}.
Note that $\rho(\alpha;\theta^*)$ requires $\alpha$ queries to $A$ or $A^\dagger$ in total.

Here, we assume the state $\rho(\alpha;\theta^*)$ is subjected to the (global) depolarization noise with probability $1-p_{\rm q}~(>0)$ for each use of $A$ ($A^\dagger$), as in the previous case.
Then, the output state with $\alpha$ queries to $A$ or $A^\dagger$ is given by
%
\begin{align}
\label{eq:amplified_state_in_S_noise}
    \rho_{\mathcal{D}}(\alpha;\theta^*)
      :=p_{\rm q}^{\alpha}\rho(\alpha;\theta^*)+\frac{1-p_{\rm q}^{\alpha}}{d}I_n.
\end{align}
We remark that $A$ and $A^\dagger$ (e.g., the time evolution of a global Hamiltonian) are usually comprised of a large number of gates, which would sufficiently scramble local noise into global depolarization noise.
Such assumptions are verified in terms of theoretical~\cite{foldager2023can,dalzell2024random} and (small-size) experimental~\cite{tanaka2021amplitude} aspects, while we leave further verification in a large-size experiment as future work.


\subsection{Limit of estimation precision with noisy amplitude-amplified states}\label{sec:precision_limit}

%
%

%
In the following, we derive the precision limit in estimating $\theta^*$ with the use of the noisy amplitude-amplified states $\rho_{\mathcal{D}}(\alpha;\theta^*)$, based on the quantum Fisher information of these states; see Appendix~\ref{apdx:A} for a summary of quantum estimation theory.
To this end, we suppose that any POVM including the joint measurement on multiple $n$-qubit systems can be performed, for the time being.

For a given number $N_{\rm q}$ of queries to $A$ or $A^\dagger$, let us prepare a set of $\rho_{\mathcal{D}}(\alpha;\theta^*)$ by splitting the total queries.
More specifically, we consider any partition $\{\alpha'_k\}$ of given total 
queries $N_{\rm q}$ such that $N_{\rm q}=\sum_{k=1}^{M'} \alpha'_k$ for some positive integer $M'$.
Then, we define the quantum state before measurement as
\begin{equation}\label{eq:product_state}
    \rho_{\mathcal{D}}^{{\rm (bm)}}(\theta^*):=\rho_{\mathcal{D}}(\alpha'_1;\theta^*)\otimes \cdots\otimes \rho_{\mathcal{D}}(\alpha'_{M'};\theta^*).
\end{equation}
Note that each $\rho_{\mathcal{D}}(\alpha'_k;\theta^*)$ uses $\alpha'_k$ queries to $A$ or $A^\dagger$, and thus the quantum state $\rho_{\mathcal{D}}^{{\rm (bm)}}$ uses $N_{\rm q}$ queries in total.
Here, for all partitions, we provide the upper bound of the total quantum Fisher information $\mathcal{I}_{\rm q,tot}[\rho_{\mathcal{D}}^{{\rm (bm)}}]$ of the state $\rho_{\mathcal{D}}^{{\rm (bm)}}$ in estimating $\theta^*$.
\begin{thm}\label{thm:QFI_limit}
    {\rm {(Precision limit in the mean-value estimation problem)}}
    For a given number $N_{\rm q}$ of queries to $n$-qubit state preparation oracles $A$ and $A^\dagger$, we consider all partitions $\{\alpha'_k\}~(\alpha'_k\in\mathbb{N})$ of $N_{\rm q}$ such that $N_{\rm q}=\sum_{k=1}^{M'}\alpha'_k$ for some positive integer $M'$.
    Here, each use of $A$ and $A^\dagger$ induces the $n$-qubit depolarization noise with probability $1-p_{\rm q}$.
    Then, the total quantum Fisher information $\mathcal{I}_{\rm q,tot}$ regarding $\theta^*$ of the quantum state $\rho_{\mathcal{D}}^{(\rm bm)}(\theta^*)$ in Eq.~(\ref{eq:product_state}) satisfies the following inequalities:
    $$
    \mathcal{I}_{\rm q,tot}\left[\rho_{\mathcal{D}}^{(\rm bm)}\right] \leq \frac{N_{\rm q}^2p_{\rm q}^{2N_{\rm q}}}{2^{1-n}+\left(1-2^{1-n}\right)p_{\rm q}^{N_{\rm q}}},~N_{\rm q}\leq \alpha_{\rm B},
    $$
    and
    $$
    \mathcal{I}_{\rm q,tot}\left[\rho_{\mathcal{D}}^{(\rm bm)}\right] \leq \frac{N_{\rm q}\alpha_{\rm B}p_{\rm q}^{2\alpha_{\rm B}}}{2^{1-n}+\left(1-2^{1-n}\right)p_{\rm q}^{\alpha_{\rm B}}},~N_{\rm q} > \alpha_{\rm B},
    $$
    where $\alpha_{\rm B}$ is defined by 
    \begin{equation}
        \alpha_{\rm B}:=\underset{\alpha\in\mathbb{N}}{\rm argmax}~\frac{\alpha p_{\rm q}^{2\alpha}}{{2^{1-n}}+\left(1-2^{1-n}\right)p_{\rm q}^{\alpha}}.    
    \end{equation}
    Furthermore, if $N_{\rm q}\leq \alpha_{\rm B}$ or $N_{\rm q}=r\alpha_{\rm B}$ for some $r\in\mathbb{N}$, there exists a partition $\{\alpha'_k\}$ satisfying the above equality.
\end{thm}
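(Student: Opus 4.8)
The plan is to reduce the optimization over all partitions to a per-query analysis via additivity of the quantum Fisher information over tensor products. First I would recall that for a product state $\rho_{\mathcal{D}}^{(\rm bm)} = \bigotimes_{k} \rho_{\mathcal{D}}(\alpha'_k;\theta^*)$, the total quantum Fisher information is additive, so $\mathcal{I}_{\rm q,tot}[\rho_{\mathcal{D}}^{(\rm bm)}] = \sum_{k=1}^{M'} \mathcal{I}_{\rm q}[\rho_{\mathcal{D}}(\alpha'_k;\theta^*)]$. The core task then becomes computing the single-block quantum Fisher information $\mathcal{I}_{\rm q}[\rho_{\mathcal{D}}(\alpha;\theta^*)]$ for a noisy amplitude-amplified state at amplified level $\alpha$. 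Since $\rho_{\mathcal{D}}(\alpha;\theta^*)$ is, up to the depolarizing mixture in Eq.~(\ref{eq:amplified_state_in_S_noise}), a pure state $\rho(\alpha;\theta^*)$ rotated at angle proportional to $\alpha\theta^*$ living effectively in a qubit subspace, I expect the QFI to take the form $\mathcal{I}_{\rm q}[\rho_{\mathcal{D}}(\alpha;\theta^*)] = g(\alpha) := \alpha^2 p_{\rm q}^{2\alpha}/(2^{1-n} + (1-2^{1-n})p_{\rm q}^{\alpha})$, where the $\alpha^2$ comes from the chain rule (the state depends on $\theta^*$ only through $\alpha\theta^*/2$) and the denominator is the standard depolarizing-noise suppression factor of the QFI. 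This single-block computation should be carried out explicitly (or cited from Appendix~\ref{apdx:A}).

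With $\mathcal{I}_{\rm q,tot} = \sum_k g(\alpha'_k)$ established, the problem is purely a discrete optimization: maximize $\sum_k g(\alpha'_k)$ subject to $\sum_k \alpha'_k = N_{\rm q}$ with each $\alpha'_k\in\mathbb{N}$. I would define the per-query efficiency $h(\alpha) := g(\alpha)/\alpha = \alpha p_{\rm q}^{2\alpha}/(2^{1-n} + (1-2^{1-n})p_{\rm q}^{\alpha})$, which is exactly the quantity whose argmax defines $\alpha_{\rm B}$. Rewriting $g(\alpha'_k) = \alpha'_k \, h(\alpha'_k)$ and using $h(\alpha'_k) \le h(\alpha_{\rm B})$ for every $k$, I get $\sum_k g(\alpha'_k) = \sum_k \alpha'_k h(\alpha'_k) \le h(\alpha_{\rm B}) \sum_k \alpha'_k = N_{\rm q}\, h(\alpha_{\rm B}) = N_{\rm q}\alpha_{\rm B} p_{\rm q}^{2\alpha_{\rm B}}/(2^{1-n}+(1-2^{1-n})p_{\rm q}^{\alpha_{\rm B}})$, which is the second bound. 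For the first bound, in the regime $N_{\rm q}\le\alpha_{\rm B}$, a single block $\alpha'_1 = N_{\rm q}$ is feasible and I expect $g$ itself to be increasing on $[1,\alpha_{\rm B}]$ (so that concentrating all queries into one block beats splitting), giving $\mathcal{I}_{\rm q,tot}\le g(N_{\rm q})$, the first displayed inequality; this monotonicity claim is the one genuine analytic input and would need to be checked against the definition of $\alpha_{\rm B}$.

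For the tightness statements, I would exhibit explicit optimal partitions. When $N_{\rm q}=r\alpha_{\rm B}$, taking $r$ identical blocks each of level $\alpha_{\rm B}$ makes every $h(\alpha'_k)=h(\alpha_{\rm B})$, saturating the $\le h(\alpha_{\rm B})$ step and hence the second bound. When $N_{\rm q}\le\alpha_{\rm B}$, the single-block partition $\{N_{\rm q}\}$ attains $g(N_{\rm q})$ and thus saturates the first bound by construction.

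The main obstacle I anticipate is twofold. The more delicate part is the single-block QFI computation: I must verify that the depolarizing mixture produces precisely the denominator $2^{1-n}+(1-2^{1-n})p_{\rm q}^{\alpha}$ rather than some other functional form, which requires care in diagonalizing $\rho_{\mathcal{D}}(\alpha;\theta^*)$ and handling the $d-2$ eigenvalues orthogonal to the active qubit subspace $\mathcal{S}$ (these contribute to the normalization but carry no $\theta^*$-dependence). The conceptual subtlety is that the even- and odd-parity states in Eqs.~(\ref{eq:even_state}) and~(\ref{eq:odd_state}) must both yield the same $g(\alpha)$ so that the unified formula holds regardless of parity; I would confirm the $\alpha^2$ prefactor arises identically in both cases from the $(2m+1)$ or $(2m+2)$ resolution factor. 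Once $g(\alpha)$ is pinned down, the remaining optimization and tightness arguments are elementary, resting only on the factorization $g=\alpha h$ and the monotonicity of $g$ below $\alpha_{\rm B}$.
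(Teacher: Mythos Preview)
Your overall strategy matches the paper's proof exactly: additivity of the QFI over tensor products, the single-block formula $g(\alpha)=\mathcal{I}_{\rm q}(\alpha)$, the per-query efficiency $h(\alpha)=g(\alpha)/\alpha$, and the bound $\sum_k g(\alpha'_k)\le N_{\rm q}\,h(\alpha_{\rm B})$ with the obvious saturating partitions. The paper likewise does not rederive $g(\alpha)$ in this proof but imports it from the main text, so your flagging of that computation as a separate task is appropriate.

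There is, however, a genuine gap in your argument for the first inequality ($N_{\rm q}\le\alpha_{\rm B}$). You write that you expect \emph{$g$ itself} to be increasing on $[1,\alpha_{\rm B}]$ and that this is what makes ``concentrating all queries into one block beat splitting.'' Monotonicity of $g$ is not sufficient: take for instance any increasing concave $g$ such as $g(\alpha)=\sqrt{\alpha}$, where splitting $N_{\rm q}=4$ into four singletons yields $\sum_k g(\alpha'_k)=4>2=g(4)$. What you actually need is monotonicity of $h(\alpha)=g(\alpha)/\alpha$ on $[1,\alpha_{\rm B}]$, which is precisely the condition the paper invokes. With $h$ non-decreasing there and $\alpha'_k\le N_{\rm q}\le\alpha_{\rm B}$, one has $g(\alpha'_k)=\alpha'_k\,h(\alpha'_k)\le\alpha'_k\,h(N_{\rm q})$, and summing gives $\sum_k g(\alpha'_k)\le N_{\rm q}\,h(N_{\rm q})=g(N_{\rm q})$. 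So the ``one genuine analytic input'' you correctly isolated should be the monotonicity of $h$, not of $g$; once you make that substitution, your proof coincides with the paper's.
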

The proof is given in Appendix~\ref{apdx:proof_upb}, and a similar evaluation (i.e., the loss of quadratic term on ${N_{\rm q}}$) of quantum Fisher information can also be found in the previous works on noisy quantum metrology~\cite{demkowicz2012elusive,PhysRevLett.113.250801,PhysRevLett.131.090801}.
Note that, if the argmax returns multiple values, we take the minimum as $\alpha_{\rm B}$.
A particularly important fact is that the equality in the case $N_{\rm q}=r\alpha_{\rm B}>\alpha_{\rm B}$ holds for the partition $M'=r$ and $\alpha'_{k}=\alpha_{\rm B}~\forall k$; that is, 
\begin{equation}\label{eq:product state with alpha_B}
    \rho_{\mathcal{D}}^{{\rm (bm)}}(\theta^*)=\rho_{\mathcal{D}}(\alpha_{\rm B};\theta^*)\otimes \cdots\otimes \rho_{\mathcal{D}}(\alpha_{\rm B};\theta^*)
\end{equation}
enables us to achieve the upper bound of $\mathcal{I}_{\rm q,tot}$. 
Here, $\alpha_{\rm B}$ can be expressed as
\begin{equation}
\label{def of alpha B}
        \alpha_{\rm B}=\underset{\alpha\in\mathbb{N}}{\rm argmax}~
        \frac{\mathcal{I}_{\rm q}(\alpha)}{\alpha}, 
\end{equation}
where $\mathcal{I}_{\rm q}(\alpha)$ denotes the quantum Fisher information of $\rho_{\mathcal{D}}(\alpha;\theta^*)$ regarding $\theta^*$:
\begin{equation}\label{eq:unified_qFI}
    \mathcal{I}_{\rm q}(\alpha):=\frac{\alpha^2p_{\rm q}^{2\alpha}}{{2^{1-n}}+\left(1-{2^{1-n}}\right)p^{\alpha}_{\rm q}}.
\end{equation}
We can derive Eq.~\eqref{eq:unified_qFI} using the standard recipe found in e.g.,~\cite{jiang2014quantum,yao2014multiple,PhysRevA.97.042322}.
That is, fixing $\alpha$ to $\alpha_{\rm B}$ provides the best usage of $A$ and $A^\dagger$ in estimating $\theta^*$ under the noisy condition.
In this paper, we particularly focus on the regime $\alpha_{\rm B}\gtrsim 100$ (i.e., $\lesssim$ 1\% depolarization noise on each $A$) for the early stage of fault-tolerant quantum computing~\cite{PRXQuantum.3.010345}.
As for the case $N_{\rm q}\leq \alpha_{\rm B}$,
if we coherently use all queries (i.e., $M'=1$ and $\alpha'_1=N_{\rm q}$), the quantum Fisher information of $\rho_{\mathcal{D}}^{(\rm bm)}$ achieves the upper bound, which contains the quadratic term with respect to the total queries $N_{\rm q}$.

From the quantum Cram\'{e}r-Rao (QCR) inequality~\cite{helstrom1969quantum,helstrom1968minimum,hayashi2006quantum,braunstein1994statistical,paris2009quantum}, the upper bounds in Theorem~\ref{thm:QFI_limit} set the precision limit in estimating $\theta^*$ with given queries $N_{\rm q}$, via the quantum states $\rho_{\mathcal{D}}(\alpha'_k;\theta^*)$.
That is, for any (unbiased) estimator $\hat{\theta}$ of $\theta^*$ based on the measurement result from the quantum states, the mean squared error (MSE) of $\hat{\theta}$ is lower bounded as 
\begin{equation}\label{eq:fundamental_precision_limit1}
    {\rm MSE}(\hat{\theta})\geq \frac{1}{N^2_{\rm q}} \frac{2^{1-n}+\left(1-2^{1-n}\right)p_{\rm q}^{N_{\rm q}}}{p_{\rm q}^{2N_{\rm q}}},~N_{\rm q}\leq \alpha_{\rm B}
\end{equation}
and
\begin{equation}\label{eq:fundamental_precision_limit2}
    {\rm MSE}(\hat{\theta})\geq \frac{1}{N_{\rm q}\alpha_{\rm B}} \frac{2^{1-n}+\left(1-2^{1-n}\right)p_{\rm q}^{\alpha_{\rm B}}}{p_{\rm q}^{2\alpha_{\rm B}}},~N_{\rm q}> \alpha_{\rm B}.
\end{equation}
Note that we can obtain the lower bound of MSE regarding $\langle\mathcal{O}\rangle=\cos\theta^*$ by simply multiplying this inequality by $1-\langle\mathcal{O}\rangle^2$.
These inequalities hold for any POVM independent to $\theta^*$; thus they characterize the limit of estimation precision that cannot be improved by any measurement.
In the following, we mainly focus on the precision limit given by Eq.~(\ref{eq:fundamental_precision_limit2}) because the other bound is asymptotically negligible in a large number of queries $N_{\rm q}$.



\subsection{The two-type measurement}\label{sec:two_type_meas}
To most efficiently read out the quantum mean value from $\rho_{\mathcal{D}}$, we need a POVM whose classical Fisher information achieves the quantum Fisher information defined for this state.
In fact, we found the following 3-valued POVM that satisfies the above requirement for specific values of $\theta^*$ and $\rho_{\mathcal{D}}(\alpha;\theta^*)$ with even $\alpha=2m+2$, as shown in Appendix~\ref{apdx:analysis_POVM}:
\begin{align}\label{eq:POVM_bf_marginal}
    M_0&:=\ket{{0}}_n\bra{{0}},~M_1:=A^\dagger\ket{\bar{1}}\bra{\bar{1}}A,\notag\\[6pt]
    M_2&:=I_n-M_0-M_1,
\end{align}
where we recall that $\bra{0}_n A^\dagger|\bar{1}\rangle=\langle{\bar{0}}|\bar{1}\rangle=0$.
Note that this POVM can be obtained from an optimal POVM that exactly achieves the quantum Fisher information at the cost of $\theta^*$-dependancy in the POVM elements; also see Appendix~\ref{apdx:analysis_POVM}.

Suppose we can perform such an optimal measurement on each $n$-qubit system of $\rho_{\mathcal{D}}^{(\rm bm)}$ in a separable way.
Then, the corresponding (total) classical Fisher information $\mathcal{I}_{\rm c,tot}$, which depends on $\rho_{\mathcal{D}}^{(\rm bm)}$ and the selection of POVM, is equal to $\mathcal{I}_{\rm q,tot}[\rho_{\mathcal{D}}^{(\rm bm)}]$ and furthermore achieves the upper bound in Theorem~\ref{thm:QFI_limit} by setting $\alpha'_k=\alpha_B~\forall k$.
However, both the optimal measurement and the POVM \eqref{eq:POVM_bf_marginal} contain the unknown state $A^\dagger\ket{\bar{1}}$, and thus it is highly nontrivial to perform the measurements on standard quantum computing devices under the black-box assumption of $A$ and $A^\dagger$.
Hence, we approximate the ideal 3-valued POVM by marginalizing it, to have the following 2-valued POVM:
\begin{align}
\label{eq:even_POVM}
    M^{(\rm even)}_1&:=I_n-M^{(\rm even)}_0,\notag\\[4pt]
    M^{(\rm even)}_0&:=\ket{0}_n\bra{0},
\end{align}
where the superscript \textit{even} means that we can perform this measurement on the quantum state $\rho_{\mathcal{D}}$ with \textit{even} $\alpha$ (i.e., $\alpha=2m+2$).
This POVM can be implemented without any knowledge of the black-box operations $A$ and $A^\dagger$, by the computational basis measurement followed by post-processing for classifying the obtained measurements into "0" or "1".

The measurement~\eqref{eq:even_POVM} has powerful estimation capabilities in the sense of its Fisher information, as follows.
Measuring the state (\ref{eq:amplified_state_in_S_noise}) with $\alpha=2m+2$ by this POVM, we obtain the measurement result corresponding to $M^{(\rm even)}_1$ with probability 
\begin{align}\label{eq:P_even}
    &\mathbf{P}^{(\rm even)}_{\mathcal{D}}(m;\theta^*)\nonumber\\
    &:=\frac{d-1}{d}+p_{\rm q}^{2(m+1)}\left[\sin^2{\left[(m+1)\theta^*\right]}-\frac{d-1}{d}\right].
\end{align}
Then, the classical Fisher information with Eq.~(\ref{eq:P_even}) is calculated as
\begin{align}\label{eq:even_FI}
    &\mathcal{I}^{(\rm even)}_{\rm c}(m;\theta^*)\notag\\[4pt]
    &:=\frac{(2m+2)^2 p_{\rm q}^{4(m+1)}\sin^2{\left[(2m+2)\theta^*\right]}}{4\mathbf{P}^{(\rm even)}_{\mathcal{D}}(m;\theta^*)\left(1-\mathbf{P}^{(\rm even)}_{\mathcal{D}}(m;\theta^*)\right)}.
\end{align}
Also, we rewrite the corresponding quantum Fisher information (\ref{eq:unified_qFI}) as $\mathcal{I}^{(\rm even)}_{\rm q}(m):=\mathcal{I}_{\rm q}(2m+2)$ for simplicity.

\begin{thm}\label{thm:opt_marginalPOVM}
{\rm (Optimality of the marginalized POVM)}
\begin{align}\label{eq:even_FI_infty}
    &\mathcal{I}^{(\rm even)}_{\rm c}(m;\theta^*)\notag\\
    &~~~=\kappa\mathcal{I}^{(\rm even)}_{\rm q}(m)\times\left[1+\frac{1}{2^n}\frac{2(1-p_{\rm q}^{2m+2})}{ p_{\rm q}^{2m+2}}\right]
\end{align}
holds, where $\kappa$ is defined as 
\begin{align}\label{eq:kappa}
    \kappa
    := \frac{\sin^2{[(m+1)\theta^*]}}{1-p_{\rm q}^{2m+2}\cos^2{[(m+1)\theta^*]}+\varepsilon_n(m;\theta^*)}.
\end{align}
Here, $\varepsilon_n(m;\theta^*)=O(2^{-n})$ if $\cos{[(m+1)\theta^*]}\neq 0$.
When $\theta^*/2\pi$ is an irrational number, there exists an $m$ (i.e., the number of querying $Q$) such that $\kappa_{\infty}:=\lim_{n\to \infty}\kappa$ is arbitrarily close to $1$.
\end{thm}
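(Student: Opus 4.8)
The plan is to first establish the displayed identity \eqref{eq:even_FI_infty} by a direct substitution, and then reduce the final existence claim to a statement about the density of an irrational rotation. First I would insert the explicit probability $\mathbf{P}^{(\rm even)}_{\mathcal{D}}(m;\theta^*)$ from \eqref{eq:P_even} into the Fisher-information formula \eqref{eq:even_FI}. Writing $\delta:=2^{-n}$, $q:=p_{\rm q}^{2m+2}$, $s:=\sin^2[(m+1)\theta^*]$ and $c:=\cos^2[(m+1)\theta^*]=1-s$, one finds $1-\mathbf{P}^{(\rm even)}_{\mathcal{D}}=qc+\delta(1-q)$, and using the double-angle relation $\sin^2[(2m+2)\theta^*]=4sc$ the numerator of \eqref{eq:even_FI} factors so that $\mathcal{I}^{(\rm even)}_{\rm c}/\mathcal{I}^{(\rm even)}_{\rm q}$ becomes a rational function of $\delta$. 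Pulling the factor $1+\delta\cdot 2(1-q)/q$ out of the numerator reproduces the bracket in \eqref{eq:even_FI_infty}, while what remains in the denominator is $1-qc+\varepsilon_n$ with $\varepsilon_n=\delta(1-q)(1-2qc)/(qc)-\delta^2(1-q)^2/(qc)$; this is manifestly $O(2^{-n})$ provided $c\neq0$, which proves the $\varepsilon_n$ assertion. This step is only bookkeeping, so I would relegate the algebra to an appendix.

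Taking $n\to\infty$ kills both $\delta$-corrections and gives $\kappa_\infty=s/(1-qc)$. The key algebraic observation is then $1-\kappa_\infty=\dfrac{c(1-q)}{1-qc}$, and since $c\le1$ forces $1-qc\ge1-q>0$, this yields the clean uniform bound
\[
1-\kappa_\infty\;\le\;\cos^2[(m+1)\theta^*],
\]
valid for every $q\in(0,1)$, so the bound persists no matter how small the noise factor $q=p_{\rm q}^{2m+2}$ becomes as $m$ grows. Thus the whole problem collapses to making $\cos^2[(m+1)\theta^*]$ small, i.e.\ driving $(m+1)\theta^*$ close to an odd multiple of $\pi/2$.

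For the final existence statement I would invoke equidistribution. Since $\theta^*/2\pi$ is irrational, so is $\theta^*/\pi$ (a rational value of the latter would make the former rational), and therefore the sequence $\{(m+1)\theta^*\bmod\pi\}_{m\ge0}$ is dense in $[0,\pi)$ by Weyl's theorem for irrational rotations (equivalently Kronecker's theorem). Consequently there exist integers $m$ with $(m+1)\theta^*\bmod\pi$ arbitrarily close to $\pi/2$, so $\cos^2[(m+1)\theta^*]$ — and hence $1-\kappa_\infty$ — can be made arbitrarily small, which is exactly the assertion that $\kappa_\infty$ is arbitrarily close to $1$. Irrationality of $\theta^*/\pi$ also guarantees $\cos[(m+1)\theta^*]\neq0$ for every $m\ge0$, so the hypothesis $c\neq0$ needed in the first step is automatically satisfied and never has to be imposed separately.

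The computations of the first two paragraphs are routine; I expect the only genuine content — and the main obstacle — to be the number-theoretic step: correctly extracting the right irrational rotation ($\theta^*/\pi$, not $\theta^*/2\pi$) from the stated hypothesis, and pairing it with the uniform bound $1-\kappa_\infty\le\cos^2[(m+1)\theta^*]$ so that $m$ may be chosen \emph{before} the limit $n\to\infty$ is taken. The order of limits is essential here: $m$ must be fixed first (the $\delta$-dependent coefficients in \eqref{eq:even_FI_infty} and \eqref{eq:kappa} blow up as $q\to0$ at fixed $n$), $\kappa_\infty(m)$ is obtained by letting $n\to\infty$, and only then is $m$ selected from the dense set to push $\kappa_\infty(m)$ toward $1$.
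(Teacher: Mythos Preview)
Your proposal is correct and follows essentially the same line as the paper's proof in Appendix~\ref{apdx:analysis_POVM}: both compute $\mathcal{I}^{(\rm even)}_{\rm c}$ from \eqref{eq:even_FI} directly, isolate the same $\varepsilon_n$ term with the $O(2^{-n})$ bound under $\cos[(m+1)\theta^*]\neq0$, and then appeal to density of the irrational rotation to make $\cos[(m+1)\theta^*]$ small.

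Two points where your write-up is slightly sharper than the paper's: the explicit uniform bound $1-\kappa_\infty=\dfrac{c(1-q)}{1-qc}\le c$ makes the passage from ``$\cos^2[(m+1)\theta^*]$ small'' to ``$\kappa_\infty$ close to $1$'' completely insensitive to the decay $q=p_{\rm q}^{2m+2}\to0$, which the paper leaves implicit; and your care in passing from the hypothesis on $\theta^*/2\pi$ to the irrationality of $\theta^*/\pi$ (hence density modulo $\pi$ and the automatic guarantee $\cos[(m+1)\theta^*]\neq0$) is more explicit than the paper's one-line appeal to $\{e^{i(m+1)\theta^*}\}$ filling the unit circle. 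These are refinements of presentation rather than a different strategy.
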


This theorem means that, to estimate a {\it particular} $\theta^*$, the POVM (\ref{eq:even_POVM}) can become optimal in the sense of Fisher information, as the number of qubits increases; the proof of Theorem~\ref{thm:opt_marginalPOVM} and additional analysis are provided in Appendix~\ref{apdx:analysis_POVM}.
Note that a more detailed analysis shows that the factor $\varepsilon_n(m;\theta^*)$ diverges at certain points $(m,\theta^*)$ such that $\cos{[(m+1)\theta^*]}=0$, while in other points, it converges to zero exponentially fast with respect to the number of qubits $n$.
In the limit of a large number of qubits $n$, $\mathcal{I}_{\rm c}^{\rm (even)}$ still depends on $\theta^*$ by the factor $\kappa_{\infty}$ in the domains $\sin{[2(m+1)\theta^*]}\neq 0$, which differs from the classical Fisher information of the original 3-valued POVM; see Appendix~\ref{apdx:analysis_POVM}.

Based on Theorem~\ref{thm:opt_marginalPOVM}, in the next subsection we introduce an algorithm that optimizes $m$ (or equivalently $\alpha$) in a certain finite range.
Although this theorem does not guarantee $\mathcal{I}^{(\rm even)}_{\rm c}\approx \mathcal{I}^{(\rm even)}_{\rm q}$ for the optimized $\alpha$, we numerically verify this approximation holds well when using modest number of qubits.
Furthermore, this optimization is often valid even when $\theta^*/2\pi$ is a rational number as demonstrated later.
As a consequence, this algorithm enables us to have nearly optimal POVMs for \textit{almost all} $\theta^*$ in the large system dimension, in the sense of total Fisher information i.e., $\mathcal{I}_{\rm c,tot}\approx \mathcal{I}_{\rm q,tot}$.

Although the even POVM (\ref{eq:even_POVM}) can become optimal in the sense of Fisher information, this does not necessarily guarantee that the actual estimator (particularly 
the maximum likelihood estimator based on the POVM) produces the best estimate that achieves the QCR bound.
Actually, the probability (\ref{eq:P_even}) is an even function with respect to 
the value $\theta^*-\pi/2$, and the maximum likelihood estimation associated with 
this even POVM cannot distinguish the sign of mean values. 
To construct an estimator that may correctly estimate the true target value, we therefore 
introduce the following second POVM that breaks the above-mentioned symmetry:
\begin{align}
\label{eq:odd_POVM}
    M_1^{(\rm{odd})} &:= I_n-M_0^{(\rm{odd})},\nonumber\\[4pt]
    M_0^{(\rm{odd})} &:= \frac{I_n + \mathcal{O}}{2}.
\end{align}
From the assumption, the projective measurement of $\mathcal{O}$ is implementable.
Here, we recall that the process to measure $Q^m\ket{\bar{0}}$ by this POVM is equivalent to the projective measurement of $\bar{Z}\oplus \left.\mathcal{O}\right|_{\mathcal{S}^\perp}$ on the state $\rho(\alpha;\theta^*)$ with odd $\alpha=2m+1$, as mentioned in Section~\ref{sec:mean_val_est_problem}.
For this reason, we have added the superscript \textit{odd} for the POVM (\ref{eq:odd_POVM}).
Then, the probability of obtaining "1" is given as
\begin{equation}
\label{eq:odd_number_prob}
    \mathbf{P}^{(\rm odd)}_{\mathcal{D}}(m;\theta^*) = \frac{1}{2} - \frac{p_{\rm q}^{2m+1}}{2}\cos{\left[(2m + 1)\theta^*\right]}.
\end{equation}
The coefficient of $\theta^*$ differs by the factor 2 compared to that of $\phi^*$ in Eq.~(\ref{eq:bernoulli_prob_noise}), because the presence of sign for quantum mean values doubles the domain of the target value. 
Since the probability \eqref{eq:odd_number_prob} is an odd function unlike Eq.~(\ref{eq:P_even}), the measurement can distinguish the sign of the target mean value. 
Note that, in contrast to the even POVM, the classical Fisher information corresponding to the odd POVM, $\mathcal{I}^{(\rm odd)}_{\rm c}(m;\theta^*)$, always deviates from the quantum Fisher information even if the number of qubits increases; see Appendix~\ref{apdx:analysis_POVM}.

Hence, our estimator is composed of the even and odd POVMs, where the probability of obtaining the measurement result "1" is summarized as follows:
\begin{align}\label{eq:unified_P}
    &\mathbf{P}_{\mathcal{D}}(\alpha;\theta^*)\notag\\[4pt]
    &:=\begin{dcases}
    \frac{1}{2}-\frac{p_{\rm q}^{\alpha}}{2}\cos{\left(\alpha\theta^*\right)},&\alpha\text{ is odd}\\[6pt]
    \frac{d-1}{d}+p_{\rm q}^{\alpha}\left[\sin^2{\left(\frac{\alpha}{2}\theta^*\right)}-\frac{d-1}{d}\right],&\alpha\text{ is even}
    \end{dcases}.
\end{align}
%
Here $\alpha\in\mathbb{N}$ represents the number of queries to the operator $A$ or $A^\dagger$.
We perform the measurement via Eq.~(\ref{eq:odd_POVM}) when $\alpha$ is odd $(\alpha=2m+1)$ and via Eq.~(\ref{eq:even_POVM}) when $\alpha$ is even $(\alpha=2m+2)$. 
Also, the classical Fisher information corresponding to each measurement is represented as 
\begin{align}
\label{eq:unified_cFI}
    \mathcal{I}_{\rm c}(\alpha;\theta^*)&:=\frac{\alpha^2 p_{\rm q}^{2\alpha}\sin^2{(\alpha\theta^*)}}{4\mathbf{P}_{\mathcal{D}}(\alpha;\theta^*)(1-\mathbf{P}_{\mathcal{D}}(\alpha;\theta^*))}.
\end{align}
%

\subsection{The proposed adaptive estimation algorithm}\label{sec:proposed_alg}

\begin{figure*}
    \centering
    \includegraphics[scale=0.4]{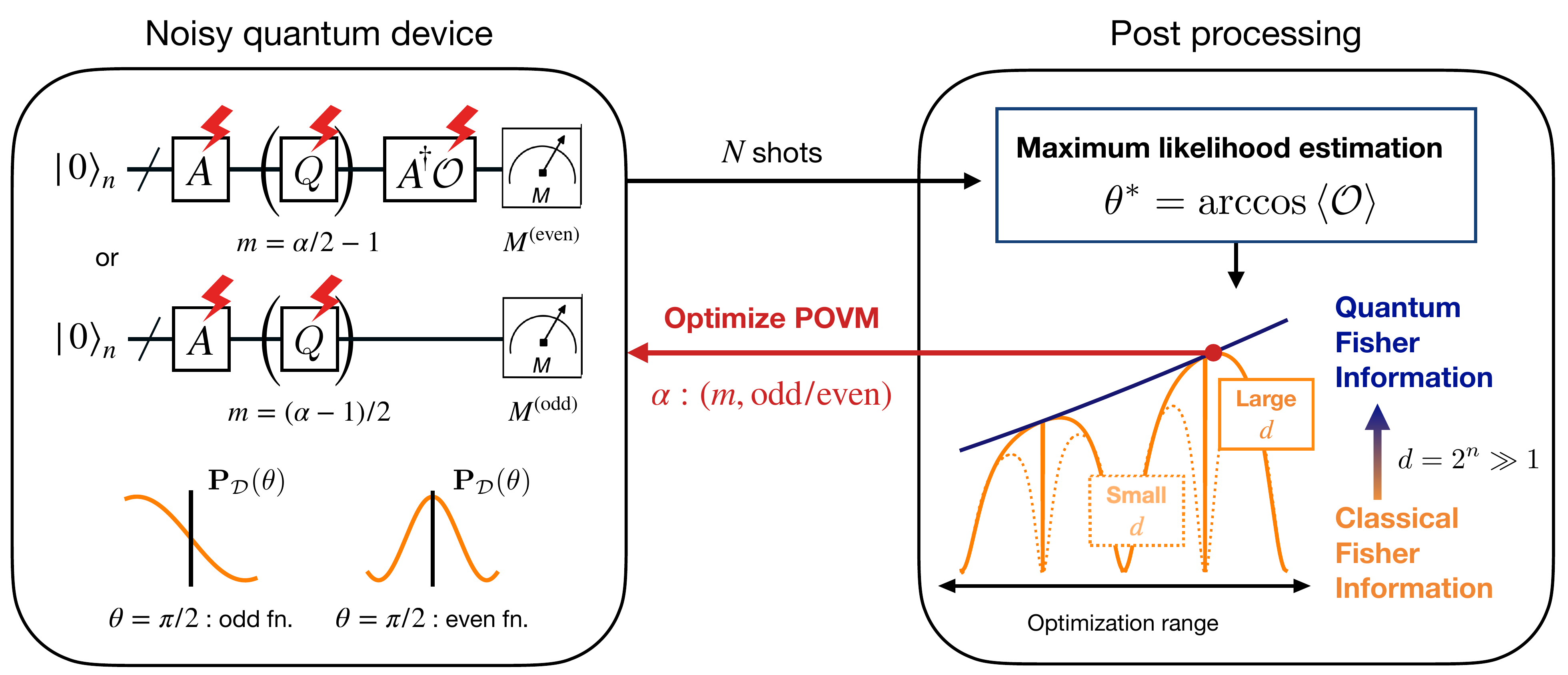}
    \caption{Overview of the proposed algorithm for estimating 
    $\langle\mathcal{O}\rangle=\langle A|\mathcal{O}|A\rangle$. 
    Here, $\ket{A}$ is a target state prepared by operating an $n$-qubit quantum 
    circuit $A$ on the computational basis $\ket{0}_{n}$, and $\mathcal{O}$ is 
    a target Hermitian operator with eigenvalues $\pm 1$. 
    We use the modified maximum likelihood estimation method based on the specific 
    quantum circuits with the amplitude amplification operator $Q$ and the two POVMs 
    $M^{(\rm even)},M^{(\rm odd)}$. 
    The measurement based on the two types of POVM yields probability distributions 
    with different symmetries, and as a result the algorithm can estimate the mean 
    value $\langle A|\mathcal{O}|A\rangle$ correctly including its sign.
    We assume that the depolarization noise is induced when implementing the state preparation $A$ or $A^\dagger$.
    Even under this noisy condition, the algorithm can estimate the target value efficiently in the sense of the Fisher information by optimizing the number of querying $Q$ (i.e., $m$) and the type of POVM (i.e., even or odd), which are encoded together into a 1-dimensional discrete parameter $\alpha$.
    That is, the optimization improves the classical Fisher information so that 
    it gets sufficiently close to the quantum Fisher information, when the system 
    dimension $d=2^n$ is enough large.
    }
    \label{alg:AAS_alg}
\end{figure*}

Based on the discussion so far, the algorithm should satisfy the following conditions: 
\begin{itemize}
\item
When $\sin{(\alpha\theta^*)}\simeq0$, the classical Fisher information 
(\ref{eq:unified_cFI}) becomes nearly zero, and thus the estimation becomes 
inefficient by the Cram\'{e}r-Rao inequality. 
Hence, such $\alpha$ should be avoided, in both POVMs (\ref{eq:even_POVM}) 
and (\ref{eq:odd_POVM}). 
\item
We want to employ measurements whose classical Fisher 
information approaches the quantum Fisher information for enhancing the 
estimation power. 

\item We want the total quantum Fisher information $\mathcal{I}_{\rm q,tot}$ to achieve the upper bound, shown in Theorem~\ref{thm:QFI_limit}; thus $\alpha$ should be chosen to approach $\alpha_{\rm B}={\rm argmax}_{\alpha\in\mathbb{N}}~
\mathcal{I}_{\rm q}(\alpha)/\alpha$, generating the state \eqref{eq:product state with alpha_B}.

\end{itemize}
Here we describe our mean value estimation method that adaptively adjusts the 
amplified level $\alpha$ and thereby satisfies the above three requirements. 
Importantly, this method inherits several good properties of the maximum 
likelihood estimation, as proven in Section~\ref{sec:stat_prop_est_main}.
Note that, in the context of the amplitude estimation algorithms 
described in Section~\ref{method_sec:aewpe}, Ref.~\cite{uno2021modified} 
discusses the similar POVM in Eq.~(\ref{eq:even_POVM}). 
However, unlike our method, the algorithm of \cite{uno2021modified} does not 
employ an adaptive optimization algorithm for estimation; thus it cannot achieve the quantum Fisher information for almost all $\theta^*$ and also suffers from the problem of vanishing classical Fisher information.

Our algorithm is based on the maximum likelihood estimation method for the 
random variables subjected to the probability distribution (\ref{eq:unified_P}). 
An overview is shown in Fig.~\ref{alg:AAS_alg}.
The algorithm is composed of the following five procedures.

\begin{itemize}
    \item[(i)]
    Set the first amplified level $\alpha_1=1$, and set the number of measurements $N$ and the number of steps $M$ as some natural numbers.
\end{itemize}
\begin{itemize}
    \item[(ii)] 
    If the amplified level $\alpha_k$ is even, then 
    we measure $N$ copies of the quantum state $\rho_{\mathcal{D}}(\alpha_k;\theta^*)$ of Eq.~(\ref{eq:amplified_state_in_S_noise})  with the POVM (\ref{eq:even_POVM}).
    If $\alpha_k$ is odd, we perform the POVM (\ref{eq:odd_POVM}) to each of $N$ copies of $\rho_{\mathcal{D}}(\alpha_k;\theta^*)$.
    We write the total number of hitting "1" as $x^{(k)}\in\{0,1,\cdots,N\}$. 
    Here, $x^{(k)}$ is a realization of the Binomial random variable
    \begin{align}
        ~~~X^{(k)}\sim {\rm Bin}(N, \mathbf{P}_{\mathcal{D}}(\alpha_k;\theta^*)).
    \end{align}

    \item[(iii)] Calculate the maximum likelihood estimate $\hat{\theta}_k$ based on the total $k$ measurement results $\bm{x}_k:=(x^{(1)},x^{(2)},\cdots,x^{(k)})$. 
    That is, $\hat{\theta}_k$ maximizes the likelihood function
    \begin{align}\label{eq:hierarchical_structure}
    ~~~\mathcal{L}_k(\theta;\bm{x}_k):=\prod_{l=1}^k F_l\left(x^{(l)};\alpha_l(\bm{x}_{l-1}),\theta\right),
    \end{align}
    where $F_l\left(x^{(l)};\alpha_l(\bm{x}_{l-1}),\theta^*\right)$ is the probability distribution of the random variable $X^{(l)}$.
\end{itemize}
\begin{itemize}
    \item[(iv)] 
    Determine $\alpha_{k+1}$ for the next measurement based on the current estimate $\hat{\theta}_k$ by numerically solving the following optimization problem:
    \begin{equation}\label{eq:alpha_optimization}
    \alpha_{k+1}:=\underset{\alpha\in D_k}{\rm argmax}~\frac{\mathcal{I}_{\rm c}(\alpha;\hat{\theta}_k)}{\alpha}\frac{\sin^2{\left(\alpha\hat{\theta}_k\right)}}{1-\delta\cos^2{\left(\alpha\hat{\theta}_k\right)}},
    \end{equation}
    where $D_k\subset \mathbb{N}$ is a (finite) optimization range and $\delta \in (0, 1]$ is a parameter for regularization.
    
    \item[(v)] Repeating the above procedure (ii)--(iv) for $k=1,2,...,M$, we obtain a final estimate $\hat{\theta}_M$.
    (We omit the optimization in (iv) when $k=M$.)
\end{itemize}

We make some remarks on the proposed algorithm.
First of all, we assume that the probability of depolarization $1-p_{\rm q}$ is identified by some experiments on the quantum device to be used in advance and the calibration on $p_{\rm q}$ is perfect.
The impact of calibration errors of $p_{\rm q}$ on this algorithm is analyzed numerically in Appendix~\ref{apdx:additional_exp}. 
Unlike Bayesian estimation, our algorithm does not require any prior information on the target value.

Then, we note that the measurement $\bm{x}_k$ in (iii) is the collection of all the results 
via the measurements with $\alpha_l~\forall l=1,2,\cdots,k$.
As well as the method in Section~\ref{method_sec:aewpe}, the ambiguity in estimating $\theta^*$, which arises from the periodic nature of $\mathbf{P}_{\mathcal{D}}(\alpha_k;\theta^*)$, can be eliminated by the product of likelihoods with different resolutions.
Because the amplified level $\alpha_k$ depends on the previous results $\bm{x}_{k-1}$ 
(except for the case $k=1$), the likelihood function 
$\mathcal{L}_k(\theta;\bm{x}_k)$ has a hierarchical structure, unlike the case 
of Eq.~(\ref{eq:likelihood_ampest}).

The optimization problem \eqref{eq:alpha_optimization} comes from the third requirement described in the beginning of this subsection; we will provide a detailed explanation in the next two paragraphs.
Since the elements of $D_k$ are integers, the optimization problem can be efficiently solved using a simple brute force method.
Also, it is obvious that the optimization does not yield an amplified level such that $\sin{(\alpha\hat{\theta}_{k})}\simeq0$, and thus our method can avoid vanishing classical Fisher information regardless of the target value.
It should be noted that this algorithm finally outputs an estimate $\hat{\theta}_M$ together with the realized amplified levels $\{\alpha_k\}_{k=1}^M$ in a single run which uses $N_{\rm q}=N\sum_{k=1}^M \alpha_k$ queries to $A$ and $A^\dagger$ in total.

Here, we 
explain the meaning of the term 
$\mathcal{I}_{\rm c}/\alpha$ 
in the optimization problem \eqref{eq:alpha_optimization}, 
which is the objective function when $\delta=1$ (i.e., no regularization). 
By Theorem~\ref{thm:opt_marginalPOVM}, the objective function can be written as $\kappa_{\infty}\mathcal{I}_{\rm q}(\alpha)/\alpha$ for even $\alpha$ when $d=2^n$ is sufficiently large, because we can assume $\cos{(\alpha\hat{\theta}_k/2)}\neq 0$ without loss of generality.
Now, $\kappa_{\infty}\in[0,1)$ is a periodic function while $\mathcal{I}_{\rm q}(\alpha)/\alpha$ is a unimodal function, with respect to $\alpha$; the former function oscillates much faster than the change of the latter unless the noise level is too big, resulting that the maximum point of the objective function $\kappa_{\infty}\mathcal{I}_{\rm q}(\alpha)/\alpha$ is close to a point that maximizes $\mathcal{I}_{\rm q}(\alpha)/\alpha$ within $D_k$.
Furthermore, the numerical simulation shown later suggests that, at around the maximum point of $\kappa_{\infty}\mathcal{I}_{\rm q}(\alpha)/\alpha$, we observe $\kappa_{\infty}\approx 1$. 
This means that, from Theorem~\ref{thm:opt_marginalPOVM}, our POVM becomes nearly optimal i.e., $\mathcal{I}_{\rm c}\approx \mathcal{I}_{\rm q}$.
Note that the odd classical Fisher information is comparable to the quantum Fisher information when $\alpha_{k+1}$ takes a relatively small value, and therefore the odd $\alpha$ (i.e., $\alpha=2m+1$) also takes an important role in such a regime.

Importantly, if we set $D_k$ such that $D_k\subset D_{k+1}$ and $\alpha_{\rm B}\in D_{k'}$ for some $k'$, then the output of this optimization problem converges to $\alpha_{\rm B,c}(\theta^*)$ as the step $k$ increases, under the condition $\hat{\theta}_k\to\theta^*$ which is actually valid as proven in the next subsection.
Here, $\alpha_{\rm B,c}(\theta^*)$ is defined as a solution of Eq.~(\ref{eq:alpha_optimization}) with $D_k=\mathbb{N}$ and $\hat{\theta}_k=\theta^*$.
The above discussion says that the factor $\alpha_{\rm B,c}(\theta^*)$ would be close to $\alpha_{\rm B}={\rm argmax}_{\alpha\in\mathbb{N}}~
\mathcal{I}_{\rm q}(\alpha)/\alpha$, which yields the state \eqref{eq:product state with alpha_B} and thereby achieves the upper bound of the total quantum Fisher information.
Thus, our algorithm uses the maximal quantum enhancement specified by $\alpha_{\rm B}$ in Theorem~\ref{thm:QFI_limit} as the step $k$ increases.
In the following section, we numerically verify these desirable properties of our algorithm.

The optimization presented above has the following issue that forced us to introduce a regularization function.
That is, if $\kappa_{\infty}$ is too close to one, the probability $1-\mathbf{P}_{\mathcal{D}}(\alpha;\theta^*)$ becomes small for such an $\alpha$ when $d \gg 1$, meaning that we need a large number of shots $N$ to have a good estimate.
To circumvent this issue, we introduce a regularization function ${\rm Reg}(\alpha;\hat{\theta}_k)$ such that ${\rm Reg}\simeq 0$ at $\kappa_{\infty} \simeq 1$ and take the objective function $\kappa_{\infty}\mathcal{I}_{\rm q}/\alpha \times {\rm Reg}$ rather than $\kappa_{\infty}\mathcal{I}_{\rm q}/\alpha$.
Then, $\kappa_{\infty}\mathcal{I}_{\rm q}/\alpha \times {\rm Reg}$ is not maximized at an $\alpha$ such that $\kappa_{\infty}$ is too close to 1. 
But we want $\kappa_{\infty}$ to be close enough to 1, hence the regularization term ${\rm Reg}$ should satisfy ${\rm Reg}\simeq 0$ only in the narrow region of $\alpha$ satisfying $\kappa_{\infty} \simeq 1$.
We employ the following regularization function satisfying those requirements:
\begin{align}\label{eq:reg}
    {\rm Reg}(\alpha;\hat{\theta}_k)
    :=\frac{\sin^2{\left(\alpha\hat{\theta}_k\right)}}
           {1-\delta\cos^2{\left(\alpha\hat{\theta}_k\right)}}.
\end{align}
When we employ the parameter $\delta$ closer to 1, the regularization function becomes sharper around $\alpha$ such that $\kappa_{\infty} \simeq 1$.
Hence, we arrive at the objective function given in Eq.~(\ref{eq:alpha_optimization}).
In the numerical simulation shown later, we choose $\delta=0.95$.

Also, the optimization ranges of $\{D_k\}$ control the following trade-off.
That is, a large $\alpha$ is preferred to enhance the Fisher information per shot, but if $\alpha$ dramatically (e.g., super-exponentially) increases with respect to $k$, then the estimation will be failed for a practical $N$ because the likelihood function $\mathcal{L}_{k}(\theta;\bm{x}_k)$ has many peaks around the true target value.
A more detailed analysis related to this trade-off can be found in Ref.~\cite{hayashi2018resolving}.

\subsection{Statistical properties of the estimator}\label{sec:stat_prop_est_main}

Although the random variables $X^{(k)}$ describing the measurement results depend on each other and have the hierarchical structure as shown in 
Eq.~(\ref{eq:hierarchical_structure}), we can prove the following desirable 
statistical property of the estimator; i.e., the consistency.

\begin{thm}\label{thm:consistency}
  {\rm(Consistency; informal version.)} There exists a unique maximum likelihood 
  estimator $\hat{\theta}_M$ of $\mathcal{L}_M(\theta;\boldsymbol{X}_M)$ such that
  \begin{equation}\label{eq:consistency}
    \cos{\hat{\theta}_M}\to\langle\mathcal{O}\rangle,
  \end{equation}
  where $\to$ means the convergence in probability as $N\to\infty$.
  In addition, the random variable $\alpha_k(\bm{X}_{k-1})$ converges 
  to a constant $\alpha_k^*$ in the sense of probability as follows
  \begin{align}
      \lim_{N\to \infty}P\left[\alpha_k(\bm{X}_{k-1})
         =\alpha_k^*\right]= 1,~~~k=2,\cdots,M,
  \end{align}
where $\alpha_k\left(\bm{x}_{k-1}\right)$ is the optimized amplified level based 
on the measurement results $\bm{x}_{k-1}$; that is, $\{ \alpha_k \}$ become independent with each other, as $N \to \infty$.
\end{thm}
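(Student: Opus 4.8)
The plan is to establish consistency by an induction on the step index $k$, handling simultaneously the convergence of the maximum likelihood estimate $\hat{\theta}_k$ and the stabilization of the adaptive amplified levels $\alpha_k(\bm{X}_{k-1})$. The central difficulty is that the random variables $X^{(k)}$ are \emph{not} i.i.d.\ nor even independent: the amplified level $\alpha_k$ feeding into the $k$-th likelihood factor is itself a function of the earlier measurement outcomes $\bm{x}_{k-1}$, so the likelihood $\mathcal{L}_k$ has the hierarchical structure of Eq.~(\ref{eq:hierarchical_structure}). Thus the classical M-estimation consistency theorems cannot be invoked directly, and the argument must track how randomness in $\alpha_k$ propagates.

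First I would treat the base case $k=1$, where $\alpha_1=1$ is deterministic. Here $X^{(1)}\sim \mathrm{Bin}(N,\mathbf{P}_{\mathcal{D}}(1;\theta^*))$ is an ordinary binomial, and the single-factor maximum likelihood estimate $\hat{\theta}_1$ is a standard consistent estimator: as $N\to\infty$, the empirical frequency $x^{(1)}/N$ converges in probability to $\mathbf{P}_{\mathcal{D}}(1;\theta^*)$ by the law of large numbers, and since $\mathbf{P}_{\mathcal{D}}(1;\theta^*)=\tfrac12-\tfrac{p_{\rm q}}{2}\cos\theta^*$ is strictly monotone in $\theta^*$ on $(0,\pi)$, inverting it gives $\hat{\theta}_1\xrightarrow{P}\theta^*$. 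The key structural observation driving the induction is then: because $\alpha_2$ is defined by the \emph{argmax} in Eq.~(\ref{eq:alpha_optimization}) evaluated at $\hat{\theta}_1$, and because this objective (over the finite set $D_1$) has a unique maximizer at $\theta^*$ under the stated assumptions, the continuity of the argmax on a finite domain forces $\alpha_2(\bm{X}_1)$ to equal the deterministic value $\alpha_2^*:=\mathrm{argmax}_{\alpha\in D_1}[\cdots]|_{\theta=\theta^*}$ with probability tending to $1$. This is the mechanism by which $\{\alpha_k\}$ become asymptotically independent and deterministic.

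For the inductive step, I would assume $\hat{\theta}_{k-1}\xrightarrow{P}\theta^*$ and $P[\alpha_j(\bm{X}_{j-1})=\alpha_j^*]\to 1$ for all $j\le k$, and prove the same for step $k+1$. Conditioned on the high-probability event $\{\alpha_j=\alpha_j^* \;\forall j\le k\}$, the likelihood $\mathcal{L}_k$ reduces to a \emph{product of independent} binomial factors with fixed, known amplified levels $\alpha_1^*,\dots,\alpha_k^*$; on this event, the estimation problem is a genuine product-form M-estimation problem, to which the standard Wald-type consistency argument applies. The argument hinges on showing the limiting (population) log-likelihood is uniquely maximized at $\theta^*$: this requires identifiability, i.e.\ that the collection $\{\mathbf{P}_{\mathcal{D}}(\alpha_j^*;\theta)\}_j$ jointly determines $\theta$ on $(0,\pi)$. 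The odd POVM of Eq.~(\ref{eq:odd_POVM}) is essential here because it breaks the sign symmetry of the even probabilities, so I would verify that including at least one odd factor (e.g.\ $\alpha_1^*=1$) pins down $\theta^*$ uniquely rather than up to reflection about $\pi/2$. Uniqueness of the maximizer then yields existence and uniqueness of $\hat{\theta}_k$ for large $N$ and its convergence in probability to $\theta^*$, whence $\cos\hat{\theta}_M\to\langle\mathcal{O}\rangle$ by the continuous mapping theorem.

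The main obstacle I anticipate is making the conditioning argument fully rigorous: one must control the residual probability mass on the \emph{bad} event $\{\exists j\le k: \alpha_j\neq\alpha_j^*\}$ and confirm that the estimate does not drift on that event in a way that corrupts the overall limit. The clean way to handle this is a total-probability decomposition — bound the target probability by its value on the good event plus $P[\text{bad event}]$, with the latter $\to 0$ by the inductive hypothesis — together with the observation that the argmax in Eq.~(\ref{eq:alpha_optimization}) is over a \emph{finite} set $D_k$, so the set of $\theta$ values near $\theta^*$ mapping to a fixed $\alpha_k^*$ contains an open neighborhood, making the indicator $\mathbf{1}[\alpha_k(\hat{\theta}_{k-1})=\alpha_k^*]$ converge to $1$ in probability. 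I expect the finiteness of $D_k$ and the monotonicity/identifiability supplied by mixing odd and even amplified levels to be exactly the ingredients that turn this into a routine, if careful, induction.
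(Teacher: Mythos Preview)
Your proposal is correct and follows essentially the same inductive scheme as the paper: the paper also establishes the base case via injectivity of $\mathbf{P}_{\mathcal{D}}(1;\cdot)$ on $(0,\pi)$, then uses continuity of the objective in Eq.~(\ref{eq:alpha_optimization}) over the finite set $D_{k-1}$ to force $\alpha_k=\alpha_k^*$ with high probability, and on that good event reduces to a product of independent binomial factors whose MLE is shown to converge. The only notable refinement in the paper is that, rather than invoking a generic Wald argument for the product likelihood, it handles the periodicity of the factors $F_m$ ($m\ge 2$) explicitly via two lemmas that localize each factor's maximizer to a shrinking interval $\Theta_M\ni\theta^*$ and then use the single-peakedness of $F_1$ (your ``at least one odd factor'' observation) to promote the local maximum of $\prod_m F_m$ to a global one.
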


The proof is given in Appendix~\ref{apdx:stat_prop}, where $\alpha_k^*$ is given in Eq.~\eqref{apdx:alpha_star}.
Theorem~\ref{thm:consistency} means that the proposed estimator is valid in the sense that more data 
leads to more accurate estimation. 
That is, unlike the previous method~\cite{wang2021minimizing} which uses Bayesian 
inference based on the measurement with the odd POVM (\ref{eq:odd_POVM}) (not the 
even POVM), the estimate approaches the true value with high probability as the 
number of measurements $N$ increases, as implied by Eq.~(\ref{eq:consistency}). 
Note that the Bayesian inference may return estimates far from the true value 
as discussed in~\cite{johnson2022reducing}.

Furthermore, the asymptotic variance of the estimator can achieve the Cram\'{e}r-Rao lower bound when $N$ is large.
To show this result, we focus on the following total classical/quantum Fisher 
information in our method:
\begin{align}
    \mathcal{I}_{\rm c,tot}\left(\theta^*\right)&=N\sum_{k=1}^M\mathbf{E}_{\bm{X}_{k-1}}\left[ \mathcal{I}_{\rm c}\left(\alpha_k\left(\bm{X}_{k-1}\right);\theta^*\right)\right],\\
    \mathcal{I}_{\rm q,tot}\left(\theta^*\right)&=N\sum_{k=1}^M\mathbf{E}_{\bm{X}_{k-1}}\left[ \mathcal{I}_{\rm q}\left(\alpha_k\left(\bm{X}_{k-1}\right)\right)\right].
\end{align}
The derivation is provided in Appendix~\ref{apdx:stat_structure}.
Since Theorem~\ref{thm:consistency} states that $\alpha_k\left(\bm{X}_{k-1}\right)$ converges to a 
constant $\alpha_k^*$ with high probability as $N$ increases, the total Fisher 
information also converges as follows;
\begin{align}\label{eq:conv_tot_Fisher}
    \frac{\mathcal{I}_{\rm c/q,tot}(\theta^*)}{N}&\to \frac{\mathcal{I}^*_{\rm c/q,tot}(\theta^*)}{N},
\end{align}
where $\mathcal{I}^*_{\rm c/q,tot}(\theta^*)$ is the asymptotic value of the total classical/quantum Fisher information defined as
\begin{align}
    \mathcal{I}^*_{\rm c,tot}\left(\theta^*\right)&:=N\sum_{k=1}^M \mathcal{I}_{\rm c}\left({\alpha}^*_k;\theta^*\right),\label{eq:asym_FI_1}\\
    \mathcal{I}^*_{\rm q,tot}\left(\theta^*\right)&:=N\sum_{k=1}^M \mathcal{I}_{\rm q}\left({\alpha}^*_k\right).
\label{eq:asym_FI}
\end{align}
Note that $\alpha_k^*$ depends on the target value $\theta^*$ implicitly as in Eq.~\eqref{apdx:alpha_star}.
Here, we provide the convergence theorem for the distribution of our estimator.

\begin{thm}\label{thm:asym_normal}
  {\rm(Asymptotic normality; informal version.)} If $\hat{\theta}_M$ is a maximum likelihood estimator of $\mathcal{L}_{M}(\theta;\bm{X}_M)$, then the following convergence holds;
  \begin{align}\label{eq:asym_norm}
    \sqrt{\mathcal{I}^{*}_{\rm c,tot}\left(\arccos{\langle\mathcal{O}\rangle}\right)}&\left(\cos\hat{\theta}_M-\langle\mathcal{O}\rangle\right)\notag\\
    &\to\mathcal{N}\left(0,1-\langle\mathcal{O}\rangle^2\right),
  \end{align}
  where $\mathcal{N}(0,1-\langle\mathcal{O}\rangle^2)$ denotes a centered normal distribution with variance $1-\langle\mathcal{O}\rangle^2$, and $\to$ means the convergence in distribution as $N$ increases.
\end{thm}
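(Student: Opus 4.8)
The plan is to first prove asymptotic normality of $\hat{\theta}_M$ as an estimator of $\theta^*:=\arccos\langle\mathcal{O}\rangle$, and then transfer the result to $\cos\hat{\theta}_M$ by the delta method. Since $\cos$ is a homeomorphism on $(0,\pi)$, the consistency statement $\cos\hat{\theta}_M\to\langle\mathcal{O}\rangle$ of Theorem~\ref{thm:consistency} already gives $\hat{\theta}_M\to\theta^*$ in probability. Writing $\ell_M(\theta):=\log\mathcal{L}_M(\theta;\bm{X}_M)$ and using the stationarity condition $\ell_M'(\hat{\theta}_M)=0$, a first-order Taylor expansion about $\theta^*$ yields
\begin{equation}
\sqrt{\mathcal{I}^*_{\rm c,tot}}\,(\hat{\theta}_M-\theta^*)=\frac{\ell_M'(\theta^*)/\sqrt{\mathcal{I}^*_{\rm c,tot}}}{-\ell_M''(\tilde{\theta})/\mathcal{I}^*_{\rm c,tot}},
\end{equation}
where $\tilde{\theta}$ lies between $\hat{\theta}_M$ and $\theta^*$ and $\mathcal{I}^*_{\rm c,tot}=\mathcal{I}^*_{\rm c,tot}(\theta^*)$.

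First I would analyze the numerator, the normalized score. Conditioning on $\bm{X}_{k-1}$, the amplified level $\alpha_k$ is determined and $X^{(k)}\sim{\rm Bin}(N,\mathbf{P}_{\mathcal{D}}(\alpha_k;\theta^*))$, so each summand of $\ell_M'(\theta^*)$ has conditional mean zero; the summands thus form a martingale-difference sequence for the filtration $\mathcal{F}_k=\sigma(X^{(1)},\dots,X^{(k)})$, with conditional variance $N\mathcal{I}_{\rm c}(\alpha_k;\theta^*)$ for the $k$-th term. Because $M$ is fixed while each batch consists of $N$ i.i.d.\ Bernoulli draws, the rescaled centered counts $(X^{(k)}-N\mathbf{P}_{\mathcal{D}}(\alpha_k;\theta^*))/\sqrt{N}$ are asymptotically Gaussian by the central limit theorem. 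Crucially, Theorem~\ref{thm:consistency} forces $\alpha_k(\bm{X}_{k-1})\to\alpha_k^*$ in probability, so the conditioning levels become deterministic in the limit and the $M$ batches become asymptotically independent. Combining these, I expect $\ell_M'(\theta^*)/\sqrt{\mathcal{I}^*_{\rm c,tot}}\to\mathcal{N}(0,1)$, using $\mathcal{I}^*_{\rm c,tot}=N\sum_k\mathcal{I}_{\rm c}(\alpha_k^*;\theta^*)$ from Eq.~(\ref{eq:asym_FI_1}) as the limiting variance.

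Next I would control the denominator. By $\tilde{\theta}\to\theta^*$, the within-batch law of large numbers $X^{(k)}/N\to\mathbf{P}_{\mathcal{D}}(\alpha_k;\theta^*)$, the convergence $\alpha_k\to\alpha_k^*$, and the smoothness of $\mathbf{P}_{\mathcal{D}}$ in $\theta$, I expect $-\ell_M''(\tilde{\theta})/\mathcal{I}^*_{\rm c,tot}\to1$ in probability via the information identity relating $\mathbf{E}[-\partial_\theta^2\log F_k]$ to $\mathcal{I}_{\rm c}$. Slutsky's theorem then gives $\sqrt{\mathcal{I}^*_{\rm c,tot}}\,(\hat{\theta}_M-\theta^*)\to\mathcal{N}(0,1)$. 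Finally, applying the delta method with $g(\theta)=\cos\theta$ and $g'(\theta^*)=-\sin\theta^*=-\sqrt{1-\langle\mathcal{O}\rangle^2}$ produces the claimed limit, since the asymptotic variance scales by $[g'(\theta^*)]^2=\sin^2\theta^*=1-\langle\mathcal{O}\rangle^2$.

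The main obstacle, and the point where the argument departs from textbook maximum-likelihood theory, is the adaptive design: the levels $\alpha_k$ are data-dependent, so the observations are neither independent nor identically distributed, and the score is a sum of martingale differences rather than i.i.d.\ terms. The decisive leverage is again Theorem~\ref{thm:consistency}, which collapses the random levels onto the deterministic values $\alpha_k^*$; once this is established, both the martingale central limit theorem for the score and the convergence of the observed information follow along standard lines, modulo verifying the regularity conditions---differentiability of $\mathbf{P}_{\mathcal{D}}$ in $\theta$, nonvanishing $\mathcal{I}_{\rm c}(\alpha_k^*;\theta^*)$, and a Lindeberg-type negligibility condition---all of which hold because the counts are Binomial and $\mathbf{P}_{\mathcal{D}}$ is smooth and stays bounded away from $0$ and $1$ at the optimized levels.
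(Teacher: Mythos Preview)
Your proposal is correct and follows essentially the same route as the paper: Taylor-expand the score equation, establish a CLT for the normalized score and consistency of the normalized observed information (both resting on Theorem~\ref{thm:consistency}'s collapse of the adaptive levels $\alpha_k$ onto the deterministic $\alpha_k^*$), combine via Slutsky, and finish with the delta method for $g(\theta)=\cos\theta$. The paper differs only in minor technicalities: it carries the Taylor expansion to one more order (so your denominator splits into a term $B=-\ell_M''(\theta^*)/\mathcal{I}^*_{\rm c,tot}$ and a third-derivative remainder $C$), and instead of invoking a martingale CLT it proves a transfer lemma (Lemma~3) showing that expectations of bounded functionals under the adaptive scheme converge to those under the fixed-$\alpha_k^*$ scheme, after which the characteristic function of the score is computed directly for independent Binomials.
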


This theorem shows that the distribution of the estimator $\cos{\hat{\theta}_M}$ 
gets arbitrarily close to the normal distribution with mean $\braket{\mathcal{O}}$ 
in large $N$. 
In addition, the asymptotic variance of the estimator is proportional to the inverse of the asymptotic value of the total classical Fisher information. 
Thus, our estimator can asymptotically achieve the classical Cram\'{e}r-Rao lower bound. 
In the next section, we numerically demonstrate how fast the above two asymptotic properties behave with respect to $N$.

As mentioned in Section~\ref{sec:proposed_alg} and demonstrated later, the adaptive optimization (\ref{eq:alpha_optimization}) yields the nearly optimal measurement (i.e., $\mathcal{I}_{\rm c}\approx\mathcal{I}_{\rm q}$) at each step in a large number of qubits.
Furthermore, the output $\alpha_k$ of \eqref{eq:alpha_optimization} converges to $\alpha_{\rm B,c}(\theta^*)\approx \alpha_{\rm B}$ as the step $k$ increases for an appropriate choice of $\{D_k\}$.
As a result, the properties of our algorithm lead to the following relations regarding the total Fisher information:
\begin{align}
\label{eq:qfi_vs_cfi}
    {\mathcal{I}^*_{\rm c,tot}(\theta^*)}\approx {\mathcal{I}^*_{\rm q,tot}(\theta^*)}\underset{M\gg 1}{\approx} N^*_{\rm q} \frac{\mathcal{I}_{\rm q}(\alpha_{\rm B})}{\alpha_{\rm B}},
\end{align}
where $N^*_{\rm q}$ denotes the total number of queries for the asymptotic sequence $\{\alpha^*_k\}$:
\begin{equation}\label{eq:asym_query_main}
    N^*_{\rm q}:=N\sum_{k=1}^M \alpha_k^*.
\end{equation}
Note that this can be rewritten as $N^*_{\rm q}=\sum_{l=1}^{M'}\alpha'_l$ for $M'=NM$ and $\alpha'_l=\alpha^*_k$ if $(k-1)N+1\leq l\leq kN$ in the notation of Theorem~\ref{thm:QFI_limit}.
Importantly, the most right hand side in Eq.~(\ref{eq:qfi_vs_cfi}) matches the upper bound in Theorem~\ref{thm:QFI_limit}, where we use $N^*_{\rm q}~(\geq \alpha_{\rm B})$ queries to $A$ or $A^\dagger$ in total.
In the next section, we demonstrate that the approximation errors are sufficiently 
small for \textit{almost all} $\theta^*$, via an appropriate choice of 
$\{D_k\}_{k=1}^{M-1}$; 
that is, together with the asymptotic normality, our 
estimator nearly achieves the ultimate precision given by the inverse of total 
quantum Fisher information \eqref{eq:fundamental_precision_limit2}.

\subsection{Summary of the proposed protocol}\label{sec:summary}
Here we summarize the notable feature of the proposed adaptive method, with remarks on which part is theoretically guaranteed and which part is left for numerical verification.

The goal is to ultimately estimate $\langle\mathcal{O}\rangle=\cos\theta^*$, for the given $n$-qubit implementable operators $A$ and $\mathcal{O}$; 
note that, likewise the Grover operator, the implementability of these operators does not mean that all the operator components are known and accordingly $\langle\mathcal{O}\rangle$ is directly computable.
The noise parameter $p_{\rm q}$ are assumed to be known. 
Another important assumption is that $d=2^n\gg 1$. 
We do not need other prerequisite for the target system.

The estimation procedure is illustrated in Fig.~\ref{alg:AAS_alg}, where the amplification level $\alpha_k$, which determines the number of querying $Q$ and the type of POVM, is updated via Eq.~\eqref{eq:alpha_optimization} at each step $k$ (each step contains $N$ shots).
This optimization \eqref{eq:alpha_optimization} is the maximization of classical Fisher information per queries, $\mathcal{I}_{\rm c}/\alpha$, reflecting the goal to achieve quantum Fisher information $\mathcal{I}_{\rm q}$ at each step and $\alpha_k\to \alpha_{\rm B}:={\rm argmax}_{\alpha\in\mathbb{N}}~
\mathcal{I}_{\rm q}(\alpha)/\alpha$.
Here, $\alpha_{\rm B}$ provides the best coherent use of noisy $A$ or $A^\dagger$ in estimating $\theta^*$, which is proved in Theorem~\ref{thm:QFI_limit}.
From the analysis of our POVM \eqref{eq:even_POVM} in Theorem~\ref{thm:opt_marginalPOVM}, the objective function $\mathcal{I}_{\rm c}/\alpha$ can be written as $\kappa_{\infty}\mathcal{I}_{\rm q}/\alpha$, where $\kappa_{\infty}$ is a bounded periodic function regarding $\alpha$.
Then, to achieve the above goal, the following properties should be satisfied: (i) max~$\kappa_{\infty}\mathcal{I}_{\rm q}/\alpha$ in a given optimization range $D_k~(\ni \alpha)$ is close to max~$\mathcal{I}_{\rm q}/\alpha$ in $D_k$ and (ii) $\kappa_{\infty}\approx 1$ (i.e., $\mathcal{I}_{\rm c}\approx \mathcal{I}_{\rm q}$) at the maximum point.
Note that we gradually expand the range of $D_k$ as $k$ increases to eliminate the estimation ambiguity; see Section~\ref{method_sec:aewpe}.
In Section~\ref{sec:proposed_alg}, we provide an intuitive explanation (not a rigorous proof) that the properties (i) and (ii) hold, and furthermore, these properties will be numerically verified in the next section.

On the other hand, Theorems~\ref{thm:consistency} and \ref{thm:asym_normal} guarantee the statistical properties of the estimator.
More precisely, Theorem~\ref{thm:consistency} guarantees the consistency, stating that the estimate approaches the true value with high probability as the number of shots $N$ increases.
Also, Theorem~\ref{thm:asym_normal} guarantees the asymptotic normality, stating that the estimator asymptotically achieves the classical Cramér-Rao lower bound $1/\mathcal{I}^*_{\rm c,tot}(\theta^*)$.
Here, the (asymptotic) total classical/quantum Fisher information $\mathcal{I}^*_{\rm c,tot}(\theta^*)$ and $\mathcal{I}^*_{\rm q,tot}(\theta^*)$ are defined in Eqs.~\eqref{eq:asym_FI_1} and \eqref{eq:asym_FI}, respectively.
Therefore, if the nearly optimal sequence $\alpha_k$ is properly chosen, the estimator achieves the estimation precision $1/\mathcal{I}^*_{\rm c,tot}(\theta^*)\approx 1/\mathcal{I}^*_{\rm q,tot}(\theta^*)$, and furthermore, it also achieves the precision ${\alpha_{\rm B}}/{(N^*_{\rm q}\mathcal{I}_{\rm q}(\alpha_{\rm B}))}$ for the (asymptotic) total queries $N^*_{\rm q}$ in Eq.~\eqref{eq:asym_query_main}, when the iteration $k$ is sufficiently large.
This matches the precision limit in Theorem~\ref{thm:QFI_limit} (more precisely, Eq.~\eqref{eq:fundamental_precision_limit2}).
We numerically find that this happens in a moderate number of shots $N$, as will be demonstrated in Figs.~\ref{fig:rmse_vs_crbound}, \ref{apdx_fig:prob_conv} and \ref{fig:FI_dist_cfivsqfi}.

\section{Numerical Experiment}\label{sec:results}

In this section, we numerically verify the performance of the proposed algorithm. 
First, in the practical (i.e., a few hundred of) measurement number $N$, we 
demonstrate that the asymptotic properties given by Theorems~\ref{thm:consistency} and~\ref{thm:asym_normal} hold.
Next, by evaluating the asymptotic value of the total Fisher information, we confirm that the proposed algorithm retains a large classical Fisher information regardless of the target value $\theta^*$.
Also, we show that the desirable relations given by Eq.~(\ref{eq:qfi_vs_cfi}) hold for a system of dozens of qubits, that is, the total classical Fisher information becomes sufficiently close to the total quantum Fisher information for such systems.

\subsection{Asymptotic properties with respect to the number of measurements}
\label{sec:confirm_asym_prop}

\begin{figure*} 
 \centering
 \begin{tabular}{ccc}
     \includegraphics[scale=0.8]{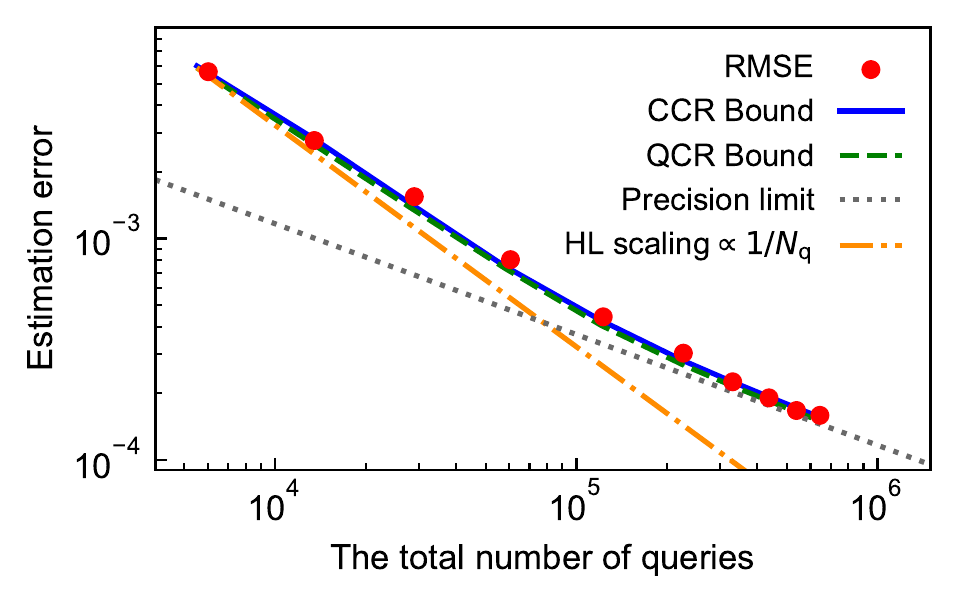}&&\includegraphics[scale=0.8]{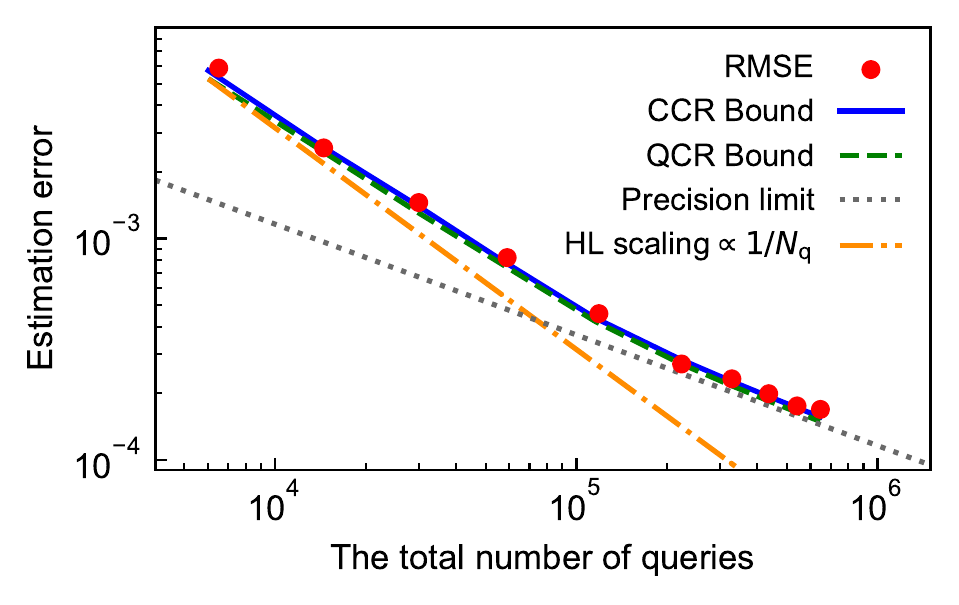}\\
     (a) $\cos\theta^*=0.042$&&(b) $\cos\theta^*=-0.1$\\
     &&\\
     \includegraphics[scale=0.8]{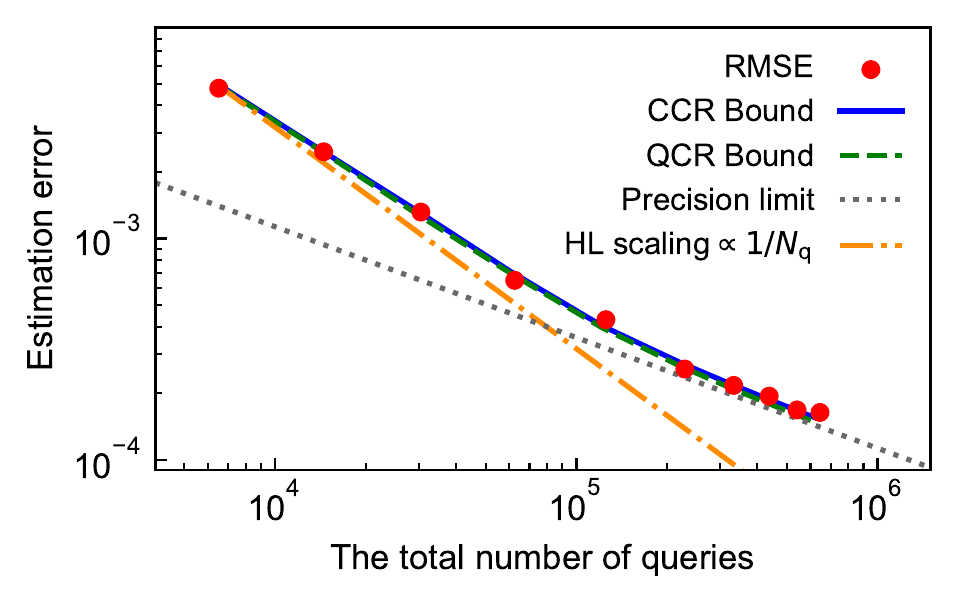}&&\includegraphics[scale=0.8]{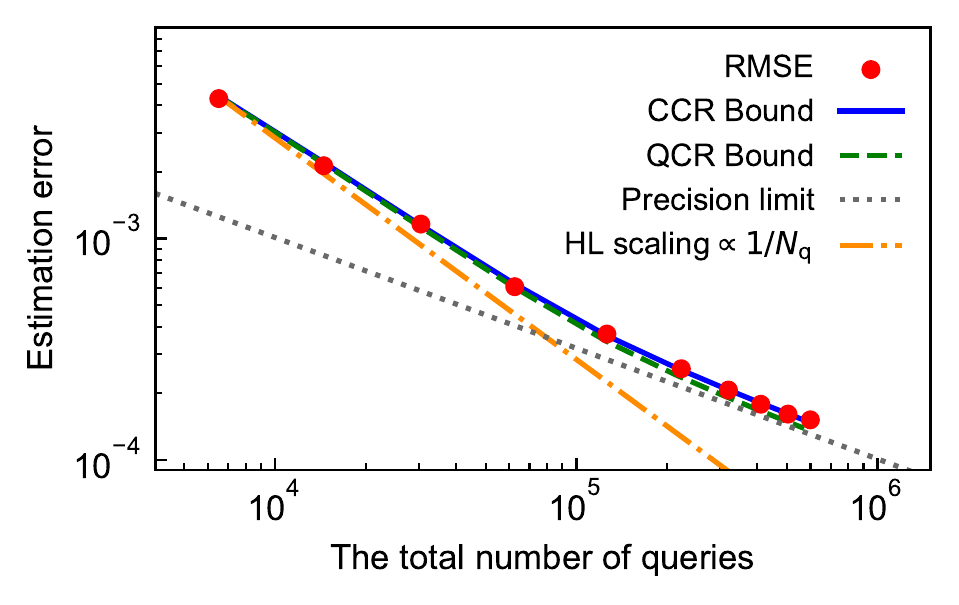}\\
     (c) $\cos\theta^*=0.25$&&(d) $\cos\theta^*=0.5$\\
     &&\\
     \includegraphics[scale=0.8]{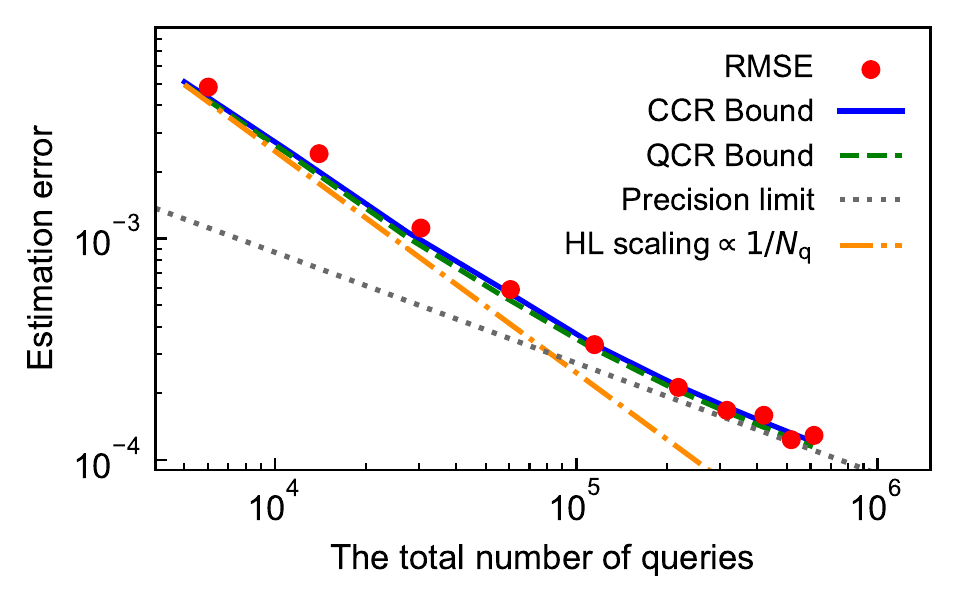}&&\includegraphics[scale=0.8]{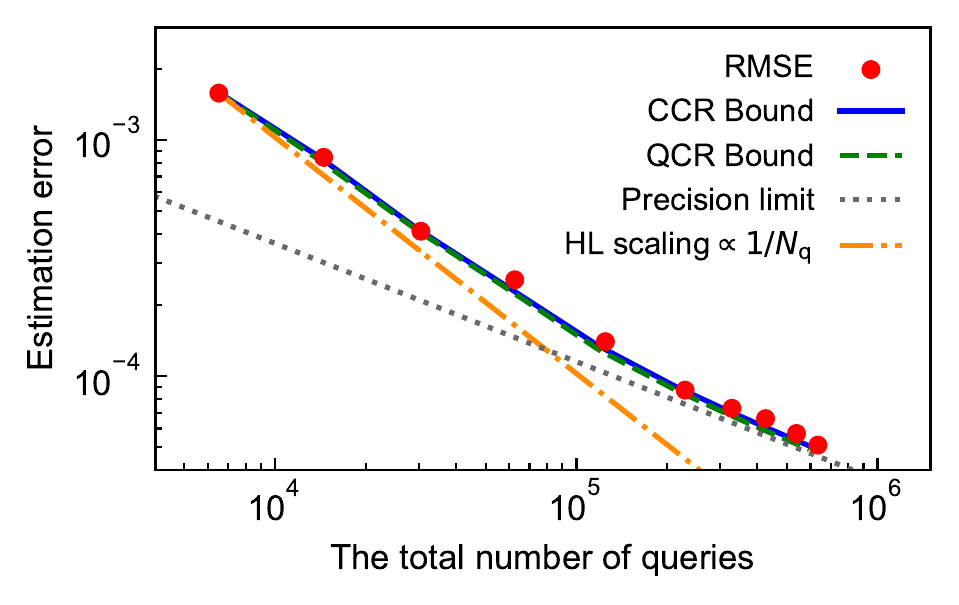}\\
     (e) $\cos\theta^*=-0.67$&&(f) $\cos\theta^*=0.95$\\
     &&
 \end{tabular}
 \caption{The estimation error of $\braket{\mathcal{O}}=\cos{\theta^*}$ and the total 
 number of queries for several target values (a)--(f).
 The x-axis shows the total number of queries to the state preparation $A$ or $A^\dagger$ in a single trial of our method.
 The root mean squared error (RMSE) of the estimator $\cos{\hat{\theta}}$ is estimated by the averaged value over 300 trials, as in Eq.~(\ref{rmse sample mean}).
 Since the number of total queries in a single trial of our method varies stochastically, the corresponding x-axis value of the RMSE dot is taken as the maximum value of total queries in the 300 trials.
 The blue solid and green dashed lines represent the asymptotic values of CCR/QCR bounds obtained from the corresponding classical/quantum Fisher information $\mathcal{I}^*_{\rm c/q, tot}(\theta^*)$, respectively.
 The orange dash-dotted and the gray dotted lines represent the Heisenberg-limited scaling and the precision limit derived in Theorem~\ref{thm:QFI_limit} (in the case of $N_{\rm q}>\alpha_{\rm B}=199$), respectively.
 }
 \label{fig:rmse_vs_crbound}
\end{figure*}

\begin{figure*}
 \centering
 \begin{tabular}{ccc}
 \includegraphics[scale=0.3]{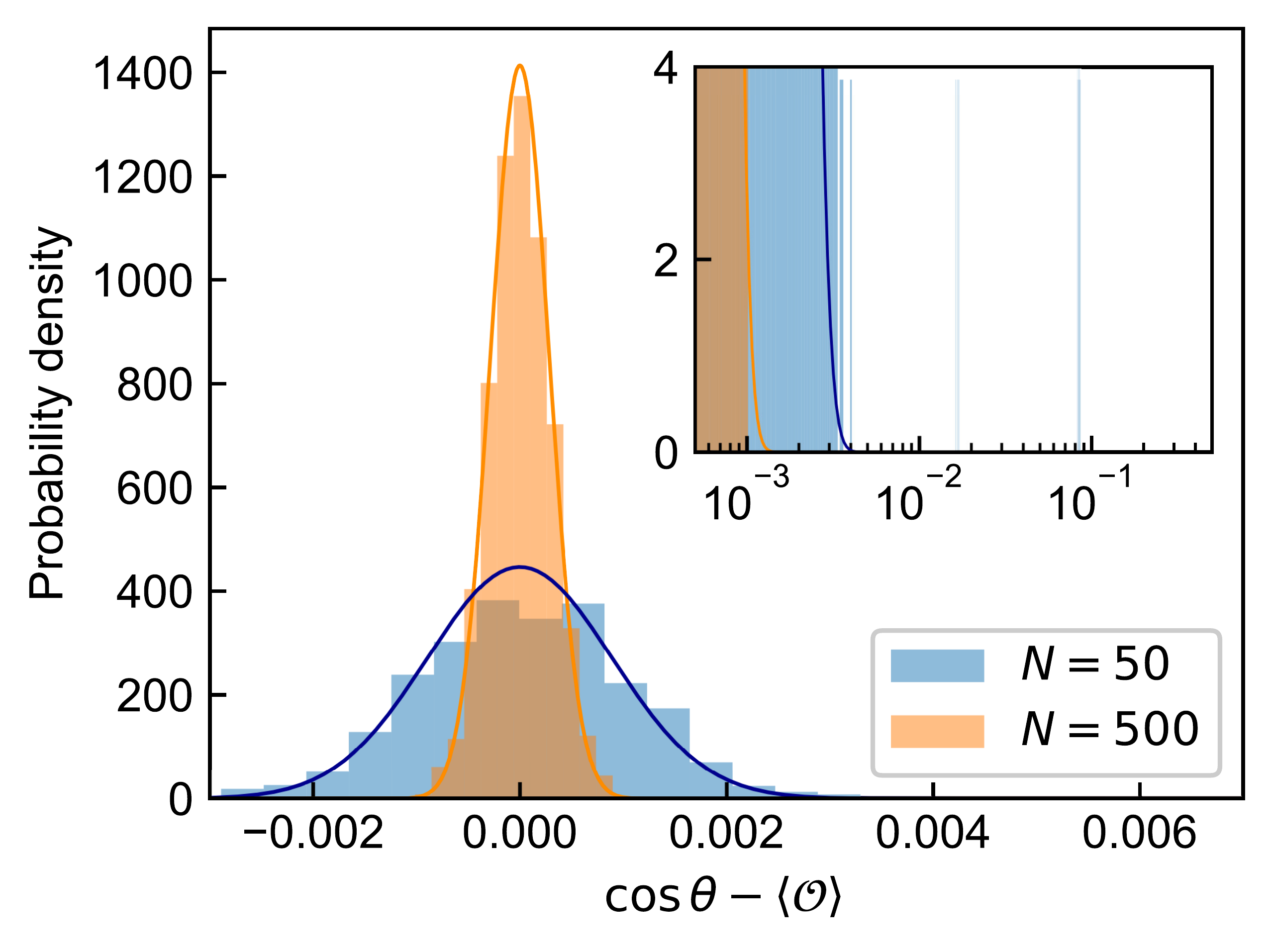}&&\includegraphics[scale=0.2944]{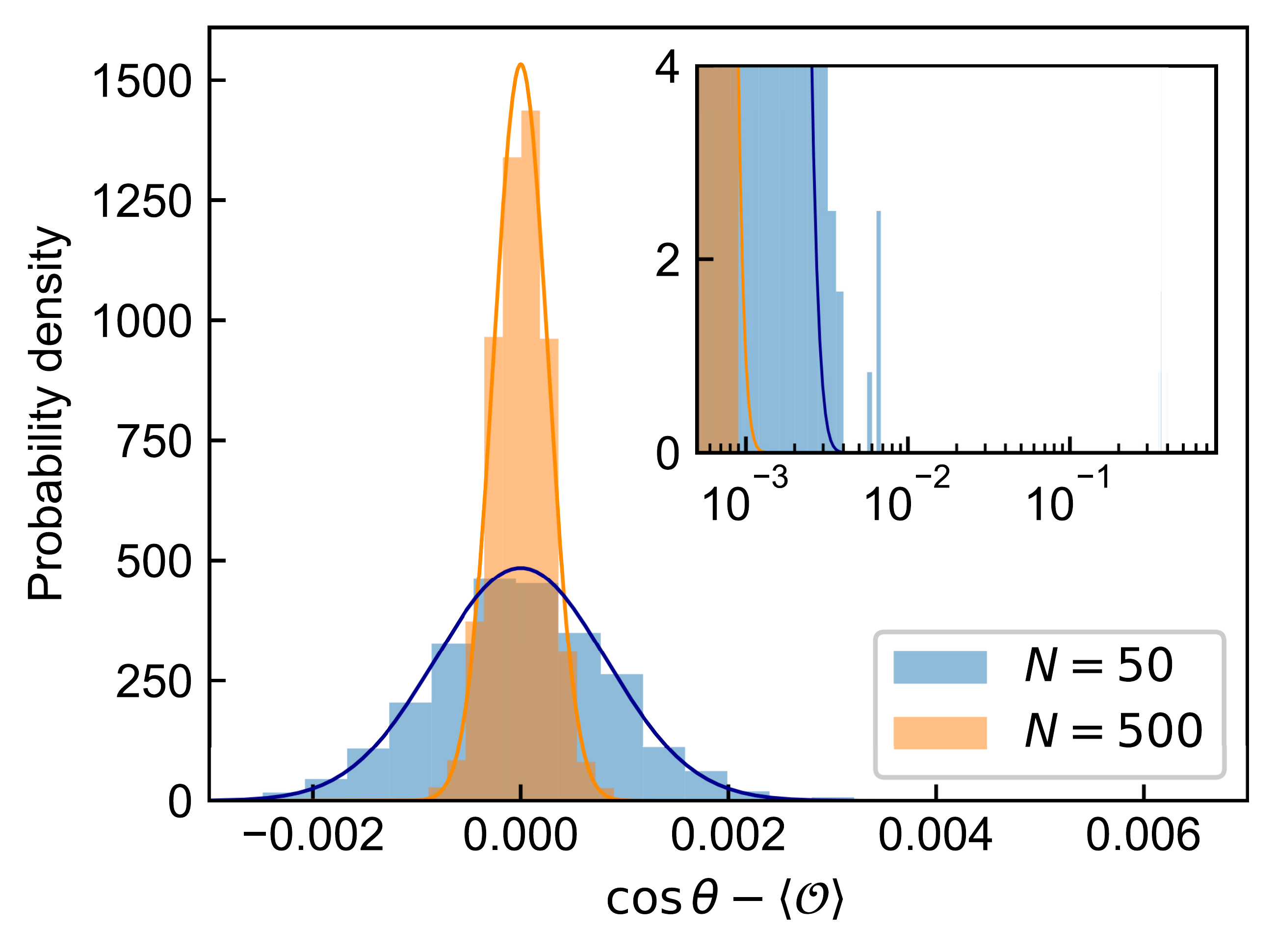}\\
 (a) $\braket{\mathcal{O}}=0.042$&&(b) $\braket{\mathcal{O}}=0.5$\\
 \end{tabular}
 \caption{The probability density of the maximum likelihood estimator $\cos{\hat{\theta}_M}$ in the case of $M=8$.
 The blue and orange plots correspond to the results with $N=50$ and $N=500$, respectively.
 The histograms represent the empirical probability density over 3000 trials, and the solid lines denote the probability density of the centered normal distribution with the variance corresponding to the asymptotic CCR bound.
 The upper right panels show the same plots (but the x-axis represents the absolute error) with much wider range in the horizontal axis to show the existence of outliers.
 The larger $N$ becomes, the sharper the density gets around the target value, which demonstrates the consistency in Theorem~\ref{thm:consistency}.
 Also, the outliers vanish in large $N$; thus the empirical density is nearly identical to the normal distribution, which is exactly the asymptotic normality in Theorem~\ref{thm:asym_normal}.
 }
 \label{apdx_fig:prob_conv}
\end{figure*}

To demonstrate the properties of our algorithm, we numerically evaluate the root mean squared error (RMSE) of $\cos{\hat{\theta}}$ defined as
\begin{align}\label{rmse sample mean}
    {\rm RMSE}\left[\cos{\hat{\theta}}\right]&:=\sqrt{{\mathbf{E}_{\hat{\theta}}}\left[\left(\cos{\hat{\theta}}-\cos{\theta^*}\right)^2\right]}\notag\\[4pt]
    &\simeq \sqrt{\frac{1}{\#}\sum_{i=1}^{\#}\left(\cos{\hat{\theta}[i]}-\cos{\theta^*}\right)^2},
\end{align}
where $\#$ is the total number of trials, and $\hat{\theta}[i]$ denotes the estimate of $i$th trial.
In the following, we use $\#=300$ samples to evaluate the RMSE.
We assume the depolarization noise with the parameter $p_{\rm q}=0.995$.
The number of qubits is set to 20, which corresponds to the system dimension $d=2^{20}$.
We also fix the number of measurements as $N=500$ for each circuit.
To obtain the maximum likelihood estimates, we used a modified brute force method, in which the search domain becomes narrowed as the measurement process proceeds.
The amplified level $\alpha_k$ of the $k$th measurement process $(k=2,...,M)$ is 
determined by solving the optimization problem (\ref{eq:alpha_optimization}), 
where the maximum likelihood estimate $\hat{\theta}_{k-1}$ has been obtained at 
the $(k-1)$th step. 
Also, in this work, the optimization range is chosen as 
\begin{equation}
    D_{k-1}:=\{2,2+1,\cdots,2^{k}\}.
\end{equation}
The exponential increase of the number of elements in $D_{k-1}$ is inspired by the fact that, when there is no noise, the exponential increment sequence $m_k=2^{k-1}$ achieves the Heisenberg-limited scaling~\cite{suzuki2020amplitude}; also see the discussion below Eq.~\eqref{prelimi_optim phase}.

Figure~\ref{fig:rmse_vs_crbound} shows the relationship between the RMSE Eq.~\eqref{rmse sample mean} and the total number of queries $N_{\rm q}$ calculated as 
\begin{equation}
\label{total q}
    N_{\rm q} = N\sum_{k=1}^{M}\alpha_k. 
\end{equation}
Here, we remark that $N_{\rm q}$ can be rewritten as $N_{\rm q}=\sum_{l=1}^{M'}\alpha_l'$ for $M'=NM$ and $\alpha'_l=\alpha_k$ if $(k-1)N+1\leq l\leq kN$ in the notation of Theorem~\ref{thm:QFI_limit}.
The red dots indicate the RMSE with $M=3,4,\cdots,12$ from left to right in each 
panel.
Since $N_{\rm q}$ is a random variable due to the randomness of $\alpha_k$, we plot 
the RMSE as a function of the maximum value of $N_{\rm q}$ in $\#=300$ trials, 
and thus the RMSE dots are overestimated with respect to the number of queries.
In addition, the classical Cram\'{e}r-Rao (CCR) lower bound and the quantum Cram\'{e}r-Rao (QCR) lower bound are depicted with the solid blue and dashed green 
lines, respectively; these are calculated by substituting the true value $\theta^*$ and the asymptotic sequence $\{\alpha_k^*\}_{k=1}^M$ into $\mathcal{I}^*_{\rm c/q,tot}\left(\theta^*\right)$ in Eqs.~(\ref{eq:asym_FI_1}) and (\ref{eq:asym_FI}).
Note that the difference between the bounds obtained from the $\#=300$ average 
of the total Fisher information (not shown in the figure), $\mathcal{I}_{\rm c/q,tot}\left(\theta^*\right)$, and the asymptotic CCR/QCR bounds from $\mathcal{I}^*_{\rm c/q,tot}\left(\theta^*\right)$ can be negligible, which clearly supports Eq.~(\ref{eq:conv_tot_Fisher}).
The orange dash-dotted and gray dotted lines represent the Heisenberg-limited (HL) scaling ${\rm RMSE}=O(1/N_{\rm q})$ and the precision limit given by Eq.~(\ref{eq:fundamental_precision_limit2}) with $1-\langle\mathcal{O}\rangle^2$, respectively.

Several important features are observed.
First, the CCR and QCR bounds are close to each other; the first part of Eq.~(\ref{eq:qfi_vs_cfi}) holds.
Note that this good approximation holds even when the target value $\theta^*/2\pi$ is a rational number as shown in the panel~(d); actually, we will see in the next subsection that this approximation (i.e., the first part of \eqref{eq:qfi_vs_cfi}) holds for almost all $\theta^*$ in the numerical simulation.
We then find that the RMSE almost achieves the CCR and accordingly QCR bounds in all the cases of six target values.
That is, $N=500$ is sufficient to obtain the asymptotic normality (\ref{eq:asym_norm}) in the chosen noise condition.
We here remark that by taking the number of shots $N$ (originally, we fix $N=500$ for all iteration steps) as a linear function $N_k$ of the iteration step $k$, 
as was done in Ref.~\cite{higgins2009demonstrating,berry2009perform}, the RMSE of our method can nearly achieve the QCR bound even for a regime with smaller total queries ($N_{\rm q}\sim O(10^2)$).
Moreover, the empirical probability density of our estimator shown in Fig.~\ref{apdx_fig:prob_conv} converges to the normal distribution whose variance corresponds to the asymptotic CCR bound as $N$ increases, which directly demonstrates the asymptotic normality~(\ref{eq:asym_norm}).
Recall that this key property holds because the amplified levels are almost identical to $\{\alpha_k^*\}_{k=1}^M$ due to Theorem~\ref{thm:consistency}, which is demonstrated in Fig.~\ref{apdx_fig:alpha_k} in Appendix~\ref{apdx:stat_prop}.

Moreover, as the step $M$ increases, we confirm that all of the RMSE, the CCR bound, and the QCR bound nearly achieve the precision limit; the second part of Eq.~(\ref{eq:qfi_vs_cfi}) also holds, in addition to the first part of the relations and Theorem~\ref{thm:asym_normal}.
Therefore, our algorithm estimates the mean value $\langle\mathcal{O}\rangle=\cos\theta^*$ with precision close enough to the ultimate limit given by Theorem~\ref{thm:QFI_limit} (or more precisely \eqref{eq:fundamental_precision_limit2}) in the current setup.
Note that the RMSE decreases nearly according to the Heisenberg-limited scaling when $N_{\rm q}$ is small.
Actually, this quadratic improvement is more distinct in the case where the depolarization noise is smaller, which certainly recovers the previous result~\cite{suzuki2020amplitude} in the ideal amplitude estimation; see Fig.~\ref{apdx_fig:rmse_vs_crbound} in Appendix~\ref{apdx:additional_exp}.

Finally, Appendix~\ref{apdx:additional_exp} also provides the case of $p_{\rm q}=0.99$; that is, 1$\%$ depolarization noise is added through each operation $A$ or $A^\dagger$, 
while Fig.~\ref{fig:rmse_vs_crbound} studies the case of only 0.5$\%$ noise.
The panels (a) and (b) in Fig.~\ref{apdx_fig:rmse_vs_crbound} show that, as expected, the RMSE loses the quadratic speedup with respect to $N_{\rm q}$ immediately and approaches the precision limit (i.e., the classical scaling regarding $N_{\rm q}$).

\subsection{Efficiency of the estimator}


As described in Section~\ref{sec:QME}, the classical Fisher information \eqref{eq:unified_cFI} vanishes at certain points of the amplified level $\alpha$. 
Here we show that, in the same numerical experiment as before, our algorithm can avoid those points thanks to the optimization of $\alpha$ and as a result Eq.~\eqref{eq:qfi_vs_cfi} holds for almost all $\theta^*$.
For this purpose, we calculate the asymptotic total Fisher information $\mathcal{I}^{*}_{\rm c/q,tot}(\theta^*)$ and the asymptotic queries $N^*_{\rm q}$ with equally distributed $10^5$ points of $\cos{\theta^*}$ in $(0,1)$, based on Eq.~(\ref{apdx:alpha_star}).

\begin{figure} 
 \centering
 \includegraphics[scale=0.85]{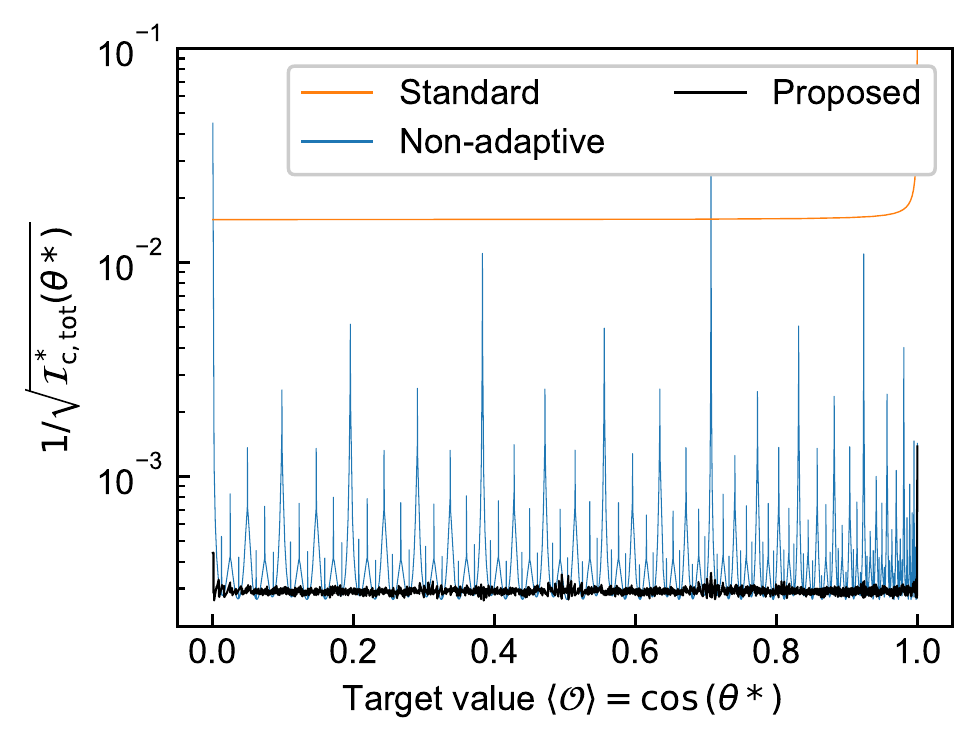}
 \caption{Comparison of the Cram\'{e}r-Rao lower bound in terms of the total (asymptotic) classical Fisher information ${\mathcal I}^*_{\rm c, tot}(\theta^*)$, of the three estimation methods for several values of $\theta^*$.
 The black (bottom), orange (top), and blue (middle) lines correspond to our method, the standard classical sampling method with $\alpha_k^*=1$, and the non-adaptive method with the predefined sequence $\alpha_1^*=1,~\alpha_k^*=2^k~(k\geq 2)$, respectively.
 Note that the non-adaptive method uses more queries than our method because $\alpha_k^*=2^k$ is the maximum number in the optimization range $D_{k-1}$; the blue line is lower than the black line at some points of $\theta^*$.}
 \label{fig:FI_dist_vs_others}
\end{figure}

Figure~\ref{fig:FI_dist_vs_others} shows the target-value dependency of the Cram\'{e}r-Rao lower bound in terms of the total (asymptotic) classical Fisher information ${\mathcal I}^*_{\rm c, tot}(\theta^*)$, of our method and the following two estimation methods.
That is, the classical Fisher information for the standard (i.e., no amplitude 
amplification) sampling method, which is usually employed for VQE~\cite{peruzzo2014variational} calculations, is defined by Eq.~(\ref{eq:asym_FI_1}) 
with $\alpha^*_k=1~\forall k$. 
Also, the classical Fisher information of the method~\cite{suzuki2020amplitude,tanaka2021amplitude,uno2021modified}, which uses amplitude 
amplification yet in a non-adaptive way, is defined by Eq.~(\ref{eq:asym_FI_1}) with 
$\alpha^*_1=1,~\alpha^*_k=2^{k}~\forall k\geq 2$, where the corresponding measurement 
is the even POVM~\eqref{eq:even_POVM}. 
Here, the number of measurement processes is chosen as $M=8$; in this case $D_{M-1}~(M\geq 8)$ contains $\alpha_{\rm B}$ under the chosen noise level.

First, the black (bottom) line reflects that our classical Fisher information 
does not almost depend on the target value, meaning the robustness of the estimator.
In addition, compared to the standard method, we observe an improvement of about two orders of magnitude for all target values in our Fisher information.
The improvement depends on the noise level, and therefore the estimation efficiency is further accelerated if the noise becomes smaller. 
In contrast to our total classical Fisher information, the performance of the non-adaptive method depicted with the blue (middle) line heavily depends on the target 
value.
Note that, since the total number of queries in the non-adaptive method is bigger 
than that in our method, there exist some target values $\cos{\theta^*}$ such that the 
total Fisher information of ours is less than that of the non-adaptive method.
On the other hand, there are some target values for which the estimation error 
bound deviates by about two orders of magnitude from ours.
Therefore, when the amplified level $\{\alpha_k\}_{k=1}^M$ is chosen non-adaptively, 
even if the asymptotic properties of maximum likelihood estimators hold, the estimation efficiency significantly decreases.

\begin{figure} 
 \centering
 \includegraphics[scale=0.85]{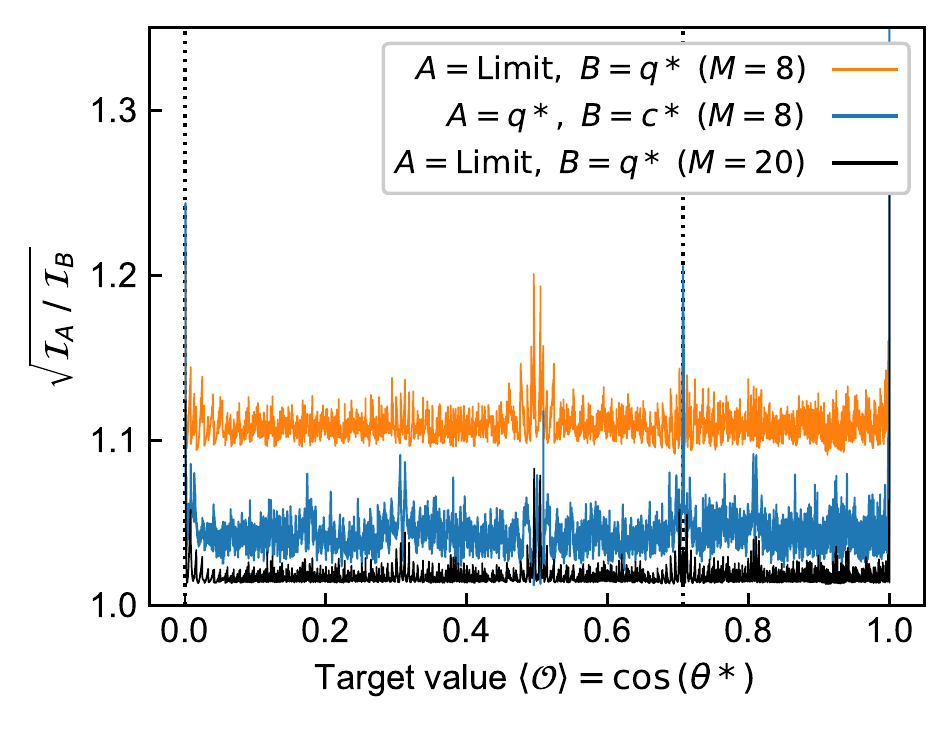}
 \caption{Comparison of the asymptotic total classical and quantum Fisher 
 information of our method in the case of 20-qubit system.
 The middle blue line represents the ratio between the quantum and classical Fisher information: $\mathcal{I}^*_{\rm q,tot}(\theta^*)$ over $\mathcal{I}^*_{\rm c,tot}(\theta^*)$ (indicated by 
 $A=q^*$ and $B=c^*$).
 On the other hand, the orange (top) and black (bottom) lines represent $N^*_{\rm q}\mathcal{I}_{\rm q}(\alpha_{\rm B})/\alpha_{\rm B}$ over $\mathcal{I}^*_{\rm q,tot}(\theta^*)$ (indicated by 
 $A={\rm Limit}$ and $B=q^*$) in the case of $M=8$ and $M=20$, respectively.
 The vertical dotted lines represent the target values corresponding to $\theta^*=\pi/j,~j=2,4$ from left to right.}
 \label{fig:FI_dist_cfivsqfi}
\end{figure}

Finally, Fig.~\ref{fig:FI_dist_cfivsqfi} shows the ratio of the asymptotic total classical and quantum Fisher information in our method.
For almost all target values, the ratio of $\mathcal{I}^*_{\rm q,tot}$ and $\mathcal{I}^*_{\rm c,tot}$ is close to 1, which is in line with the results discussed in the previous subsection.
Moreover, as expected from the objective function in Eq.~(\ref{eq:alpha_optimization}), relaxing the regularization as $\delta\to 1$, we confirm that the ratio gets closer to 1; see Fig.~\ref{fig_apdx:change_delta}, meaning 
that the first relation of Eq.~(\ref{eq:qfi_vs_cfi}) holds.
Recall now that, in general, the QCR bound (i.e., the ultimate limit of the 
estimation precision) can only be achieved when using the optimal measurement 
tailored for a given quantum state. 
Therefore, the result obtained here means that our estimation method selects the 
almost optimal measurements (or POVM) in the sense of Fisher information.
In addition, the ratio of $N_{\rm q}^*\mathcal{I}_{\rm q}(\alpha_{\rm B})/\alpha_{\rm B}$ and $\mathcal{I}^*_{\rm q,tot}$ gets closer to 1 as $M$ increases, meaning that the second relation in~\eqref{eq:qfi_vs_cfi} holds for almost all target values.
Since the right hand side of the second relation in Eq.~(\ref{eq:qfi_vs_cfi}) is equivalent to the precision limit by Theorem~\ref{thm:QFI_limit}, our algorithm therefore provides the almost best usage of $A$ and $A^\dagger$ in estimating the mean value from noisy quantum devices.

For the target values $\theta^*=\pi/2,\pi/4$, the difference of $\mathcal{I}^*_{\rm q,tot}$ and $\mathcal{I}^*_{\rm c,tot}$ is larger than that for other target values.
Note that such a difference is also distinct at $\theta^*=\pi/3,\pi/6$ when $\delta\to 1$; see Fig.~\ref{fig_apdx:change_delta}.
This is because the values of $(m+1)\theta^*$ (mod $2\pi$) obtained from the amplitude amplification are limited when $\theta^*/2\pi$ is a rational number.
In this case, $(m+1)\theta^*$ (mod $2\pi$) cannot arbitrarily get close to $\pi/2$ 
or $3\pi/2$, 
meaning that the classical Fisher information cannot arbitrarily get close to the 
quantum Fisher information; see Theorem~\ref{thm:opt_marginalPOVM} or Appendix~\ref{apdx:analysis_POVM}.
The peaks at these rational points in Fig.~\ref{fig:FI_dist_cfivsqfi} are 
very sharp, because the values of $(m+1)\theta^*$ (mod $2\pi$) fill the domain $[0,2\pi)$ exponentially fast in the optimization of amplified levels with the chosen $\{D_k\}_{k=1}^{M-1}$, when the target values shift slightly from these points.
Thus, these anomalous cases can be ignored in practice.

\section{Conclusions}\label{sec:conclusion}

We have proposed a quantum-enhanced mean value estimation method in a noisy 
environment (assumed to be the depolarization noise) that almost achieves the precision limit, when the target quantum state consists of a modest number of qubits.
Here, we derive the precision limit by evaluating the quantum Fisher information for the noisy quantum states with quantum-enhanced resolution regarding the target mean value, in the setup without demanding controlled amplifications.
This method employs a modified maximum likelihood estimation consisting of the amplitude amplification and the adaptive measurement;
the latter is derived from the ideal POVM achieving the quantum Fisher information, 
and notably, it can be implemented on standard quantum computing devices without any knowledge of the state preparation oracle for the target state.
The measurement and the number of queries of the amplification operators are adaptively optimized for enhancing the classical Fisher information toward achieving the quantum Fisher information.
Importantly, thanks to the maximum likelihood formulation and the two types of measurements 
with different symmetry in the probability distribution, our estimator enjoys some provable 
statistical properties such as consistency and asymptotic normality.

To show the effectiveness of the proposed estimator, we executed several numerical 
experiments.
The asymptotic statistical properties are evaluated in terms of the root mean 
squared error (RMSE), for several target values; the result was that, in all 
cases, the RMSE saturates the asymptotic classical and quantum Cram\'{e}r-Rao 
lower bound, with a modest number of measurements. 
In particular, we confirmed that the classical Fisher information almost saturates 
the ultimate quantum Fisher information in a large system dimension $d=2^{20}$ (corresponding to a twenty-qubit system).
In addition, we studied how the total Fisher information depends on the target 
value by evaluating the asymptotic classical/quantum Cram\'{e}r-Rao lower bounds. 
Although the previous researches~\cite{wang2021minimizing,tanaka2021amplitude} imply 
that the classical Fisher information significantly deteriorates for certain target 
values under depolarization noise, the proposed estimator retains a large Fisher 
information regardless of the target value, due to the adaptive optimization.

We believe that the estimation method presented in this paper paves the way for an 
interdisciplinary research in quantum computing and quantum sensing; actually a few 
such trials have been found in the literature~\cite{PhysRevA.91.062322,PhysRevA.93.040304}.
Moreover, it may be useful in a wide field of quantum information technologies beyond 
the subroutine in quantum computing algorithms.

\mbox{}
\\
{\bf Acknowledgements:} 
K.W. and K.F. thank IPA for the support through MITOU target program. 
We would like to thank Dr.~Yasunari Suzuki, Dr.~Yuuki Tokunaga, and Dr.~Tomoki Tanaka for helpful discussions.
K.W. was supported by JST SPRING, Grant Number JPMJSP2123.
This work was supported by MEXT Quantum Leap Flagship Program Grant Number JPMXS0118067285 and JPMXS0120319794.

\bibliographystyle{quantum}
\bibliography{references}


\appendix
















\renewcommand{\theequation}{A.\arabic{equation}}
\setcounter{equation}{0}

\section{Fisher information and Cram\'{e}r-Rao inequality}\label{apdx:A}

Let $\bm{X}$ be a multi-dimensional discrete random variable following a joint 
probability distribution $\mathcal{L}(\bm{x};\theta)$, where $\theta$ is 
an unknown parameter in $\mathbb{R}$ (in our case, the domain is $(0,\pi)$). 
Our goal is to estimate $\theta$ or generally a function of $\theta$, say 
$g(\theta)$ with $g(\cdot)$ a bijective function, by constructing an estimator 
$\hat{e}(\bm{X})$ for $g(\theta)$.
In general, if $\hat{e}(\bm{X})$ is an unbiased estimator, i.e., $\mathbf{E}_{\bm{X}}\left[\hat{e}(\bm{X})\right]=g(\theta)$, then the mean squared 
error of $\hat{e}(\bm{X})$ is bounded as 
\begin{equation}
\label{apdx:cl-cr}
    {\rm MSE}\left[\hat{e}(\bm{X})\right]\geq \left({\frac{\partial g(\theta)}{\partial\theta}}\right)^2\frac{1}{\mathcal{I}_{\rm c,tot}(\theta)}.
\end{equation}
This is called the Cram\'{e}r-Rao inequality~\cite{shao2003mathematical}. 
Also, $\mathcal{I}_{\rm c,tot}(\theta)$ is the total classical Fisher information 
defined as 
\begin{align}\label{apdx:total_Fisher}
    \mathcal{I}_{\rm c,tot}(\theta)&:=\mathbf{E}_{\bm{X}}\left[\left\{\frac{\partial}{\partial\theta}\ln \mathcal{L}(\theta;\bm{X})\right\}^2\right].
\end{align}
For simplicity, we write $\partial/\partial\theta$ as $\partial_\theta$ 
in what follows.

Next, let us consider the estimation problem for a single parameter embedded 
in a quantum state $\rho(\theta)$. 
Fixing a POVM for measuring the quantum state, we can obtain the probability 
distribution parameterized by $\theta$ and study Eq.~\eqref{apdx:cl-cr}. 
However, there is a freedom for designing a POVM in the quantum case. 
Utilizing this freedom, the lower bound of Eq.~(\ref{apdx:cl-cr}) can in fact be 
improved; that is, the following quantum Cram\'{e}r-Rao inequality holds (for simplicity, $g(\theta)=\theta$)~\cite{helstrom1969quantum,helstrom1968minimum,hayashi2006quantum,braunstein1994statistical,paris2009quantum}: 
\begin{align}
\label{apdx:q-cr}
    {\rm MSE}\left[\hat{e}(\bm{X})\right]\geq \frac{1}{\mathcal{I}_{\rm c}(\theta)}\geq \frac{1}{\mathcal{I}_{\rm q}(\theta)},
\end{align}
where $\mathcal{I}_{\rm c}$ is the classical Fisher information associated with the 
probability distribution for a POVM $\{M_k\}$ and the target state $\rho(\theta)$.
Here, $\mathcal{I}_{\rm q}$ denotes the quantum Fisher information defined by only the 
quantum state $\rho(\theta)$ as follows:
\begin{equation}
    \mathcal{I}_{\rm q}(\theta):={\rm tr}\left[L_{\rm SLD}^2\rho(\theta)\right],
\end{equation}
where $L_{\rm SLD}$ is the symmetric logarithmic derivative (SLD), which is defined as 
the Hermitian operator satisfying 
\begin{align}
      \partial_\theta \rho(\theta)
         =\frac{1}{2}\left(L_{\rm SLD}\rho(\theta)+\rho(\theta)L_{\rm SLD}\right).
\end{align}
Note that the second inequality Eq.~(\ref{apdx:q-cr}) holds for any POVM $\{M_k\}$ 
independent to $\theta$.
Therefore, the quantum Fisher information characterizes the fundamental limit of 
estimation precision that cannot improve via any measurement.

\renewcommand{\theequation}{B.\arabic{equation}}
\setcounter{equation}{0}
\setcounter{thm}{0}

\section{Proof of Theorem~\ref{thm:QFI_limit}}\label{apdx:proof_upb}
Here, we provide a proof of Theorem~\ref{thm:QFI_limit}. 
For convenience, we recall it.
\begin{thm}
    For a given number $N_{\rm q}$ of queries to $n$-qubit state preparation oracles $A$ and $A^\dagger$, we consider all partitions $\{\alpha'_k\}~(\alpha'_k\in\mathbb{N})$ of $N_{\rm q}$ such that $N_{\rm q}=\sum_{k=1}^{M'}\alpha'_k$ for some positive integer $M'$.
    Here, each use of $A$ and $A^\dagger$ induces the $n$-qubit depolarization noise with probability $1-p_{\rm q}$.
    Then, the total quantum Fisher information $\mathcal{I}_{\rm q,tot}$ regarding $\theta^*$ of the quantum state $\rho_{\mathcal{D}}^{{\rm (bm)}}(\theta^*):=\rho_{\mathcal{D}}(\alpha'_1;\theta^*)\otimes \cdots\otimes \rho_{\mathcal{D}}(\alpha'_{M'};\theta^*)$
    satisfies the following inequalities:
    $$
    \mathcal{I}_{\rm q,tot}\left[\rho_{\mathcal{D}}^{(\rm bm)}\right] \leq \frac{N_{\rm q}^2 p_{\rm q}^{2N_{\rm q}}}{2^{1-n}+\left(1-2^{1-n}\right)p_{\rm q}^{N_{\rm q}}},~N_{\rm q}\leq \alpha_{\rm B},
    $$
    and
    $$
    \mathcal{I}_{\rm q,tot}\left[\rho_{\mathcal{D}}^{(\rm bm)}\right] \leq \frac{N_{\rm q}\alpha_{\rm B}p_{\rm q}^{2\alpha_{\rm B}}}{2^{1-n}+\left(1-2^{1-n}\right)p_{\rm q}^{\alpha_{\rm B}}},~N_{\rm q} > \alpha_{\rm B},
    $$
    where $\alpha_{\rm B}$ is defined by 
    $$\alpha_{\rm B}:=\underset{\alpha\in\mathbb{N}}{\rm argmax}~\frac{\alpha p_{\rm q}^{2\alpha}}{{2^{1-n}}+\left(1-2^{1-n}\right)p_{\rm q}^{\alpha}}.$$
    Furthermore, if $N_{\rm q}\leq \alpha_{\rm B}$ or $N_{\rm q}=r\alpha_{\rm B}$ for some $r\in\mathbb{N}$, there exists a partition $\{\alpha'_k\}$ satisfying the above equality.
\end{thm}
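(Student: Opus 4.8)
The plan is to exploit the tensor-product structure of $\rho_{\mathcal{D}}^{(\rm bm)}$ together with the single-state quantum Fisher information formula \eqref{eq:unified_qFI}, reducing the whole statement to an elementary optimization over integer partitions of $N_{\rm q}$. First I would invoke additivity of the quantum Fisher information over independent subsystems sharing the parameter $\theta^*$: for $\rho=\bigotimes_k \rho_{\mathcal{D}}(\alpha'_k;\theta^*)$ the symmetric logarithmic derivative splits as $L=\sum_k I\otimes\cdots\otimes L_k\otimes\cdots\otimes I$, and since $\tr[L_k\rho_{\mathcal{D}}(\alpha'_k;\theta^*)]=\tr[\partial_{\theta^*}\rho_{\mathcal{D}}(\alpha'_k;\theta^*)]=0$ the cross terms drop out, leaving $\mathcal{I}_{\rm q,tot}[\rho_{\mathcal{D}}^{(\rm bm)}]=\sum_{k=1}^{M'}\mathcal{I}_{\rm q}(\alpha'_k)$. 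Introducing the Fisher information per query
\begin{equation}
    f(\alpha):=\frac{\mathcal{I}_{\rm q}(\alpha)}{\alpha}=\frac{\alpha p_{\rm q}^{2\alpha}}{2^{1-n}+(1-2^{1-n})p_{\rm q}^{\alpha}},
\end{equation}
the total becomes $\mathcal{I}_{\rm q,tot}=\sum_k \alpha'_k f(\alpha'_k)$, a weighted combination of the values $f(\alpha'_k)$ whose weights $\alpha'_k$ sum to $N_{\rm q}$.

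For the regime $N_{\rm q}>\alpha_{\rm B}$ the first bound is immediate from the \emph{definition} of $\alpha_{\rm B}$ as the global integer maximizer of $f$: every term obeys $f(\alpha'_k)\le f(\alpha_{\rm B})$, so $\mathcal{I}_{\rm q,tot}=\sum_k \alpha'_k f(\alpha'_k)\le f(\alpha_{\rm B})\sum_k \alpha'_k=N_{\rm q}f(\alpha_{\rm B})$, which is exactly the claimed right-hand side. No further structure of $f$ is needed here.

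The regime $N_{\rm q}\le\alpha_{\rm B}$ is the one requiring genuine work, and I expect it to be the main obstacle. Here each block satisfies $\alpha'_k\le N_{\rm q}\le\alpha_{\rm B}$, and I want $f(\alpha'_k)\le f(N_{\rm q})$ so that $\mathcal{I}_{\rm q,tot}\le f(N_{\rm q})\sum_k \alpha'_k=N_{\rm q}f(N_{\rm q})=\mathcal{I}_{\rm q}(N_{\rm q})$, matching the stated bound. This step needs $f$ to be nondecreasing on $\{1,\dots,\alpha_{\rm B}\}$, which I would obtain by showing that the continuous extension of $f$ is \emph{unimodal}. Writing $\lambda:=-\ln p_{\rm q}>0$ and differentiating $\ln f$ gives
\begin{equation}
    \partial_\alpha \ln f(\alpha)=\frac{1}{\alpha}-2\lambda+\frac{(1-2^{1-n})p_{\rm q}^{\alpha}\,\lambda}{2^{1-n}+(1-2^{1-n})p_{\rm q}^{\alpha}}.
\end{equation}
The term $1/\alpha$ is strictly decreasing in $\alpha$, and the final fraction is nonincreasing (it is a monotone function of the decreasing quantity $p_{\rm q}^{\alpha}$), so $\partial_\alpha\ln f$ is strictly decreasing and has at most one sign change; hence $f$ rises and then falls. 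Since its integer argmax is $\alpha_{\rm B}$, this forces $f$ to be nondecreasing on $\{1,\dots,\alpha_{\rm B}\}$, delivering $f(\alpha'_k)\le f(N_{\rm q})$ as needed (and, incidentally, confirming that taking the minimal argmax makes $\alpha_{\rm B}$ well defined).

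Finally, I would verify the equality cases by exhibiting the saturating partitions. For $N_{\rm q}\le\alpha_{\rm B}$ the single-block choice $M'=1$, $\alpha'_1=N_{\rm q}$ yields $\mathcal{I}_{\rm q,tot}=\mathcal{I}_{\rm q}(N_{\rm q})$, attaining the first bound; for $N_{\rm q}=r\alpha_{\rm B}$ the uniform choice $M'=r$, $\alpha'_k=\alpha_{\rm B}$ yields $\mathcal{I}_{\rm q,tot}=r\,\mathcal{I}_{\rm q}(\alpha_{\rm B})=N_{\rm q}f(\alpha_{\rm B})$, attaining the second bound. Thus additivity plus the definition of $\alpha_{\rm B}$ handle the large-$N_{\rm q}$ case outright, while the unimodality argument for $f$ is the one analytic ingredient carrying the small-$N_{\rm q}$ case.
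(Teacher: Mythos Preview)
Your proof is correct and follows essentially the same route as the paper: additivity of the quantum Fisher information over the tensor factors, then bounding $\sum_k \alpha'_k f(\alpha'_k)$ by $N_{\rm q}$ times either $f(N_{\rm q})$ or $f(\alpha_{\rm B})$ using monotonicity of $f$ on $\{1,\dots,\alpha_{\rm B}\}$ and the defining property of $\alpha_{\rm B}$, with the same saturating partitions $M'=1$ and $M'=r,\ \alpha'_k=\alpha_{\rm B}$. Your derivative computation establishing unimodality of $f$ in fact supplies a justification the paper leaves implicit---it simply asserts that $\mathcal{I}_{\rm q}(\alpha)/\alpha$ is increasing on $(0,\alpha_{\rm B}]$ ``from the definition of $\alpha_{\rm B}$,'' which does not follow from the argmax property alone.
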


\begin{proof}
    First, for any parition $\{\alpha'_k\}$, $\mathcal{I}_{\rm q,tot}[\rho_{\mathcal{D}}^{\rm (bm)}]$ is given by the sum of each quantum Fisher information for $\rho_{\mathcal{D}}(\alpha'_k;\theta^*)$ with respect to $\theta^*$:
    \begin{align}
        \mathcal{I}_{\rm q,tot}\left[\rho_{\mathcal{D}}^{\rm (bm)}\right]&= \sum_{k=1}^{M'} \frac{(\alpha'_k)^2 p_{\rm q}^{2\alpha'_k}}{{\frac{2}{2^n}}+\left(1-\frac{2}{2^n}\right)p_{\rm q}^{\alpha'_k}}\notag\\
        &\equiv \sum_{k=1}^{M'} \mathcal{I}_{\rm q}(\alpha'_k).    
    \end{align}
    Considering $$\mathcal{I}_{\rm q}(\alpha)={\alpha^2 p_{\rm q}^{2\alpha}}/({{2^{1-n}}+\left(1-{2^{1-n}}\right)p_{\rm q}^{\alpha}}),$$
    we can directly confirm that $\mathcal{I}_{\rm q}(\alpha)/\alpha$ is an increasing function in the regime $0<\alpha\leq \alpha_{\rm B}$, from the definition of $\alpha_{\rm B}.$
    Note that if the argmax in the definition of $\alpha_{\rm B}$ returns multiple values, we take the minimum as $\alpha_{\rm B}$.
    In the case of $N_{\rm q}\leq \alpha_{\rm B}$, for any partition $\{\alpha'_k\}$, $\alpha'_k\leq N_{\rm q}\leq \alpha_{\rm B}$ holds, and therefore we have 
    \begin{align}
        \sum_{k=1}^{M'} {\mathcal{I}_{\rm q}(\alpha'_k)}&=\sum_{k=1}^{M'} \alpha'_{k}\frac{\mathcal{I}_{\rm q}(\alpha'_k)}{\alpha'_k}\notag\\
        &\leq \sum_{k=1}^{M'} \alpha'_{k}\frac{\mathcal{I}_{\rm q}(N_{\rm q})}{N_{\rm q}}= {\mathcal{I}_{\rm q}(N_{\rm q})},
    \end{align}
    where we recall $N_{\rm q}=\sum_{k=1}^{M'} \alpha_k'$.
    The second equality holds for the following partition: $M'=1$ and $\alpha'_1=N_{\rm q}$.

    Next, we consider the second case $N_{\rm q}>\alpha_{\rm B}$. 
    By definition, $\mathcal{I}_{\rm q}(\alpha)/\alpha\leq  \mathcal{I}_{\rm q}(\alpha_{\rm B})/\alpha_{\rm B}$ holds for any $\alpha\in\mathbb{N}$, and this leads to
    \begin{align}
        \sum_{k=1}^{M'} \mathcal{I}_{\rm q}(\alpha'_k)\leq \sum_{k=1}^{M'} \frac{\alpha'_k}{\alpha_{\rm B}}\mathcal{I}_{\rm q}(\alpha_{\rm B})=\frac{\mathcal{I}_{\rm q}(\alpha_{\rm B})}{\alpha_{\rm B}}N_{\rm q}.
    \end{align}
    If there exists a natural number $r$ such that $N_{\rm q}=r\alpha_{\rm B}$, then the first equality is saturable for the partition: $M'=r,~\alpha'_k=\alpha_{\rm B}$ for all $k$. 
    This completes the proof of Theorem~\ref{thm:QFI_limit}.
\end{proof}

\renewcommand{\theequation}{C.\arabic{equation}}
\setcounter{equation}{0}

\section{Analysis for the POVMs}\label{apdx:analysis_POVM}

Our problem is to estimate the target parameter $\theta\in(0,\pi)$ embedded into the noisy quantum state~\eqref{eq:amplified_state_in_S_noise}.
Here, let us consider the following POVM:
\begin{align}
\label{apdx:qfi_POVM}
    M_0&:=\ket{{0}}_n\bra{{0}},~M_1:=A^\dagger\ket{\bar{1}}\bra{\bar{1}}A,\notag \\[4pt]
    M_2&:=I_n-M_0-M_1.
\end{align}
%
The POVM has the following meanings; $M_0$ and $M_1$ correspond to the measurement in 
the subspace basis $\ket{{0}}_n$ and $A^\dagger\ket{\bar{1}}$, respectively, and $M_2$ 
corresponds to the event such that the measured state is not in the subspace 
${\rm Span}\left\{\ket{{0}}_n,A^\dagger\ket{\bar{1}}\right\}$.
Now, the classical Fisher information $\mathcal{I}'_{\rm c}(\theta)$ for the probability distribution
\begin{align}
    \left\{{\rm tr}[M_k\rho_{\mathcal{D}}(\alpha;\theta)]\right\}_{k=0}^2,~~\alpha=2m+2
\end{align}
is calculated as follows:
\begin{widetext}
\begin{align}
    \mathcal{I}'_{\rm c}(\theta)&={4(\eta')^2\left(m+1\right)^2\sin^2{\left[2(m+1)\theta\right]}}\left\{{\eta'+\frac{2(1-\eta')}{d}-\frac{d(\eta')^2\cos^2{\left[2(m+1)\theta\right]}}{d\eta'+2(1-\eta')}}\right\}^{-1},~~~d=2^n\notag\\[4pt]
    &={4(\eta')^2\left(m+1\right)^2 \frac{d\eta'+2(1-\eta')}{d(\eta')^2}}\sin^2{\left[2(m+1)\theta\right]} \left[\left({1+\frac{2(1-\eta')}{d\eta'}}\right)^2-1+\sin^2{\left[2(m+1)\theta\right]}\right]^{-1}\notag\\[4pt]
    &\leq{4(\eta')^2\left(m+1\right)^2 \frac{d\eta'+2(1-\eta')}{d(\eta')^2}}\left({1+\frac{2(1-\eta')}{d\eta'}}\right)^{-2}=\frac{d(\eta')^2(2m+2)^2}{d\eta'+{2(1-\eta')}},
\end{align}
\end{widetext}
where $\eta':=p_{\rm q}^{\alpha}=p_{\rm q}^{2m+2}$.
Since a function $x^2/(K+x^2),~x\in[-1,1]$ with a positive constant $K$ is maximized at $x=\pm 1$, we obtain the third line, and therefore the condition for the equality is given by $\sin{[2(m+1)\theta]}=\pm 1$.
The most right hand side in the final line corresponds to the quantum Fisher information $\mathcal{I}_{\rm q}(\alpha)$ for $\alpha=2m+2$.
Thus, for $\theta$ satisfying the equality condition, the POVM (\ref{apdx:qfi_POVM}) is 
optimal in the sense that the corresponding classical Fisher information gives the quantum Fisher information. 
Moreover, in the limit of $d=2^n\to \infty$, the classical Fisher information 
is equal to the quantum Fisher information except for $(m, \theta)$ satisfying 
$\sin[2(m+1)\theta]=0$.
Note that, in general, the eigenstates of SLD operator can be used to construct the optimal measurement that exactly achieves the quantum Fisher information~\cite{braunstein1994statistical}.
In our case, the POVM of optimal measurement consists of quantum states in the form of superposition of $\ket{{0}}_n$ and $A^\dagger\ket{\bar{1}}$, where the coefficients of the states depend on the unknown target value; the POVM~\eqref{apdx:qfi_POVM} is obtained by removing this target-dependency in the coefficients.

Although the POVM (\ref{apdx:qfi_POVM}) is an optimal measurement for 
certain $\theta$ satisfying $\sin{[2(m+1)\theta]}=\pm 1$, its implementation is nontrivial especially for $M_1$ because $A^\dagger\ket{\bar{1}}$ is unknown.
Note that the optimal POVM obtained from the SLD operator has the same difficulty in implementation. 
Therefore, in this paper, we focus on the 2-valued POVM (\ref{eq:even_POVM}), which can be obtained from the POVM (\ref{apdx:qfi_POVM}) as follows:
\begin{align}
    M_0^{(\rm even)}= M_0 ,~M_1^{(\rm even)}=M_1+M_2.
\end{align}
This POVM removes the element $M_1$ from the original 3-valued POVM (\ref{apdx:qfi_POVM}), 
meaning that the quantum Fisher information is not achieved in general.

The classical Fisher information associated with the 2-valued POVM is calculated as, assuming $\cos{[(m+1)\theta]}\neq0$, 
\begin{widetext}
\begin{align}\label{apdx:trade-off_evenFI}
    \mathcal{I}^{(\rm even)}_{\rm c}(m;\theta)
    &=\frac{(2m+2)^2 \eta'\sin^2{\left[(m+1)\theta\right]}}{1-\eta'\cos^2{\left[(m+1)\theta\right]}+\frac{1-\eta'}{d\eta'\cos^2{\left[(m+1)\theta\right]}}\left(1-\frac{1-\eta'}{d}-2\eta'\cos^2{\left[(m+1)\theta\right]}\right)}\notag\\[4pt]
    &=\mathcal{I}^{(\rm even)}_{\rm q}(m)\frac{\sin^2{\left[(m+1)\theta\right]}\left(1+2\frac{1-\eta'}{d\eta'}\right)}{1-\eta'\cos^2{\left[(m+1)\theta\right]}+\frac{1-\eta'}{d\eta'\cos^2{\left[(m+1)\theta\right]}}\left(1-\frac{1-\eta'}{d}-2\eta'\cos^2{\left[(m+1)\theta\right]}\right)}\notag\\[4pt]
    &\equiv\mathcal{I}^{(\rm even)}_{\rm q}(m)\frac{\sin^2{\left[(m+1)\theta\right]}}{1-\eta'\cos^2{\left[(m+1)\theta\right]}+\varepsilon_n(m;\theta)}\left(1+2\frac{1-\eta'}{d\eta'}\right),
\end{align}
where we recall $\eta':=p_{\rm q}^{\alpha}=p_{\rm q}^{2m+2}$ and in the final line, we defined $\varepsilon_{n}$ as
\begin{equation}
    \varepsilon_n(m;\theta) := \frac{1-\eta'}{d\eta'\cos^2{\left[(m+1)\theta\right]}}\left(1-\frac{1-\eta'}{d}-2\eta'\cos^2{\left[(m+1)\theta\right]}\right).
\end{equation}
\end{widetext}
This factor $\varepsilon_{n}$ vanishes exponentially fast with respect to the number of qubits $n$ if $\cos{[(m+1)\theta]}\neq0$, as follows:
\begin{equation}
    \left|\varepsilon_n(m;\theta)\right| \leq\frac{1-\eta'}{\eta'\cos^2{\left[(m+1)\theta\right]}}\frac{1}{2^{n}}=O\left(\frac{1}{2^n}\right),
\end{equation}
where we recall $d=2^n$.
Here, we define the coefficient of $\mathcal{I}^{(\rm even)}_{\rm q}(m)$ as $\kappa$:
\begin{align}
    \kappa := \frac{\sin^2{[(m+1)\theta]}}{1-\eta'\cos^2{[(m+1)\theta]}+\varepsilon_n(m;\theta)}
\end{align}
Then, Eq.~(\ref{apdx:trade-off_evenFI}) is expressed as
\begin{align}\label{apdx:even_FI_infty}
    \mathcal{I}^{(\rm even)}_{\rm c}(m;\theta)=\kappa \mathcal{I}^{(\rm even)}_{\rm q}(m)\left[1+2\frac{1-\eta'}{2^n\eta'}\right].
\end{align}
Now, when $\theta/2\pi$ is an irrational number, the set of complex numbers 
$\{ e^{i(m+1)\theta} \}_{m=0,1,\cdots}$ with sufficiently large $m$ fill the 
unit circle in the complex plane. 
Thus, there exists an $m$ such that $\cos{[(m+1)\theta]}$ is arbitrarily close 
to 0 and accordingly 
\begin{equation}
    \kappa_{\infty}:=\lim_{d\to \infty} \kappa=\frac{\sin^2{[(m+1)\theta]}}{1-\eta'\cos^2{[(m+1)\theta]}}\in[0,1)
\end{equation}
is arbitrarily close to 1.
Therefore, from Eq.~(\ref{apdx:even_FI_infty}), the classical Fisher information 
sufficiently approaches the quantum Fisher information, under certain condition 
on $(m, \theta)$; moreover, taking an appropriate $m$, we can make the convergence 
exponentially fast with respect to the number of qubits (recall $d=2^n$ with $n$ 
the number of qubits).
This establishes Theorem~\ref{thm:opt_marginalPOVM}.

Here we point out that there is discontinuity in the classical Fisher information, 
which implies the instability of estimation. 
For this purpose, let us see the Fisher information in the limit $d\to \infty$:
\begin{widetext}
\begin{align}\label{apdx:asymptotic_behavior}
    \lim_{d\to \infty}\mathcal{I}^{(\rm even)}_{\rm c}(m;\theta)=\begin{dcases}
    0,&\cos{[(m+1)\theta]}=0\\
    \frac{\mathcal{I}^{(\rm even)}_{\rm q}(m)\sin^2{[(m+1)\theta]}}{1-\eta'\cos^2{[(m+1)\theta]}},&\cos{[(m+1)\theta]}\neq0
    \end{dcases}.
\end{align}
That is, for $\cos{[(m+1)\theta]}\neq0$, we obtain
\begin{align}
    \lim_{d\to \infty}\mathcal{I}^{(\rm even)}_{\rm c}(m;\theta)
    &\leq 4\eta'(m+1)^2=\lim_{d\to\infty}\mathcal{I}^{(\rm even)}_{\rm q}(m).
\end{align}
The first equality holds for $\cos{[(m+1)\theta]}=0$ even though $\mathcal{I}^{(\rm even)}_{\rm c}(m;\theta)$ vanishes at such points as shown in Eq.~(\ref{apdx:asymptotic_behavior}).
Therefore, the classical Fisher information is not continuous at $\cos{[(m+1)\theta]}=0$ in the limit $d=2^n\to\infty$.

Finally, as for the other 2-valued POVM (\ref{eq:odd_POVM}), comparing the Fisher 
information $\mathcal{I}_{\rm q/c}(\alpha;\theta)$ for $\alpha=2m+1$, we can obtain 
the following inequality for $d>2$:
\begin{align}
    \mathcal{I}^{(\rm odd)}_{\rm c}(m;\theta):=\frac{(2m+1)^2(p^{2m+1}_{\rm q})^2\sin^2{[(2m+1)\theta]}}{4\mathbf{P}^{(\rm odd)}_{\mathcal{D}}(m;\theta)\left(1-\mathbf{P}^{(\rm odd)}_{\mathcal{D}}(m;\theta)\right)}
    \leq {(2m+1)^2p^{2(2m+1)}_{\rm q}}<\mathcal{I}_{\rm q}(2m+1).
\end{align}
Hence, unlike the even classical Fisher information, the odd classical 
Fisher information is always smaller than the quantum Fisher information 
even when the number of qubits increases.

\renewcommand{\theequation}{D.\arabic{equation}}
\setcounter{equation}{0}
\section{Statistical structure of the algorithm}\label{apdx:stat_structure}

In our model, the distribution of random variables $\bm{X}_{M}$ has a hierarchical 
model as follows; 
\begin{equation}\label{apdx:total_likelihood_fn}
    \mathcal{L}_{M}\left(\bm{x}_M;\theta\right)=\prod_{m=1}^M F_m\left(x^{(m)};\alpha_m(\bm{x}_{m-1}),\theta\right),
\end{equation}
where $F_m$ is the probability function of Binomial distribution with success probability $\mathbf{P}_{\mathcal{D}}({\alpha}_m(\bm{x}_{m-1});\theta)$. 
Substituting Eq.~(\ref{apdx:total_likelihood_fn}) into Eq.~(\ref{apdx:total_Fisher}), we 
obtain the total classical Fisher information as
%
\begin{align}\label{apdx:expansion_totFI}
    \mathcal{I}_{\rm c,tot}(\theta)&=-\sum_{m=1}^M\mathbf{E}_{\bm{X}_m}\left[\partial_\theta^2 \ln F_m\left(X^{(m)};{\alpha}_m(\bm{X}_{m-1}),\theta\right)\right]\notag\\
    &=-\sum_{m=1}^M\sum_{\bm{x}_m=(x^{(1)},\cdots,x^{(m)})}\partial_\theta^2\ln F_m\left(x^{(m)};\alpha_m(\bm{x}_{m-1}),\theta\right)\mathcal{L}_m(\bm{x}_m;\theta)\notag\\
    &=-\sum_{m=1}^M\sum_{{x}^{(m)}}\sum_{\bm{x}_{m-1}=(x^{(1)},\cdots,x^{(m-1)})}\partial_\theta^2\ln F_m\left(x^{(m)};\alpha_m(\bm{x}_{m-1}),\theta\right)F_m\left(x^{(m)};\alpha_m(\bm{x}_{m-1}),\theta\right)\mathcal{L}_{m-1}(\bm{x}_{m-1};\theta)\notag\\
    &=-\sum_{m=1}^M\sum_{\bm{x}_{m-1}}\mathcal{L}_{m-1}(\bm{x}_{m-1};\theta)\sum_{{x}^{(m)}}\partial_\theta^2\ln F_m\left(x^{(m)};\alpha_m(\bm{x}_{m-1}),\theta\right)F_m\left(x^{(m)};\alpha_m(\bm{x}_{m-1}),\theta\right)\notag\\
    &=-\sum_{m=1}^M\sum_{\bm{x}_{m-1}}\mathcal{L}_{m-1}(\bm{x}_{m-1};\theta) \mathbf{E}_{Y^{(m)}}\left[\partial_\theta^2\ln F_m\left(Y^{(m)};\alpha_m(\bm{x}_{m-1}),\theta\right)\right],
\end{align}
\end{widetext}
where we used the conditional probability (\ref{apdx:total_likelihood_fn}) in the third line.
$Y^{(m)}$ denotes a random variable following $F_m(y^{(m)};\alpha_m(\bm{x}_{m-1}),\theta)$, 
with $y^{(m)}\in\{0,1,\cdots,N\}$. 
Note here that $\bm{x}_{m-1}$ is not a random variable, but a realized value of 
$\bm{X}_{m-1}$.
The expectation in the final line corresponds to the classical Fisher information in Eq.~(\ref{eq:unified_cFI}) multiplied by $N$. 
Thus, we can write the total classical Fisher information as
\begin{align}
    \mathcal{I}_{\rm c,tot}\left(\theta\right)=N\sum_{m=1}^M \mathbf{E}_{\bm{X}_{m-1}}\left[\mathcal{I}_{\rm c}\left({\alpha}_m(\bm{X}_{m-1});\theta\right)\right].
\end{align}
Similarly, we define the corresponding total quantum Fisher information as
\begin{align}
    \mathcal{I}_{\rm q,tot}\left(\theta\right)=N\sum_{m=1}^M \mathbf{E}_{\bm{X}_{m-1}}\left[\mathcal{I}_{\rm q}\left({\alpha}_m(\bm{X}_{m-1})\right)\right].
\end{align}

As proved in the following subsection, ${\alpha}_m(\bm{X}_{m-1})=\alpha^*_m$ holds with high probability for all $m$ as $N$ increases, where $\alpha_m^*$ is a solution of the following optimization problem for the target value $\theta$ (not an estimate $\hat{\theta}$):
\begin{equation}\label{apdx:alpha_star}
    \alpha_{m}^*:=\underset{\alpha\in D_{m-1}}{\rm argmax}~\frac{\mathcal{I}_{\rm c}(\alpha;\theta)}{\alpha}\frac{\sin^2{\left(\alpha{\theta}\right)}}{1-\delta\cos^2{\left(\alpha{\theta}\right)}},
\end{equation}
where we define $\alpha_1^*=1$.
Using Eq.~\eqref{apdx:alpha_star}, the asymptotic values of the total classical/quantum 
Fisher information are given by
\begin{align}
    {\mathcal{I}^*_{\rm c/q,tot}\left(\theta\right)}:=N\sum_{m=1}^M \mathcal{I}_{\rm c/q}\left({\alpha}^*_m;\theta\right).
\end{align}
Then we have 
\begin{align}
    \lim_{N\to \infty}\frac{{\mathcal{I}_{\rm c/q,tot}\left(\theta\right)}}{N}=\sum_{m=1}^M \mathcal{I}_{\rm c/q}\left({\alpha}^*_m;\theta\right)=\frac{{\mathcal{I}^*_{\rm c/q,tot}\left(\theta\right)}}{N},
\end{align}
which characterizes the variance of our estimator in the asymptotic regime.

Finally, we show the impact of $\delta$ on the asymptotic total classical/quantum Fisher information in Fig.~\ref{fig_apdx:change_delta}.
Here, the experimental settings are equal to those in Section~\ref{sec:results}.
As expected from the form of the objective function~\eqref{apdx:alpha_star}, we confirm that the ratio of ${\mathcal{I}^*_{\rm c/q,tot}\left(\theta\right)}$ gets closer to 1 when $\delta\to 1$.
Also, in the case of $\delta=0.99$, the peaks at the rational points $\pi/j,~(j=2,3,4,6)$ become more distinct than the case of $\delta=0.95$.

\begin{figure} 
 \centering
 \includegraphics[scale=0.85]{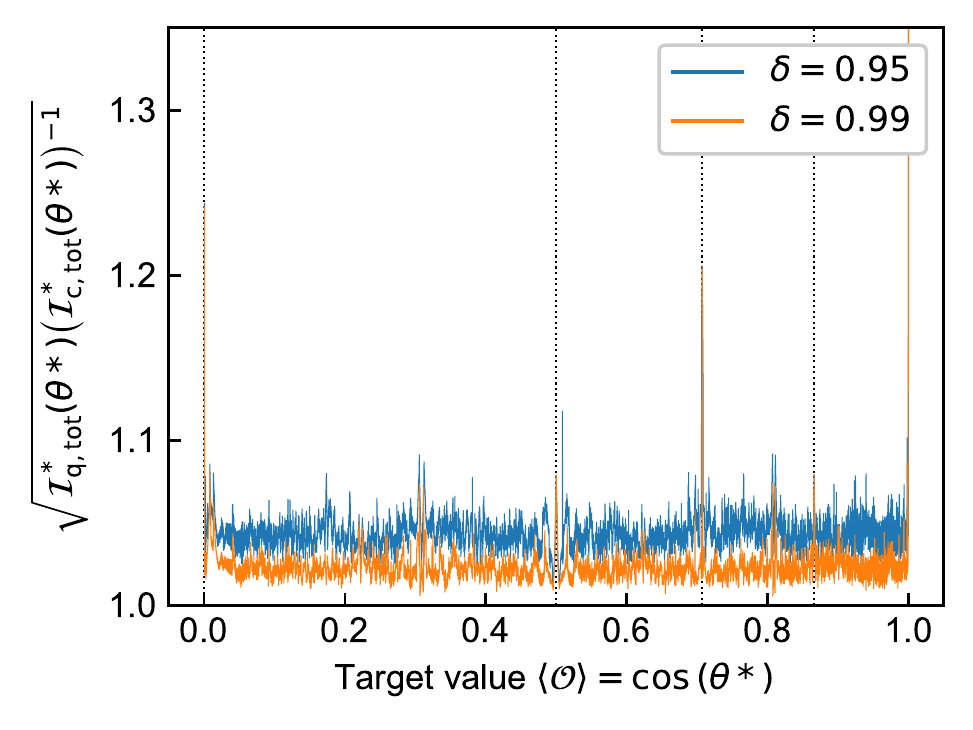}
 \caption{Comparison of the asymptotic total classical and quantum Fisher information of our method in the case of 20-qubit system and $M=8$. 
 The blue (top) and orange (bottom) lines represent the results for the case of the regularization parameter $\delta=0.95$ and $\delta=0.99$, respectively.
 The vertical dotted lines represent the target values corresponding to $\theta^*=\pi/j,~j=2,3,4,6$ from left to right.}
 \label{fig_apdx:change_delta}
\end{figure}

\renewcommand{\theequation}{E.\arabic{equation}}
\setcounter{equation}{0}
\setcounter{thm}{2}
\section{Proof for statistical properties}\label{apdx:stat_prop}

Let $\theta^*\in(0,\pi)$ satisfying $\cos{\theta^*}=\langle\mathcal{O}\rangle$ be a target value to be estimated.
Since the objective function in Eq.~(\ref{apdx:alpha_star}) vanishes at $\alpha\theta^*= l\pi~(l\in\mathbb{Z})$, then the definition of $\alpha_m^*$ leads to $\alpha_m^*\theta^*\neq l\pi$ for $m=2,3,\cdots,M$. 
Thus, we can take an integer $z_m$ satisfying
\begin{equation}
  \theta^*\in\Theta^{(m)}:=\left(\frac{z_m}{\alpha^*_m}\pi,\frac{z_m+1}{\alpha^*_m}\pi\right),~~~\forall m,
\end{equation}
and define $\Theta_M:=\cap_{m=1}^M\Theta^{(m)}$.

\begin{thm}
  There exists a unique maximum likelihood estimator $\hat{\theta}_M$ of $\mathcal{L}_M(\theta;\boldsymbol{X}_M)$ such that $\cos{\hat{\theta}_M}\to \cos{\theta^*}$ (convergence in probability) with the probability approaching 1 as $N\to\infty$, and 
  \begin{align}
      \lim_{N\to \infty}P\left[\alpha_k(\bm{X}_{k-1})=\alpha_k^*\right]= 1
  \end{align}
  holds for all $k=2,\cdots,M$.
\end{thm}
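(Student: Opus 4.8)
The plan is to prove the two assertions simultaneously by induction on the step index $k$, because the consistency of the partial estimator $\hat\theta_k$ and the stabilization of the adaptively chosen level $\alpha_{k+1}$ are mutually dependent: the level $\alpha_{k+1}(\bm X_k)$ is a function of $\hat\theta_k$ through the optimization \eqref{apdx:alpha_star} (with $\hat\theta_k$ in place of $\theta$), while the consistency of $\hat\theta_{k+1}$ requires that the level actually used at step $k+1$ has already stabilized. At each step I would establish (a) that $\mathcal L_k(\theta;\bm X_k)$ admits, with probability tending to $1$, a unique global maximizer $\hat\theta_k$ with $\hat\theta_k\xrightarrow{p}\theta^*$, and (b) that $\alpha_{k+1}(\bm X_k)=\alpha_{k+1}^*$ with probability tending to $1$.

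The heart of the consistency argument is a Wald-type identifiability analysis. Writing each factor $F_m$ as a Binomial likelihood with success probability $\mathbf{P}_{\mathcal{D}}(\alpha_m^*;\theta)$ and dividing the log-likelihood by $N$, the law of large numbers gives, on the event that all realized levels equal $\{\alpha_m^*\}$, the uniform (on the compact domain) convergence
\begin{equation}
\frac1N\ln\mathcal L_k(\theta;\bm X_k)\xrightarrow{p}\ell_\infty^{(k)}(\theta):=\sum_{m=1}^k H\bigl(q_m,\mathbf{P}_{\mathcal{D}}(\alpha_m^*;\theta)\bigr),
\end{equation}
up to a $\theta$-independent constant, where $H(q,p):=q\ln p+(1-q)\ln(1-p)$ and $q_m:=\mathbf{P}_{\mathcal{D}}(\alpha_m^*;\theta^*)$. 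By Gibbs' inequality each summand is maximized exactly when $\mathbf{P}_{\mathcal{D}}(\alpha_m^*;\theta)=q_m$, i.e.\ $\cos(\alpha_m^*\theta)=\cos(\alpha_m^*\theta^*)$. Crucially, since $\alpha_1^*=1$ corresponds to the odd POVM and $\cos$ is injective on $(0,\pi)$, the single term $m=1$ already forces $\theta=\theta^*$ (thereby also resolving the sign ambiguity of the even POVM); hence $\ell_\infty^{(k)}$ has a \emph{unique} global maximizer $\theta^*$ on the whole domain, while the higher-resolution factors merely sharpen the peak. The intervals $\Theta^{(m)}$ and their intersection $\Theta_M$ delimit the region on which each $\cos(\alpha_m^*\cdot)$ is injective, so that $\ell_\infty^{(k)}$ is strictly concave there; this nondegeneracy is guaranteed since $\alpha_m^*\theta^*\neq l\pi$ makes the Fisher information $\mathcal I_{\rm c}(\alpha_m^*;\theta^*)>0$.

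For the base case $k=1$ the level $\alpha_1=1$ is deterministic and $\mathbf{P}_{\mathcal{D}}(1;\theta)=\tfrac12-\tfrac{p_{\rm q}}{2}\cos\theta$ is strictly monotone, so $\hat\theta_1$ is the unique maximizer and converges to $\theta^*$ by inverting the empirical frequency. For the inductive step, assuming $\hat\theta_k\xrightarrow{p}\theta^*$, I would use that the objective in \eqref{apdx:alpha_star} is continuous in its $\hat\theta_k$ argument and optimized over the \emph{finite} set $D_k$; provided $\theta^*$ is a continuity point of the (minimum tie-broken) argmax map, there is a neighborhood of $\theta^*$ on which the optimizer equals $\alpha_{k+1}^*$, and since $\hat\theta_k$ lands in it with probability $\to 1$ we obtain assertion (b). Conditioning on the event $\{\alpha_{k+1}=\alpha_{k+1}^*\}$, the datum $X^{(k+1)}$ is conditionally Binomial with the fixed probability $\mathbf{P}_{\mathcal{D}}(\alpha_{k+1}^*;\theta^*)$, so the uniform convergence above applies at level $k+1$; combining it with the unique, isolated global maximum of $\ell_\infty^{(k+1)}$ and the local strict concavity yields that $\hat\theta_{k+1}$ is the unique global maximizer w.h.p.\ and $\hat\theta_{k+1}\xrightarrow{p}\theta^*$. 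The claim $\cos\hat\theta_M\xrightarrow{p}\cos\theta^*$ then follows by continuity.

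The step I expect to be the main obstacle is the transfer of consistency through the adaptive feedback, and specifically the stability of the discrete argmax defining $\alpha_{k+1}$ at $\theta=\theta^*$: an exact tie between two elements of $D_k$ at $\theta^*$ would let an arbitrarily small fluctuation of $\hat\theta_k$ flip the selected level, breaking $P[\alpha_{k+1}=\alpha_{k+1}^*]\to 1$. I would handle this by noting that the objective is analytic in $\theta$, so such ties occur only on a finite set of target values and $\theta^*$ is generically a continuity point, with the minimum tie-break then fixing the selection on a one-sided neighborhood. A secondary technical point is that the law of large numbers must be invoked conditionally on the realized level sequence, which is legitimate because, on the high-probability event that the levels have stabilized, the within-step trials are conditionally i.i.d.\ Bernoulli given $\bm X_{k-1}$.
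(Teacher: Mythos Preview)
Your proposal is correct and shares the same inductive skeleton as the paper: both argue by induction on the step index (the paper packages this as $P[E(m)]\to 1$ where $E(m)$ is the joint event that levels $1,\dots,m$ have stabilized and $\hat\theta_m\to\theta^*$), and both deduce $\alpha_{k+1}=\alpha_{k+1}^*$ from continuity of the objective over the finite set $D_k$. The paper handles the tie issue exactly as you suggest, by passing to a threshold subset $\bar D_{k}\subset D_k$ and adopting the minimum element.

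The genuine difference is in the consistency step at each stage. You run a Wald argument: uniform convergence of $N^{-1}\ln\mathcal L_k$ to $\ell_\infty^{(k)}$, with global identifiability coming for free from the $m=1$ summand since $\cos$ is injective on $(0,\pi)$. The paper instead proves two auxiliary lemmas. Lemma~1 shows that each individual factor $F_m(X^{(m)};\alpha_m^*,\theta)$ has a unique maximizer $\hat\theta^{(m)}\to\theta^*$ on its injectivity interval $\Theta_m$ (via the Kullback--Leibler inequality applied to the single Binomial). Lemma~2 then shows the product $\prod_m F_m$ has a unique \emph{global} maximizer on $(0,\pi)$: first trapping a local maximizer in $[\hat\theta_{\min},\hat\theta_{\max}]\subset\Theta_M$, then ruling out any competitor $\lambda\notin\Theta_M$ by a Taylor-expansion bound in which the single-peaked factor $F_1$ contributes a term of order $N|\lambda-\hat\theta_M|$ that dominates the $O(N(\hat\theta^{(m)}-\hat\theta_M)^2)$ losses from the high-resolution factors. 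Your route is shorter and more standard; the paper's factor-by-factor route makes explicit the mechanism by which the low-resolution measurement suppresses the periodic spurious peaks of the amplified factors, and avoids invoking a uniform-convergence statement on the whole parameter interval.
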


As a demonstration of Theorem~\ref{thm:consistency}, we provide an example of the realized amplified levels in Fig.~\ref{apdx_fig:alpha_k}.
Here, the experimental settings are equivalent to those in Section~\ref{sec:confirm_asym_prop}.
Figure~\ref{apdx_fig:alpha_k} shows that the amplified levels converge rapidly to the asymptotic values, and therefore the total Fisher information and the total number of queries also show rapid convergence.

\begin{figure*}[tb]
 \centering
 \begin{tabular}{ccc}
 \includegraphics[scale=0.75]{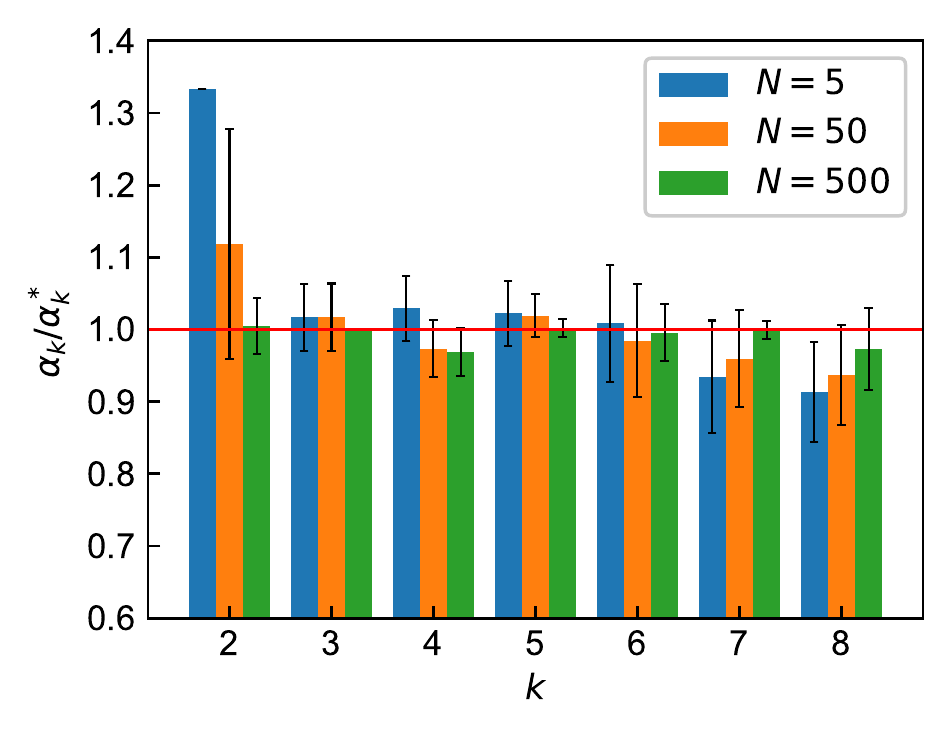}&&\includegraphics[scale=0.75]{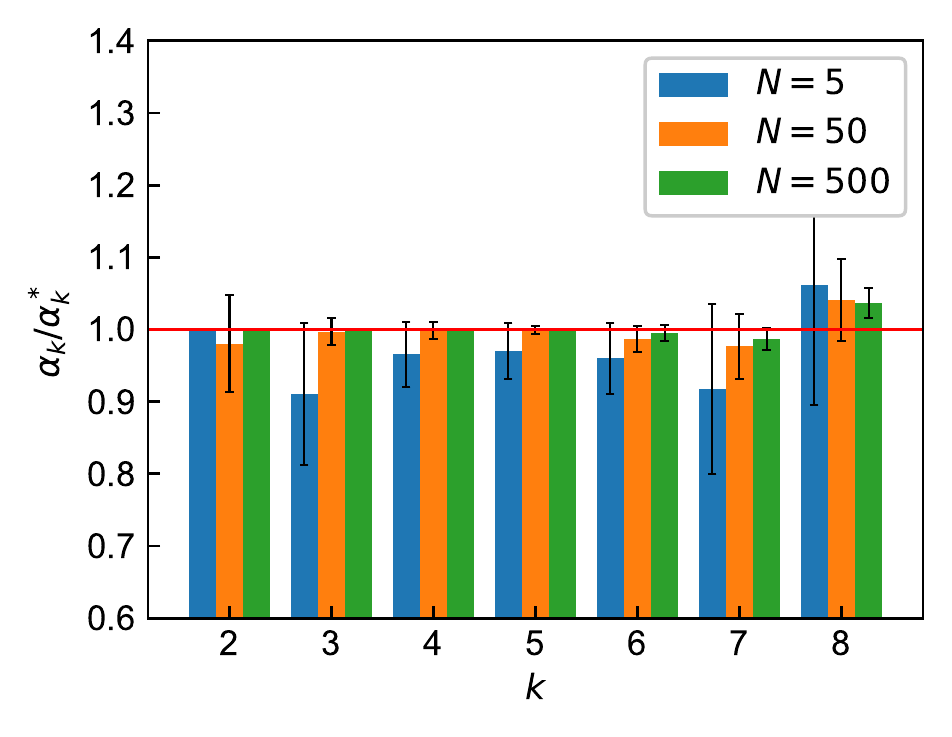}\\
 (a) $\cos\theta^*=0.042$&&(b) $\cos\theta^*=0.5$\\
 &&\\
 \end{tabular}
 \caption{Convergence of the amplified levels $\alpha_k$ with respect to $N$.
 The color bars represent the average of the realized amplified levels normalized by its asymptotic values $\alpha_k^*$ over 3000 trials, and the error bars denote the standard deviation.
 With a fixed value of $k$, the three color bars show the results corresponding to 
 $N=5~({\rm left,blue}),~N=50~({\rm center,orange}),~{\rm and}~N=500~({\rm right,green})$, 
 respectively.
 }
 \label{apdx_fig:alpha_k}
\end{figure*}

\begin{proof}
  Let $E(m)$ be the following event : $\alpha_k(\bm{X}_{k-1})=\alpha_k^*$ holds for all $k=1,2,\cdots,m$, and there exists a unique maximum likelihood estimator $\hat{\theta}_m$ of $\mathcal{L}_m(\theta;\bm{X}_{m})$ which converges to $\theta^*$.
  Note that if we establish $\hat{\theta}_m\to \theta^*$, then this immediately leads to $\cos{\hat{\theta}_m}\to \cos{\theta^*}$ due to the continuous mapping theorem. 
  
  In the case of $m=1$, the likelihood function $\mathcal{L}_1(\theta;\bm{X}_1)$ is equivalent to $F_1(X^{(1)};\alpha_1^*,\theta)$.
  Because the success probability $\mathbf{P}_{\mathcal{D}}(\alpha_1^*;\cdot) : (0,\pi)\to (0,1)$ is injective and the likelihood equation
  \begin{align}
  \partial_{\theta}\ln F_1(X^{(1)};\alpha_1^*,\theta)=0
  \end{align}
  has a unique solution when a solution exists, it follows that
  \begin{equation}\label{apdx:convergence_P(E1)}
    P\left[E(1)\right]\to 1~~as~~N\to \infty,
  \end{equation}
  from the theory of maximum likelihood estimation~\cite{shao2003mathematical}.

  Here, we assume that $P[E(M-1)]\to 1$ for $M\geq 2$.
  For simplicity, we identify the classical Fisher information $\mathcal{I}_{\rm c}(\alpha;\theta)$ with the objective function in Eq.~(\ref{eq:alpha_optimization}) in this proof.
  Fixing $\alpha$ in the objective function $\mathcal{I}_{\rm c}(\alpha;\theta)$, we can confirm that $\mathcal{I}_{\rm c}(\alpha;\hat{\theta}_{M-1})$ converges smoothly to $\mathcal{I}_{\rm c}(\alpha;{\theta}^*)$ when $\hat{\theta}_{M-1}\to \theta^*$.
  If there is a unique point maximizing $\mathcal{I}_{\rm c}(\alpha;{\theta}^*)$ in the range of $D_{M-1}$, then there exists some small number $\epsilon'>0$, and the following implication holds
  \begin{equation}\label{apdx:convergence_alpha}
    \Bigr\{\left|\hat{\theta}_{M-1}-\theta^*\right|<\epsilon'\Bigr\}\subseteq\Bigr\{\alpha_M(\bm{X}_{M-1})=\alpha_M^*\Bigr\},
  \end{equation}
  where $\{\cdot\}$ denotes a set of events such as
  \begin{align}
      \left\{\bm{x}_{M-1}:\left|\hat{\theta}_{M-1}(\bm{x}_{M-1})-\theta^*\right|<\epsilon'\right\}.
  \end{align}
  Note that even if the maximum point of $\mathcal{I}_{\rm c}(\alpha;{\theta}^*)$ in $D_{M-1}$ is not unique, Eq.~(\ref{apdx:convergence_alpha}) also holds when we adopt the minimum value in a subset $\bar{D}_{M-1}$ of ${D}_{M-1}$ as the realization of $\alpha_M(\bm{X}_{M-1})$.
  Here, the subset $\bar{D}_{M-1}$ consists of elements in ${D}_{M-1}$ satisfying
  \begin{align}
    \left|\mathcal{I}_{\rm c}\left(\alpha;\hat{\theta}_{M-1}\right)-\mathcal{I}_{{\rm c},M-1}^{\rm (max)}\left(\hat{\theta}_{M-1}\right)\right|<\epsilon_{\rm threshold},
  \end{align}
  where $\epsilon_{\rm threshold}$ is a small positive number, and
  \begin{equation}
      \mathcal{I}_{{\rm c},M-1}^{\rm (max)}\left(\hat{\theta}_{M-1}\right):=\max_{\alpha\in D_{M-1}}\mathcal{I}_{\rm c}\left(\alpha;\hat{\theta}_{M-1}\right).
  \end{equation}
  Note that $\alpha_M^*$ is defined as the minimum value of $\bar{D}_{M-1}$ for $\theta^*$.
  In the limit of $\hat{\theta}_{M-1}\to \theta^*$, $\bar{D}_{M-1}$ for $\hat{\theta}_{M-1}$ has the same elements in $\bar{D}_{M-1}$ for $\theta^*$, and therefore the implication in Eq.~(\ref{apdx:convergence_alpha}) holds.
  From the assumption $P[E(M-1)]\to 1$, 
  there exists $N_1^*\in\mathbb{N}$ for all $\eta>0$ such that
  \begin{equation}
    P\left[E(M-1)\cap \left\{\left|\hat{\theta}_{M-1}-\theta^*\right|<\epsilon'\right\}\right]>1-\eta
  \end{equation}
  holds for all $N>N_1^*$.
  The implication in  Eq.~(\ref{apdx:convergence_alpha}) yields the following inequality
  \begin{align}\label{apdx:inequality_1}
    1-\eta&<P\left[E(M-1)\cap \left\{\left|\hat{\theta}_{M-1}-\theta^*\right|<\epsilon'\right\}\right]\notag\\
    &\leq P\left[E(M-1)\cap \Bigr\{\alpha_M(\bm{X}_{M-1})=\alpha_M^*\Bigr\}\right]\notag\\
    &\leq P\left[\Bigr\{\alpha_k(\bm{X}_{k-1})=\alpha_k^*~\forall k\Bigr\}\right].
  \end{align}
  Note that if $\alpha_k(\bm{x}_{k-1})=\alpha_k^*$ holds for all $k=1,2,\cdots,M$, then the likelihood function $\mathcal{L}_M\left(\theta;\bm{x}_M\right)$ can be written as
  \begin{align}
    \mathcal{L}_M\left(\theta;\bm{x}_M\right)&=\prod_{m=1}^MF_m\left(x^{(m)};\alpha_m\left(\bm{x}_{m-1}\right),\theta\right)\notag\\
    &=\prod_{m=1}^MF_m\left(x^{(m)};\alpha_m^*,\theta\right).
  \end{align}
  On the other hand, Lemma 2 guarantees the existence of some natural number $N_2^*$ such that
  \begin{equation}\label{apdx:inequality_2}
    P\left[\mathbf{A}\right]>1-\eta,
  \end{equation}
  for all $N>N_2^*$, where an event {\bf{A}} is defined as
  \begin{description}
    \item[A] $\prod_{m=1}^MF_m\left(X^{(m)};\alpha_m^*,\theta\right)$ has a unique maximum point $\hat{\theta}_M$ such that $\hat{\theta}_M\to\theta^*$.
  \end{description}
  From Eqs.~(\ref{apdx:inequality_1}) and (\ref{apdx:inequality_2}), if we choose $N>\max\{N_1^*,N_2^*\}$, then
  \begin{widetext}
  \begin{align}
    &P\left[E(M)\right]\notag\\
    &=P\left[
    \Bigr\{\mathcal{L}_{M}\left(\theta;\bm{X}_M\right) {\rm~has~a~unique~maximum~point~} \hat{\theta}_M {\rm~s.t.~} \hat{\theta}_M \to \theta^*
    \Bigr\}\cap
    \Bigr\{\alpha_k(\bm{X}_{k-1})=\alpha_k^*~\forall k=1,2,\cdots,M\Bigr\}
    \right]\notag\\
    &=P\left[\Bigr\{\prod_{m=1}^MF_m\left(X^{(m)};\alpha_m^*,\theta\right){\rm~has~a~unique~maximum~ point~}\hat{\theta}_M{\rm~s.t.~}\hat{\theta}_M\to\theta^*\Bigr\}\cap\Bigr\{\alpha_k(\bm{X}_{k-1})=\alpha_k^*~\forall k\Bigr\}\right]\notag\\
    &=P\left[\mathbf{A}\cap\Bigr\{\alpha_k(\bm{X}_{k-1})=\alpha_k^*~\forall k\Bigr\}\right]>1-2\eta.
  \end{align}
  \end{widetext}
  In conclusion, we have proved that $P\left[E(M)\right]\to 1$ also holds, and the mathematical induction establishes the theorem.
  
\end{proof}

To prove Lemma 2 or Eq.~(\ref{apdx:inequality_2}), we first establish the existence of a solution for the likelihood equation on $F_m$.
\begin{lemma}
A natural number $2\leq m\leq M$ is fixed arbitrarily.
Suppose $P[\alpha_m(\boldsymbol{X}_{m-1})=\alpha_m^*]\to 1$ as the number of measurements $N$ increases, then $F_m(X^{(m)};\alpha_m^*,\theta)$ has a unique maximum point $\hat{\theta}^{(m)}$ in $\Theta_m$ such that $\hat{\theta}^{(m)}\to\theta^*$ with the probability approaching 1.
\end{lemma}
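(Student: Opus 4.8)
The plan is to reduce the maximization of the likelihood over $\theta$ to a single scalar equation on the success probability, and then to invoke the law of large numbers, being careful that the data $X^{(m)}$ were actually generated with the \emph{random} level $\alpha_m(\bm{X}_{m-1})$ rather than the fixed $\alpha_m^*$. First I would write $\ln F_m(X^{(m)};\alpha_m^*,\theta)$ from the Binomial form with success probability $\mathbf{P}_{\mathcal{D}}(\alpha_m^*;\theta)$ of Eq.~\eqref{eq:unified_P}, differentiate in $\theta$, and observe that the likelihood equation factors as
\begin{equation}
  \frac{X^{(m)}-N\,\mathbf{P}_{\mathcal{D}}(\alpha_m^*;\theta)}{\mathbf{P}_{\mathcal{D}}(\alpha_m^*;\theta)\bigl(1-\mathbf{P}_{\mathcal{D}}(\alpha_m^*;\theta)\bigr)}\,\partial_\theta\mathbf{P}_{\mathcal{D}}(\alpha_m^*;\theta)=0 .
\end{equation}
For both parities of $\alpha_m^*$ one computes $\partial_\theta\mathbf{P}_{\mathcal{D}}(\alpha_m^*;\theta)\propto\sin(\alpha_m^*\theta)$, which by the very definition of $\Theta^{(m)}$ (where $\alpha_m^*\theta\in(z_m\pi,(z_m+1)\pi)$) is nonzero and of constant sign throughout $\Theta^{(m)}\supseteq\Theta_m$. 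Hence on $\Theta_m$ the only stationary points solve $\mathbf{P}_{\mathcal{D}}(\alpha_m^*;\theta)=X^{(m)}/N$.

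Second I would settle uniqueness and the maximizing property. Because $\partial_\theta\mathbf{P}_{\mathcal{D}}(\alpha_m^*;\cdot)$ keeps a fixed sign on $\Theta^{(m)}$, the map $\theta\mapsto\mathbf{P}_{\mathcal{D}}(\alpha_m^*;\theta)$ is a strictly monotone smooth bijection from $\Theta^{(m)}$ onto an open sub-interval of $(0,1)$ with continuous inverse; this is the higher-$\alpha$ analogue of the injectivity used for $m=1$ in the proof of Theorem~\ref{thm:consistency}. Since the Binomial log-likelihood $x\ln p+(N-x)\ln(1-p)$ is strictly concave in $p$ with unique maximum at $p=X^{(m)}/N$, composing with this monotone bijection makes $\ln F_m(X^{(m)};\alpha_m^*,\cdot)$ strictly unimodal on $\Theta^{(m)}$, so whenever $X^{(m)}/N$ lies in the image there is exactly one stationary point $\hat{\theta}^{(m)}=\mathbf{P}_{\mathcal{D}}^{-1}(X^{(m)}/N)$ and it is the unique maximizer.

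Third comes the probabilistic step, which I expect to be the main obstacle. The difficulty is that $X^{(m)}$ is drawn from $\mathrm{Bin}\bigl(N,\mathbf{P}_{\mathcal{D}}(\alpha_m(\bm{X}_{m-1});\theta^*)\bigr)$ with a \emph{random} level, so the law of large numbers for the fixed $\alpha_m^*$ cannot be applied directly. I would condition on $\{\alpha_m(\bm{X}_{m-1})=\alpha_m^*\}$, on which $X^{(m)}$ is (by the $\bm{X}_{m-1}$-measurability of $\alpha_m$) conditionally $\mathrm{Bin}(N,p^*)$ with $p^*:=\mathbf{P}_{\mathcal{D}}(\alpha_m^*;\theta^*)$, and use the union bound
\begin{align}
  P\bigl[|X^{(m)}/N-p^*|>\epsilon\bigr]
  &\le P\bigl[\alpha_m(\bm{X}_{m-1})\neq\alpha_m^*\bigr]\notag\\
  &\quad+P\Bigl[\bigl\{|X^{(m)}/N-p^*|>\epsilon\bigr\}\cap\bigl\{\alpha_m=\alpha_m^*\bigr\}\Bigr].
\end{align}
The first term vanishes by the hypothesis $P[\alpha_m(\bm{X}_{m-1})=\alpha_m^*]\to1$, and the second by Chebyshev's inequality applied to the conditional Binomial, giving $X^{(m)}/N\to p^*$ in probability.

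Finally I would close the argument by a continuity-and-interior argument. Since $\theta^*\in\Theta_m$ is interior and $p^*$ is interior to the image of $\mathbf{P}_{\mathcal{D}}(\alpha_m^*;\cdot)$ on $\Theta^{(m)}$, with probability approaching $1$ the proportion $X^{(m)}/N$ lands in that image, so $\hat{\theta}^{(m)}$ exists, is unique, and lies in $\Theta_m$. The continuous mapping theorem applied to the continuous inverse $\mathbf{P}_{\mathcal{D}}^{-1}$ then yields $\hat{\theta}^{(m)}=\mathbf{P}_{\mathcal{D}}^{-1}(X^{(m)}/N)\to\mathbf{P}_{\mathcal{D}}^{-1}(p^*)=\theta^*$ in probability, establishing the lemma. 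The delicate bookkeeping of the conditioning — in particular verifying that the conditional distribution of $X^{(m)}$ on $\{\alpha_m=\alpha_m^*\}$ is genuinely the clean $\mathrm{Bin}(N,p^*)$ — is the step requiring the most care.
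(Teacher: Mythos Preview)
Your argument is correct and takes a genuinely different, more direct route than the paper. The paper follows a Wald-style consistency argument: it first shows $P[F_m(X^{(m)};\alpha_m^*,\theta^*)>F_m(X^{(m)};\alpha_m^*,\theta)]\to1$ for every $\theta\in\Theta_m\setminus\{\theta^*\}$ via the law of large numbers and non-negativity of the Kullback--Leibler divergence, then applies this at $\theta=\theta^*\pm\delta$ and uses the intermediate value theorem to trap a root of the likelihood equation in $(\theta^*-\delta,\theta^*+\delta)$, with uniqueness stated separately. You instead solve for the MLE in closed form as $\hat{\theta}^{(m)}=\mathbf{P}_{\mathcal{D}}^{-1}(X^{(m)}/N)$ via the strict monotonicity of $\mathbf{P}_{\mathcal{D}}(\alpha_m^*;\cdot)$ on $\Theta^{(m)}$, then push convergence in probability of $X^{(m)}/N$ through the continuous inverse. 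Your treatment of the random amplified level via the union bound and conditioning is essentially the same mechanism the paper uses (its decomposition of the probability over $\alpha_m'\in D_{m-1}$ in Eq.~(E.6)); your verification that the conditional law on $\{\alpha_m=\alpha_m^*\}$ is genuinely $\mathrm{Bin}(N,p^*)$ follows from the $\bm{X}_{m-1}$-measurability of $\alpha_m$ and is sound. Your approach is more elementary and transparent here because the model admits an explicit inverse, whereas the paper's argument is the textbook template that would still work if no closed-form MLE were available.
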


\begin{proof}
  We first show
  \begin{equation}\label{apdx:ineq_for_convexity}
    P\left[F_m\left(X^{(m)};\alpha_m^*,\theta^*\right)>F_m\left(X^{(m)};\alpha_m^*,\theta\right)\right]\to 1
  \end{equation}
  holds for any $\theta\in\Theta_m,~\theta\neq\theta^*$.
  The l.h.s can be transformed as
  \begin{widetext}
  \begin{align}\label{apdx:simplification}
    &P\left[F_m\left(X^{(m)};\alpha_m^*,\theta^*\right)>F_m\left(X^{(m)};\alpha_m^*,\theta\right)\right]\notag\\
    &=\sum_{{x}^{(m)}}\sum_{\bm{x}_{m-1}}\mathcal{L}_{m-1}\left(\bm{x}_{m-1};\theta^*\right)
    F_m\left(x^{(m)};\alpha_m(\bm{x}_{m-1}),\theta^*\right)\mathbf{1}{\left\{F_m\left(X^{(m)};\alpha_m^*,\theta^*\right)>F_m\left(X^{(m)};\alpha_m^*,\theta\right)\right\}}({x}^{(m)})\notag\\
    &=\sum_{\alpha'_m\in D_{m-1}}\sum_{\substack{\bm{x}_{m-1}:\\ \alpha_m(\bm{x}_{m-1})= \alpha'_m}}
    \mathcal{L}_{m-1}\sum_{{x}^{(m)}}F_m\left(x^{(m)};\alpha'_m,\theta^*\right)\mathbf{1}{\left\{F_m\left(X^{(m)};\alpha_m^*,\theta^*\right)>F_m\left(X^{(m)};\alpha_m^*,\theta\right)\right\}}({x}^{(m)})\notag\\
    &=\sum_{\alpha'_m\in D_{m-1}}
    P\left[\alpha_m(\boldsymbol{X}_{m-1})=\alpha'_m\right]
    \sum_{{x}^{(m)}}F_m\left(x^{(m)};\alpha'_m,\theta^*\right)\mathbf{1}{\left\{F_m\left(X^{(m)};\alpha_m^*,\theta^*\right)>F_m\left(X^{(m)};\alpha_m^*,\theta\right)\right\}}({x}^{(m)}),
  \end{align}
  \end{widetext}
  where $\mathbf{1}\{\cdot\}(x)$ denotes an indicator function.
  By assumption, the summation for $\alpha'_m$ of the last line vanishes except for the term with $\alpha'_m=\alpha_m^*$ as $N$ increases.
  Thus, the remaining term can be written as
  \begin{equation}
      P\left[F_m\left(Y^{(m)};\alpha_m^*,\theta^*\right)>F_m\left(Y^{(m)};\alpha_m^*,\theta\right)\right],
  \end{equation}
  where $Y^{(m)}$ denotes a random variable following 
  \begin{equation}
    {\rm Bin}\left(N,\mathbf{P}_{\mathcal{D}}\left(\alpha_m^*;\theta^*\right)\right).
  \end{equation}
  Since the probability $\mathbf{P}_{\mathcal{D}}\left(\alpha_m^*;\theta\right)$ is an injective function with respect to $\theta\in \Theta_m$; and its support is independent to $\theta$, these conditions yield
  \begin{equation}\label{apdx:regularity_of_Y}
    P\left[F_m\left(Y^{(m)};\alpha_m^*,\theta^*\right)>F_m\left(Y^{(m)};\alpha_m^*,\theta\right)\right]\to1,
  \end{equation}
  as $N\to\infty$.
  We can prove this from law of large numbers for independent Bernoulli trials on $Y^{(m)}$ and the non-negativity of the Kullback-Leibler divergence.
  Combining Eqs.~(\ref{apdx:regularity_of_Y}),~(\ref{apdx:simplification}), we complete the proof of Eq.~(\ref{apdx:ineq_for_convexity}).
  
  By definition of $\Theta_m$, we can choose a small positive number $\delta$ such that $(\theta^*-\delta,\theta^*+\delta)\subset \Theta_m$.
  Let $\mathbf{B}_{\pm}$ be the following events
  \begin{description}
    \item[B$_{\pm}$] $F_m\left(X^{(m)};\alpha_m^*,\theta^*\right)>F_m\left(X^{(m)};\alpha_m^*,\theta^*\pm\delta\right)$ holds.
  \end{description}
  Considering $x^{(m)}\in\mathbf{B}_{+}\cap\mathbf{B}_{-}$, the equation
  \begin{equation}\label{apdx:likelihhod_eq}
      \frac{\partial}{\partial\theta}F_m\left(x^{(m)};\alpha_m^*,\theta\right)=0,
  \end{equation}
  has solutions, and we write $\hat{\theta}^{(m)}$ as the solution closest to $\theta^*$ among them. 
  Since the difference between $\hat{\theta}^{(m)}$ and $\theta^*$ is less than $\delta$; and for all $\epsilon>0$ there exists $N^*\in\mathbb{N}$ such that $P\left[\mathbf{B}_{\pm}\right]>1-\epsilon$ for all $N>N^*$ by Eq.~(\ref{apdx:ineq_for_convexity}), we obtain
  \begin{align}
    P\left[\left|\hat{\theta}^{(m)}-\theta^*\right|<\delta\right]&\geq P\left[\mathbf{B}_{+}\cap\mathbf{B}_{-}\right]\notag\\
    &>1-2\epsilon.
  \end{align}
  In addition, the likelihood equation Eq.~(\ref{apdx:likelihhod_eq}) has a unique solution maximizing $F_m(x^{(m)};\alpha_m^*,\theta)$ in $\Theta_m$ if a solution of Eq.~(\ref{apdx:likelihhod_eq}) exists in $\Theta_m$.
  Therefore, $\hat{\theta}^{(m)}$ is a unique maximum point of $F_m\left(x^{(m)};\alpha_m^*,\theta\right)$ in $\Theta_m$.
  
\end{proof}

Applying Lemma 1 to all the functions $\{F_m\}$, we establish the existence of 
a unique maximum point of the product function $\Pi_m F_m$.

\begin{lemma}
  Suppose $P[\alpha_k(\boldsymbol{X}_{k-1})=\alpha_k^*]\to 1$ for all $k=2,\cdots,M$ and $\hat{\theta}^{(1)}\to\theta^*$ (convergence in probability). 
  Then, there exists a unique maximum point $\hat{\theta}_M$ of the product function $\Pi_{m=1}^M F_m(X^{(m)};\alpha_m^*,\theta)$ in $(0,\pi)$ such that $\hat{\theta}_{M}\to\theta^*$ with the probability approaching 1 as $N\to \infty$.
\end{lemma}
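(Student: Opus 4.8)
The plan is to reduce the statement to a standard maximum-likelihood consistency argument on a clean probability event, and then to upgrade ``consistency of some maximizer'' to ``existence of a unique global maximizer'' by combining a global well-separation property of the limiting log-likelihood with local strict concavity near $\theta^*$. First I would condition on the event $\mathcal{E}:=\{\alpha_k(\bm{X}_{k-1})=\alpha_k^*~\forall k=2,\dots,M\}$, which by hypothesis has probability tending to $1$. On $\mathcal{E}$ the realized amplified levels coincide with the deterministic ones, so the outcomes $X^{(m)}\sim{\rm Bin}(N,\mathbf{P}_{\mathcal{D}}(\alpha_m^*;\theta^*))$ are mutually independent and the product $\prod_{m=1}^M F_m(X^{(m)};\alpha_m^*,\theta)$ is a genuine likelihood of $M$ independent binomial samples. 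It therefore suffices to prove the conclusion on $\mathcal{E}$, where the analysis reduces to that of an ordinary (non-hierarchical) MLE.

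Second, I would identify the almost-sure limit of the normalized log-likelihood. By the law of large numbers $X^{(m)}/N\to\mathbf{P}_{\mathcal{D}}(\alpha_m^*;\theta^*)$ in probability, so $\tfrac1N\ln\prod_m F_m(\theta)$ converges uniformly on $(0,\pi)$, up to an additive $\theta$-independent constant, to $\ell(\theta):=\sum_{m=1}^M g_m(\theta)$, where $g_m(\theta)=\mathbf{P}_{\mathcal{D}}(\alpha_m^*;\theta^*)\ln\mathbf{P}_{\mathcal{D}}(\alpha_m^*;\theta)+(1-\mathbf{P}_{\mathcal{D}}(\alpha_m^*;\theta^*))\ln(1-\mathbf{P}_{\mathcal{D}}(\alpha_m^*;\theta))$. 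The uniform convergence is legitimate because every $\mathbf{P}_{\mathcal{D}}(\alpha_m^*;\theta)$ stays inside a compact subinterval of $(0,1)$ when $p_{\rm q}<1$, so the logarithms never blow up. By Gibbs' inequality each $g_m$ is maximized exactly at those $\theta$ with $\mathbf{P}_{\mathcal{D}}(\alpha_m^*;\theta)=\mathbf{P}_{\mathcal{D}}(\alpha_m^*;\theta^*)$. The crucial input is the first factor: since $\alpha_1^*=1$, the map $\theta\mapsto\mathbf{P}_{\mathcal{D}}(1;\theta)=\tfrac12-\tfrac{p_{\rm q}}{2}\cos\theta$ is injective on the whole domain $(0,\pi)$, so $g_1$ has a \emph{unique} maximizer at $\theta^*$ (this is the content of the hypothesis $\hat{\theta}^{(1)}\to\theta^*$); since every other $g_m$ also attains its maximum at $\theta^*$, the sum $\ell$ has $\theta^*$ as its unique global maximizer, which is well-separated, i.e.\ $\sup_{|\theta-\theta^*|\ge\varepsilon}\ell(\theta)<\ell(\theta^*)$ for every $\varepsilon>0$ by continuity on the compact $[0,\pi]$.

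Third, I would obtain existence and consistency of the global argmax from the standard argmax/Wald argument: uniform convergence of $\tfrac1N\ln\prod_m F_m$ to a function with a well-separated maximum forces every (near-)maximizer to satisfy $\hat{\theta}_M\to\theta^*$ in probability, whence $\cos\hat{\theta}_M\to\cos\theta^*$ by the continuous mapping theorem. It remains to promote this to a \emph{unique} maximizer with probability approaching $1$. On a fixed neighborhood $U=(\theta^*-\delta,\theta^*+\delta)$ I would show $\partial_\theta^2\ln\prod_m F_m<0$ with probability $\to1$: its normalization converges to $\ell''$, and $-\ell''(\theta^*)=\sum_{m=1}^M\mathcal{I}_{\rm c}(\alpha_m^*;\theta^*)>0$, the positivity holding because each $\alpha_m^*$ is chosen (as the argmax in Eq.~\eqref{apdx:alpha_star}) to avoid the vanishing points $\sin(\alpha\theta^*)=0$; shrinking $\delta$ keeps $\ell''<0$ throughout $U$, so the empirical log-likelihood is strictly concave on $U$ and has at most one critical point there. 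Combining strict concavity on $U$ (at most one maximizer inside), the well-separation bound (no maximizer outside $U$), and the existence from the previous step yields exactly one global maximizer $\hat{\theta}_M$ in $(0,\pi)$, establishing the lemma.

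I expect the main obstacle to be the \emph{global} uniqueness: because the high-$\alpha_m^*$ factors are periodic in $\theta$, the product has many nearly-tied local peaks at the ambiguity points of those factors, and one must rule out spurious global maximizers away from $\theta^*$. This is precisely where the $\alpha_1^*=1$ factor is indispensable---its global monotonicity breaks all such ties and makes $\theta^*$ the unique separated maximizer of $\ell$---mirroring the elimination-of-ambiguity mechanism already used for the product likelihood in Section~\ref{method_sec:aewpe}.
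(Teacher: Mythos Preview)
Your argument is correct and complete, but it proceeds along a genuinely different line from the paper's own proof. You use the standard Wald-type route: on the high-probability event $\mathcal{E}$ you establish uniform convergence of the normalized log-likelihood to the deterministic limit $\ell(\theta)=\sum_m g_m(\theta)$, invoke Gibbs' inequality together with the injectivity of $\mathbf{P}_{\mathcal{D}}(1;\cdot)$ on $(0,\pi)$ to obtain a well-separated unique maximizer at $\theta^*$, and then combine the argmax theorem with local strict concavity ($\ell''(\theta^*)=-\sum_m\mathcal{I}_{\rm c}(\alpha_m^*;\theta^*)<0$) to get a unique global maximizer. The paper instead works entirely with the finite-$N$ objects: it first localizes each individual maximizer $\hat{\theta}^{(m)}$ of $F_m$ inside the small interval $\Theta_M$ (via Lemma~1), traps the product maximizer between $\hat{\theta}_{\min}$ and $\hat{\theta}_{\max}$ by a first-derivative sign argument, and then rules out competitors $\lambda\notin\Theta_M$ by an explicit comparison---a lower bound of order $N$ on $\ln F_1(\hat{\theta}_M)-\ln F_1(\lambda)$ coming from the global unimodality of $F_1$, against an $o(N)$ Taylor remainder $\sum_{m\ge2}[\ln F_m(\hat{\theta}^{(m)})-\ln F_m(\hat{\theta}_M)]$.

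What each approach buys: your route is cleaner and more modular---once uniform convergence is checked (legitimate here because $\mathbf{P}_{\mathcal{D}}(\alpha_m^*;\theta)$ stays in a compact subinterval of $(0,1)$ for $p_{\rm q}<1$), everything follows from off-the-shelf consistency theorems. The paper's route is more hands-on and avoids invoking uniform convergence altogether, instead exploiting the explicit periodic structure of each $F_m$ and the subspace $\Theta_M$; this makes the ambiguity-breaking role of the $\alpha_1^*=1$ factor very transparent at the level of finite-$N$ inequalities rather than only in the limit $\ell$. Both identify the same mechanism you flagged as the main obstacle: the first, globally injective factor is what kills the spurious near-tied peaks created by the higher-$\alpha_m^*$ periodicities.
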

\begin{proof}
  From the assumption and Lemma 1, we can choose $N^*\in\mathbb{N}$ for all $\epsilon>0$ 
  such that the probability of occurrence of the following event $\mathbf{C}$ is bigger 
  than $1-\epsilon$ for all $N>N^*$; 
  \begin{description}\centering
    \item[C] For all $m$, $\hat{\theta}^{(m)}$ exists in $\Theta_M$.
  \end{description}
  Here, $\hat{\theta}^{(m)}$ denotes a (unique) maximum point of $F_m$ in $\Theta_M$.
  Actually, $F_m$ is globally maximized at the point $\hat{\theta}^{(m)}$ because of the periodicity of $F_m$:
  \begin{equation}\label{mt_maxFm}
      \underset{\theta\in(0,\pi)}{\rm max}~F_m\left(x^{(m)};\alpha_m^*,\theta\right)=F_m\left(x^{(m)};\alpha_m^*,\hat{\theta}^{(m)}\right).
  \end{equation}
  Let $\hat{\theta}_{\rm max}~(\hat{\theta}_{\rm min})$ be the maximum (minimum) value 
  of a set $\{\hat{\theta}^{(m)}\}$. 
  Since the product function $\Pi_{m=1}^M F_m$ is a continuous function with respect to $\theta$, it has a maximum point $\hat{\theta}_M$ in $[\hat{\theta}_{\rm min},\hat{\theta}_{\rm max}]$.
  $\hat{\theta}_{\rm max}$ and $\hat{\theta}_{\rm min}$ themselves cannot be a maximum point, because the gradient at these points does not vanish as follows.
  Since $\hat{\theta}_{\rm min}\leq \hat{\theta}^{(m)}$ holds, we obtain
  \begin{align}
      &\left.\frac{\partial}{\partial \theta}\prod_{m=1}^M F_m\left(x^{(m)};\alpha_m^*,\theta\right)\right|_{\hat{\theta}_{\rm min}}\notag\\
      &=\sum_{k=1}^M F_1\cdots F_{k-1}\left.\frac{\partial F_k}{\partial\theta}\right|_{\hat{\theta}_{\rm min}}F_{k+1}\cdots F_{M}\notag\\
      &=\sum_{k:\hat{\theta}^{(k)}\neq \hat{\theta}_{\rm min}} F_1\cdots F_{k-1}\left.\frac{\partial F_k}{\partial\theta}\right|_{\hat{\theta}_{\rm min}}F_{k+1}\cdots F_{M}>0,
  \end{align}
  where $F_m>0$ and $\partial_\theta F_m>0$ for $\theta <\hat{\theta}^{(m)}$ lead to the final inequality.
  In the same way, we can prove that the gradient at $\hat{\theta}_{\rm max}$ is negative.
  Therefore, we obtain
  \begin{widetext}
  \begin{align}
      &
      P\left[{\rm there~exists~a~maximum~point~}\hat{\theta}_M{\rm~of~}\prod_{m=1}^M F_m\left(X^{(m)};\alpha_m^*,\theta\right){\rm~in~}\Theta_M{\rm~such~that~}\hat{\theta}_M\xrightarrow{p}\theta^*\right]
      \notag\\
      &
      \geq P\left[\mathbf{C}\right]>1-\epsilon,
  \end{align}
  \end{widetext}
  where we use the fact that $\hat{\theta}_{\rm min},\hat{\theta}_{\rm max}\to\theta^*$ yields $\hat{\theta}_M\to\theta^*$.
  
  Finally, we show that $\Pi_m F_m$ is globally maximized at $\hat{\theta}_M$.
  For simplicity, we write $F_m(x^{(m)};\alpha_m^*,\theta)$ as $F_m(\theta)$.
  By Taylor's theorem, there is a scalar $c_m$ between $\hat{\theta}^{(m)}$ and $\hat{\theta}_M$ such that
  \begin{align}\label{mt_taylor_exp}
      \ln F_m(\hat{\theta}^{(m)})&-\ln F_m(\hat{\theta}_M)\notag\\
      &=-\frac{\partial^2_{\theta}\ln F_m(c_m)}{2}\left(\hat{\theta}^{(m)}-\hat{\theta}_M\right)^2.
  \end{align}
  Here, the second derivative in Eq.~(\ref{mt_taylor_exp}) can be upper bounded by $N$ and a positive constant $K$ (independent to $N$) as
  \begin{align}\label{mt_taylor_exp_up}
      -\frac{\partial^2_{\theta}\ln F_m(c_m)}{2}\left(\hat{\theta}^{(m)}-\hat{\theta}_M\right)^2<\frac{NK}{2}\left(\hat{\theta}^{(m)}-\hat{\theta}_M\right)^2.
  \end{align}
  Recalling that $F_1(\theta)$ has a single peak in $(0,\pi)$, for an arbitrary $\lambda\notin\Theta_M$, we obtain
  \begin{align}\label{mt_meanvalthm_low}
      &\frac{\ln {F_1(\hat{\theta}_M)}-\ln {F_1(\lambda)}}{\left|\hat{\theta}_M-\lambda\right|}\notag\\
      &=\frac{N}{2}\left|\frac{p_{\rm q}\sin{\left(c_1\right)}({x^{(1)}}/{N}-\mathbf{P}_{\mathcal{D}}(1;c_1))}{\mathbf{P}_{\mathcal{D}}(1;c_1)\left[1-\mathbf{P}_{\mathcal{D}}(1;c_1)\right]}\right|>\frac{NL}{2},
  \end{align}
  where $c_1$ is a scalar between $\lambda$ and $\hat{\theta}_M$, and $L>0$ is a positive constant.
  Combining Eqs.~(\ref{mt_maxFm}) and (\ref{mt_taylor_exp})--(\ref{mt_meanvalthm_low}), the following inequalities hold:
  \begin{align}
      \sum_{m=1}^M &\ln \frac{F_m(\hat{\theta}_M)}{F_m(\lambda)}\geq \ln \frac{F_1(\hat{\theta}_M)}{F_1(\lambda)}+\sum_{m=2}^M \ln \frac{F_m(\hat{\theta}_M)}{F_m(\hat{\theta}^{(m)})}\notag\\
      &>\frac{N}{2}\left[L\left|\lambda-\hat{\theta}_M\right|-{K} \sum_{m=2}^{M}\left(\hat{\theta}^{(m)}-\hat{\theta}_M\right)^2\right].
  \end{align}
  For sufficiently large $N$, the second term in the square bracket becomes small compared to the first term, and therefore we conclude that 
  \begin{equation}
      {\prod_m F_m(\hat{\theta}_M)} > \prod_m F_m(\lambda).
  \end{equation}
  Consequently, the maximum of $\Pi_{m=1}^M F_m$ in $\Theta_M$ becomes the global maximum in $(0,\pi)$.
  
\end{proof}


Finally, we prove that, when $N$ is large, the asymptotic variance of our estimator 
can achieve the classical Cram\'{e}r-Rao lower bound. 
\begin{thm}
  If $\hat{\theta}_M$ is a maximum likelihood estimator on $\mathcal{L}_{M}(\theta;\bm{X}_M)$ such that $\hat{\theta}_M\to \theta^*$ (convergence in probability), then the following convergence holds;
  \begin{equation}
      \sqrt{\mathcal{I}^{*}_{\rm c,tot}(\theta^*)}\left(\cos\hat{\theta}_M-\cos{\theta^*}\right)\to\mathcal{N}(0,\sin^2\theta^*),
  \end{equation}
  where $\mathcal{N}$ denotes a centered normal distribution with variance 
  $\sin^2{\theta^*}$ and $\to$ means the convergence in distribution as $N$ increases.
\end{thm}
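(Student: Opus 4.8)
The plan is to reduce this adaptive, hierarchical maximum likelihood problem to the classical asymptotic theory for independent samples, leaning on the consistency result in Theorem~\ref{thm:consistency}. First I would condition on the event $\mathcal{E}_N:=\{\alpha_k(\bm{X}_{k-1})=\alpha_k^*~\forall k\}$, which has probability tending to $1$ as $N\to\infty$ by Theorem~\ref{thm:consistency}. On $\mathcal{E}_N$ the adaptive dependence evaporates: the log-likelihood $\ell_M(\theta):=\ln\mathcal{L}_M(\theta;\bm{X}_M)$ collapses to the sum $\sum_{m=1}^M\ln F_m(X^{(m)};\alpha_m^*,\theta)$ of smooth terms built from genuinely independent binomial variables $X^{(m)}\sim{\rm Bin}(N,\mathbf{P}_{\mathcal{D}}(\alpha_m^*;\theta^*))$. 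Writing the likelihood equation $\partial_\theta\ell_M(\hat{\theta}_M)=0$ and Taylor-expanding the score about $\theta^*$ then gives, for some $\tilde{\theta}$ between $\hat{\theta}_M$ and $\theta^*$,
\begin{equation}
    \sqrt{\mathcal{I}^*_{\rm c,tot}(\theta^*)}\,(\hat{\theta}_M-\theta^*)
    =\frac{\partial_\theta\ell_M(\theta^*)\big/\sqrt{\mathcal{I}^*_{\rm c,tot}(\theta^*)}}
           {-\partial_\theta^2\ell_M(\tilde{\theta})\big/\mathcal{I}^*_{\rm c,tot}(\theta^*)},
\end{equation}
so the goal reduces to showing the numerator converges in distribution to $\mathcal{N}(0,1)$ while the denominator converges to $1$ in probability.

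For the numerator I would use that each per-shot score $\partial_\theta\ln F_m(\cdot;\alpha_m^*,\theta^*)$ has mean zero and, by the computation behind Eq.~(\ref{eq:unified_cFI}), per-shot variance $\mathcal{I}_{\rm c}(\alpha_m^*;\theta^*)$; hence the full score has mean zero and variance exactly $\mathcal{I}^*_{\rm c,tot}(\theta^*)=N\sum_m\mathcal{I}_{\rm c}(\alpha_m^*;\theta^*)$. Since on $\mathcal{E}_N$ this score is a sum of $NM$ independent, uniformly bounded summands, the Lindeberg (or Lyapunov) central limit theorem delivers $\partial_\theta\ell_M(\theta^*)/\sqrt{\mathcal{I}^*_{\rm c,tot}(\theta^*)}\to\mathcal{N}(0,1)$. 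For the denominator, the observed information $-\partial_\theta^2\ell_M(\theta^*)$ has expectation $\mathcal{I}^*_{\rm c,tot}(\theta^*)$, and the weak law of large numbers together with $\hat{\theta}_M\to\theta^*$ (hence $\tilde{\theta}\to\theta^*$) and the smoothness of $\mathbf{P}_{\mathcal{D}}(\alpha;\theta)$ in $\theta$ yields $-\partial_\theta^2\ell_M(\tilde{\theta})/\mathcal{I}^*_{\rm c,tot}(\theta^*)\to1$ in probability. Slutsky's theorem applied to the displayed ratio then gives $\sqrt{\mathcal{I}^*_{\rm c,tot}(\theta^*)}\,(\hat{\theta}_M-\theta^*)\to\mathcal{N}(0,1)$, and the delta method with $g(\theta)=\cos\theta$, $g'(\theta^*)=-\sin\theta^*$ (nonzero since $\theta^*\in(0,\pi)$) converts this into the claimed convergence with asymptotic variance $(g'(\theta^*))^2=\sin^2\theta^*$.

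I expect the principal obstacle to be the rigorous transfer of every limiting statement from the true hierarchical law of $\bm{X}_M$ to the product law available only on $\mathcal{E}_N$. Because the $\alpha_k$ are random and converge merely in probability, one must verify that intersecting with $\mathcal{E}_N$ (of probability $\to1$) preserves both the convergence in distribution of the score and the convergence in probability of the observed information. This is exactly the point where the decomposition over the events $\{\alpha_m(\bm{X}_{m-1})=\alpha_m'\}$ used in Appendix~\ref{apdx:stat_prop}, cf.\ Eq.~(\ref{apdx:simplification}), does the work: it lets one replace expectations under the adaptive law by expectations under the reference binomial law ${\rm Bin}(N,\mathbf{P}_{\mathcal{D}}(\alpha_m^*;\theta^*))$ up to terms vanishing with $P[\mathcal{E}_N^{\mathrm c}]$. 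Once this reduction is justified, the remaining steps are the standard regular-model maximum likelihood argument.
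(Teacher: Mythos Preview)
Your proposal is correct and follows essentially the same route as the paper's proof: Taylor-expand the score at $\theta^*$, show the normalized score is asymptotically $\mathcal{N}(0,1)$ and the normalized observed information tends to $1$, combine via Slutsky, then apply the delta method with $g(\theta)=\cos\theta$. The only cosmetic differences are that the paper uses a second-order Taylor expansion (writing the ratio as $A/(B-C)$ with an explicit third-derivative remainder $C$) and verifies the CLT via characteristic functions, whereas you use the mean-value form and Lindeberg/Lyapunov; the transfer from the hierarchical law to the independent-binomial law that you flag as the main obstacle is exactly what the paper isolates as Lemma~3.
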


\begin{proof}
  Using the Taylor's theorem, we can expand $\partial_\theta \ln\mathcal{L}_M(\theta;\bm{X}_M)$ around $\theta=\theta^*$ as
  \begin{align}\label{apdx:L_taylor}
    \partial_\theta\ln{\mathcal{L}_M(\theta;\bm{X}_M)}&=\partial_\theta\ln{\mathcal{L}_M(\theta^*;\bm{X}_M)}\notag\\
    &~~+\partial^2_\theta\ln{\mathcal{L}_M(\theta^*;\bm{X}_M)}(\theta-\theta^*)\notag\\
    &~~+\frac{1}{2}\partial^3_\theta\ln{\mathcal{L}_M(\theta';\bm{X}_M)}(\theta-\theta^*)^2,
\end{align}
where $\theta'$ denotes some real number between $\theta$ and $\theta^*$.
Since $\hat{\theta}_M$ is a root of the likelihood equation, we obtain
\begin{align}\label{apdx:Asymptotic effectiveness}
    &\sqrt{\mathcal{I}^*_{\rm c,tot}(\theta^*)}(\hat{\theta}_M-\theta^*)= \frac{A}{B-C},
\end{align}
where for simplicity we introduced the random variables $A,B,$ and $C$ defined as
\begin{align}
    A:=\left(\mathcal{I}^*_{\rm c,tot}(\theta^*)\right)^{-1/2}\partial_\theta\ln{\mathcal{L}_M(\theta^*;\bm{X}_M)},\notag
\end{align}
\begin{align}
    B:=-\left({\mathcal{I}^*_{\rm c,tot}(\theta^*)}\right)^{-1}\partial^2_\theta\ln{\mathcal{L}_M(\theta^*;\bm{X}_M)},\notag
\end{align}
\begin{align}
    C:=\left({2\mathcal{I}^*_{\rm c,tot}(\theta^*)}\right)^{-1}\partial^3_\theta\ln{\mathcal{L}_M(\theta';\bm{X}_M)}(\hat{\theta}_M-\theta^*)\notag.
\end{align}

Here, we show that $A$ converges to the standard normal distribution.
To apply Lemma 3, we first calculate the characteristic function of $A$ replacing $\bm{X}_M,{\alpha}_m\to\bm{Y}_M,{\alpha}^*_m$, where $\bm{Y}_M=(Y^{(1)},\cdots,Y^{(M)})$ is a random vector whose element follows Binomial distribution ${\rm Bin}(N,\mathbf{P}_{\mathcal{D}}\left(\alpha_m^*;\theta^*\right))$ independently. 
That is, for $t\in\mathbb{R}$ we have
\begin{align}
    &\mathbf{E}\left[{\rm exp}(it\left.A\right|_{\bm{X}_M,{\alpha}_m\to\bm{Y}_M,{\alpha}^*_m})\right] \notag\\
    &=\mathbf{E}\left[{\rm exp}\left(\frac{it\sum_{m=1}^M\partial_\theta\ln{ F_m({Y}^{(m)};{\alpha}^*_m,\theta^*)}}{\sqrt{\mathcal{I}^*_{\rm c,tot}(\theta^*)}}\right)\right] \notag\\
    &=\prod_{m=1}^M\left(\mathbf{E}\left[ {\rm exp}\left(\frac{it\partial_\theta\ln{ f_m({Y}_1^{(m)};{\alpha}^*_m,\theta^*)}}{\sqrt{\mathcal{I}^*_{\rm c,tot}(\theta^*)}}\right)\right]\right)^N\notag\\
    &=\prod_{m=1}^M\left(1-\frac{t^2}{2N}\frac{\mathcal{I}_{\rm c}(\alpha^*_m;\theta^*)
    }{\sum_{m=1}^M\mathcal{I}_{\rm c}(\alpha_m^*;\theta^*)}+{o}\left(\frac{1}{N}\right)\right)^N\notag\\
    &\to{\rm exp}\left(-\frac{t^2}{2}\right),
\end{align}
where $Y_1^{(m)}$ denotes a random variable following Bernoulli distribution with the success probability $\mathbf{P}_{\mathcal{D}}(\alpha_m^*;\theta^*)$, and $f_m$ is the probability function of $Y_1^{(m)}$.
Thus, Lemma 3 yields
\begin{align}
    \lim_{N\to\infty} \mathbf{E}\left[{\rm exp}(itA)\right]={\rm exp}\left(-\frac{t^2}{2}\right).
\end{align}
Therefore, $A\to\mathcal{N}(0,1)$ holds.

Next, we consider the convergence of $B$. 
According to the Chebyshev inequality, for all $c>0$ we obtain
\begin{align}
    P\left[\left|B-1\right|>c\right]&\leq \frac{\mathbf{E}\left[\left(B-1\right)^2\right]}{c^2}.
\end{align}
Here, $(B-1)^2$ is upper bounded, and the expectation of $(B-1)^2$ replacing $\bm{X}_M,{\alpha}_m\to\bm{Y}_M,{\alpha}^*_m$ also converges to 0 as follows.
For simplicity, we define $W_m(Y^{(m)})$ and $w_m(Y^{(m)})$ as
\begin{align}
    W_m(Y^{(m)})&:=\partial^2_\theta\ln{{F}_m({Y}^{(m)};{\alpha}^*_m,\theta^*)}\notag\\
    &~~~~~-\mathbf{E}\left[\partial^2_\theta\ln{{F}_m({Y}^{(m)};{\alpha}^*_m,\theta^*)}\right],\notag
\end{align}
\begin{align}
    w_{m}(Y_1^{(m)})&:=\partial^2_\theta\ln{{f}_m({Y}_1^{(m)};\alpha^*_m,\theta^*)}\notag\\
    &~~~~~-\mathbf{E}\left[\partial^2_\theta\ln{{f}_m({Y}_1^{(m)};\alpha^*_m,\theta^*)}\right],\notag
\end{align}
respectively.
\begin{align}
    &\mathbf{E}\left[\left.\left(B-1\right)^2\right|_{\bm{X}_M,{\alpha}_m\to\bm{Y}_M,{\alpha}^*_m}\right]\notag\\
    &=\frac{{\rm Var}\left[\partial^2_\theta\ln{\prod_{m=1}^M{F}_m({Y}^{(m)};{\alpha}^*_m,\theta^*)}\right]}{\left[\mathcal{I}^*_{\rm c,tot}(\theta^*)\right]^2}\notag\\
    &=\frac{\sum_{n,m=1}^M\mathbf{E}\left[W_m(Y^{(m)})W_n(Y^{(n)})\right]}{\left[\mathcal{I}^*_{\rm c,tot}(\theta^*)\right]^2} \notag\\
    &=\frac{N}{\left[\mathcal{I}^*_{\rm c,tot}(\theta^*)\right]^2}\sum_{m=1}^M\mathbf{E}\left[w^2_{m}(Y_1^{(m)})\right] \notag\\
    &=\frac{1}{N}\frac{\sum_{m=1}^M\mathbf{E}\left[w^2_{m}(Y_1^{(m)})\right]}{\left[\sum_{m=1}^M\mathcal{I}_{\rm c}(\alpha^*_m;\theta^*)\right]^2}\to0,
\end{align}
where we used $\mathcal{I}^*_{\rm c,tot}(\theta^*)=-\mathbf{E}[\partial^2_{\theta}\ln \prod_{m=1}^M F_m]$ for the first equality.
Thus, Lemma 3 establishes that $B\to1$ holds.

Since $N^{-1}\partial_{\theta}^3\ln\mathcal{L}_M(\theta';\bm{x}_M)$ is also upper bounded, $\hat{\theta}_M-\theta^*\to 0$ yields $C\to0$.
To sum up the above asymptotic properties, we conclude that the following convergence 
holds;
\begin{equation}
    \sqrt{\mathcal{I}_{\rm c,tot}^*(\theta^*)}(\hat{\theta}_M-{\theta}^*)\to\mathcal{N}(0,1).
\end{equation}
In addition, the delta method~\cite{shao2003mathematical} yields a more suitable 
expression 
\begin{equation}
    \sqrt{\mathcal{I}_{\rm c,tot}^*(\theta^*)}(\cos{\hat{\theta}_M}-\cos{\theta^*})\to\mathcal{N}(0,\sin^2{\theta^*}).
\end{equation}
\end{proof}

To complete the proof of Theorem~\ref{thm:asym_normal}, we finally show the following lemma.

\begin{lemma}
  Let $h(\bm{x}_M, \alpha_2,\cdots,\alpha_M)$ be an upper bounded function on $\bm{x}_M\in\{0,1,\cdots,N\}^{M}$ and $Y^{(m)}~(m=1,2,\cdots,M)$ be an independent random variable following Binomial distribution ${\rm Bin}(N,\mathbf{P}_{\mathcal{D}}\left(\alpha_m^*;\theta^*\right))$.
  Suppose that $\mathbf{E}\left[h(\bm{Y}_M,\alpha^*_2,\cdots,\alpha^*_M)\right]$ converges to some constant $\beta$ as $N\to \infty$, 
  then the following convergence holds
  \begin{align}\label{apdx:limit_lemma3}
      \lim_{N\to \infty}&\mathbf{E}\left[h(\bm{X}_M,\alpha_2(\bm{X}_1),\cdots,\alpha_M(\bm{X}_{M-1}))\right]=\beta.
  \end{align}
\end{lemma}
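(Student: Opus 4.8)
The plan is to transfer the hypothesis, which is stated for the decoupled independent model $\bm{Y}_M$, to the genuinely dependent adaptive model $\bm{X}_M$ by comparing the two laws on the event where the realized amplified levels coincide with their asymptotic values. First I would introduce
\begin{equation}
  E:=\left\{\alpha_k(\bm{X}_{k-1})=\alpha_k^*~~\forall k=2,\cdots,M\right\},
\end{equation}
viewed as a subset of the finite sample space $\{0,1,\cdots,N\}^M$, and write $P_X,P_Y$ for the laws of $\bm{X}_M,\bm{Y}_M$ on this space. The consistency theorem (Theorem~\ref{thm:consistency}) gives $P_X[E]\to 1$, hence $P_X[E^{\rm c}]\to 0$, as $N\to\infty$. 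Letting $H$ be a uniform (in $N$) bound with $|h|\le H$, I split the target expectation into its contributions on $E$ and on $E^{\rm c}$; since $\left|\mathbf{E}[h\,\mathbf{1}_{E^{\rm c}}]\right|\le H\,P_X[E^{\rm c}]\to 0$, only the contribution on $E$ survives.

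The key step is a sub-distributional identity between the two models on $E$. On this event every adaptive argument $\alpha_m(\bm{X}_{m-1})$ equals the constant $\alpha_m^*$, so the integrand there reduces to the fixed-argument function $g(\bm{x}_M):=h(\bm{x}_M,\alpha_2^*,\cdots,\alpha_M^*)$. Moreover, for any path $\bm{x}_M\in E$ the hierarchical likelihood factorizes as
\begin{equation}
  \prod_{m=1}^M F_m\!\left(x^{(m)};\alpha_m(\bm{x}_{m-1}),\theta^*\right)
   =\prod_{m=1}^M F_m\!\left(x^{(m)};\alpha_m^*,\theta^*\right),
\end{equation}
which is exactly the weight assigned to that same path by the independent product law of $\bm{Y}_M$. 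Summing over $\bm{x}_M\in E$ then gives simultaneously $P_Y[E]=P_X[E]\to 1$ and the identity $\mathbf{E}[g(\bm{X}_M)\mathbf{1}_{E}]=\mathbf{E}[g(\bm{Y}_M)\mathbf{1}_{E}]$.

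To finish I would reassemble. From the identity above and $\mathbf{E}[h(\bm{X}_M,\cdots)\mathbf{1}_E]=\mathbf{E}[g(\bm{X}_M)\mathbf{1}_E]$, the surviving contribution equals $\mathbf{E}[g(\bm{Y}_M)\mathbf{1}_E]$. Writing $\mathbf{E}[g(\bm{Y}_M)\mathbf{1}_E]=\mathbf{E}[g(\bm{Y}_M)]-\mathbf{E}[g(\bm{Y}_M)\mathbf{1}_{E^{\rm c}}]$, using $\left|\mathbf{E}[g(\bm{Y}_M)\mathbf{1}_{E^{\rm c}}]\right|\le H\,P_Y[E^{\rm c}]\to 0$, and invoking the hypothesis $\mathbf{E}[g(\bm{Y}_M)]=\mathbf{E}[h(\bm{Y}_M,\alpha_2^*,\cdots,\alpha_M^*)]\to\beta$, the surviving contribution converges to $\beta$; combined with the vanishing $E^{\rm c}$ term established at the outset, this yields Eq.~(\ref{apdx:limit_lemma3}). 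I expect the middle step to be the main obstacle: one must argue carefully that the adaptive measure and the decoupled product measure assign identical weights to every path in $E$, which is precisely the mechanism that lets a statement proved for the tractable independent model $\bm{Y}_M$ be exported to the dependent model $\bm{X}_M$; the uniform boundedness of $h$ in $N$ is what keeps both tail contributions under control.
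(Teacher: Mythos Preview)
Your proposal is correct and follows essentially the same approach as the paper: both split the expectation over the event $E=\{\alpha_k(\bm{X}_{k-1})=\alpha_k^*~\forall k\}$, use that the hierarchical likelihood $\mathcal{L}_M$ coincides pointwise with the independent product $\prod_m F_m(\cdot;\alpha_m^*,\theta^*)$ on $E$ (hence $P_X[E]=P_Y[E]$ and the integrands match there), bound the $E^{\rm c}$ contributions by the uniform bound on $h$, and invoke Theorem~\ref{thm:consistency} for $P_X[E]\to 1$. The paper writes this out as an explicit add-and-subtract of $\mathbf{E}[h(\bm{Y}_M,\alpha_2^*,\dots,\alpha_M^*)]$ rather than in your indicator notation, but the argument is identical.
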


\begin{proof}
  \begin{widetext}
  Expanding the left hand side in Eq.~(\ref{apdx:limit_lemma3}), we obtain
  \begin{align}
      &\mathbf{E}\left[h(\bm{X}_M,\alpha_2(\bm{X}_1),\cdots,\alpha_M(\bm{X}_{M-1}))\right]=\sum_{\bm{x}_M}\mathcal{L}_M(\bm{x}_M;\theta^*)h(\bm{x}_M,\alpha_2(\bm{x}_1),\cdots,\alpha_M(\bm{x}_{M-1}))\notag\\
      &=\sum_{\substack{\bm{x}_M\\{\alpha}_k(\bm{x}_{k-1})=\alpha_k^*~\forall k}}\mathcal{L}_M(\bm{x}_M;\theta^*)h(\bm{x}_M,\alpha_2(\bm{x}_1),\cdots,\alpha_M(\bm{x}_{M-1}))\notag\\
      &~~~~~~~~~~~~~~~~~~~~~~~~+\sum_{\substack{\bm{x}_M\\{\alpha}_k(\bm{x}_{k-1})\neq\alpha_k^*~\exists k}}\mathcal{L}_M(\bm{x}_M;\theta^*)h(\bm{x}_M,\alpha_2(\bm{x}_1),\cdots,\alpha_M(\bm{x}_{M-1}))\notag\\
      &=\sum_{\substack{\bm{x}_M\\{\alpha}_k(\bm{x}_{k-1})=\alpha_k^*~\forall k}}\prod_{m=1}^M F_m\left(x^{(m)};\alpha_m^*,\theta^*\right)h(\bm{x}_M,\alpha^*_2,\cdots,\alpha^*_M)\notag\\
      &~~~~~~~~~~~~~~~~~~~~~~~~+\sum_{\substack{\bm{x}_M\\{\alpha}_k(\bm{x}_{k-1})\neq\alpha_k^*~\exists k}}\mathcal{L}_M(\bm{x}_M;\theta^*)h(\bm{x}_M,\alpha_2(\bm{x}_1),\cdots,\alpha_M(\bm{x}_{M-1}))\notag\\
      &=\mathbf{E}\left[h(\bm{Y}_M,\alpha^*_2,\cdots,\alpha^*_M)\right]\notag\\
      &~~~~~~~~~~~~~~~~~~~~~~~~+\sum_{\substack{\bm{x}_M\\{\alpha}_k(\bm{x}_{k-1})\neq\alpha_k^*~\exists k}}\mathcal{L}_M(\bm{x}_M;\theta^*)h(\bm{x}_M,\alpha_2(\bm{x}_1),\cdots,\alpha_M(\bm{x}_{M-1}))\notag\\
      &~~~~~~~~~~~~~~~~~~~~~~~~-\sum_{\substack{\bm{x}_M\\{\alpha}_k(\bm{x}_{k-1})\neq\alpha_k^*~\exists k}}\prod_{m=1}^M F_m\left(x^{(m)};\alpha_m^*,\theta^*\right)h(\bm{x}_M,\alpha^*_2,\cdots,\alpha^*_M).
  \end{align}
  The above expression yields the following inequality
  \begin{align}\label{apdx:beta_convergence}
      &\left|\mathbf{E}\left[h(\bm{X}_M,\alpha_2(\bm{X}_1),\cdots,\alpha_M(\bm{X}_{M-1}))\right]-\beta\right|\notag\\
      &\leq \left|\mathbf{E}\left[h(\bm{Y}_M,\alpha^*_2,\cdots,\alpha^*_M)\right]-\beta\right|\notag\\
      &~~~~~~~~~~~~~~~~~~~~~~~~+\sum_{\substack{\bm{x}_M\\{\alpha}_k(\bm{x}_{k-1})\neq\alpha_k^*~\exists k}}\mathcal{L}_M(\bm{x}_M;\theta^*)\left|h(\bm{x}_M,\alpha_2(\bm{x}_1),\cdots,\alpha_M(\bm{x}_{M-1}))\right|\notag\\
      &~~~~~~~~~~~~~~~~~~~~~~~~+\sum_{\substack{\bm{x}_M\\{\alpha}_k(\bm{x}_{k-1})\neq\alpha_k^*~\exists k}}\prod_{m=1}^M F_m\left(x^{(m)};\alpha_m^*,\theta^*\right)\left|h(\bm{x}_M,\alpha^*_2,\cdots,\alpha^*_M)\right|\notag\\
      &\leq \left|\mathbf{E}\left[h(\bm{Y}_M,\alpha^*_2,\cdots,\alpha^*_M)\right]-\beta\right| +\gamma\sum_{\substack{\bm{x}_M\\{\alpha}_k(\bm{x}_{k-1})\neq\alpha_k^*~\exists k}}\left(\mathcal{L}_M(\bm{x}_M;\theta^*)+\prod_{m=1}^M F_m\left(x^{(m)};\alpha_m^*,\theta^*\right)\right),
  \end{align}
  where $\gamma$ is the upper bound of $h$.
  The convergence of amplified levels implies that for all $\epsilon>0$ there exists $N^*\in\mathbb{N}$ such that
  \begin{align}\label{apdx:gamma_convergence}
    P\left[\alpha_k(\bm{X}_{k-1})=\alpha^*_k~\forall k\right]&=\sum_{\substack{\bm{x}_M\\{\alpha}_k(\bm{x}_{k-1})=\alpha_k^*~\forall k}}\mathcal{L}_{M}\left(\bm{x}_M;\theta^*\right)=\sum_{\substack{\bm{x}_M\\{\alpha}_k(\bm{x}_{k-1})=\alpha_k^*~\forall k}}\prod_{m=1}^M F_m(x^{(m)};\alpha_m^*,\theta^*)>1-\frac{\epsilon}{2\gamma},
  \end{align}
  \end{widetext}
  for all $N>N^*$. Combining Eqs.~(\ref{apdx:beta_convergence}),~(\ref{apdx:gamma_convergence}) and the assumption for the convergence of $\mathbf{E}\left[h(\bm{Y}_M,\alpha^*_2,\cdots,\alpha^*_M)\right]$, we obtain Eq.~(\ref{apdx:limit_lemma3}).
  
\end{proof}

\renewcommand{\theequation}{F.\arabic{equation}}
\setcounter{equation}{0}
\section{Additional experiment}\label{apdx:additional_exp}
\subsection{Impact of calibration errors}

Here, we provide the numerical simulations to clarify the impact of calibration errors of $p_{\rm q}$ on the performance of our method.
If an estimate of $p_q$ contains some calibration error, the resulting estimator can be biased even in the asymptotic case.
The experimental settings are the same as those in Section~\ref{sec:confirm_asym_prop}.
Figure~\ref{apdx_fig:cal_error} shows that in the current setting, the $\pm0.1\%$ calibration errors can be ignored up to $\sim 10^5$ queries.
In the case of the $\pm0.5\%$ calibration errors, the estimation performance deteriorates even for small queries $\sim 6\times 10^3$.
The results indicate that our method requires a relatively high-precision estimate of $p_{\rm q}$ that have the precision comparable to the target precision of $\cos{\theta^*}$.
Note that, in the quantum amplitude estimation problem, the previous paper~\cite{tanaka2021amplitude} provides the (non-adaptive) multiparameter estimation method with respect to both a target parameter and a noise parameter.
In addition, they also show the robustness of target parameter estimation against the estimation error of the noise parameter in terms of the Fisher information matrix.
Although a similar extension of our adaptive method is possible, it falls outside the scope of this paper.

\begin{figure*}[htb]
 \centering
 \begin{tabular}{ccc}
 \includegraphics[scale=0.8]{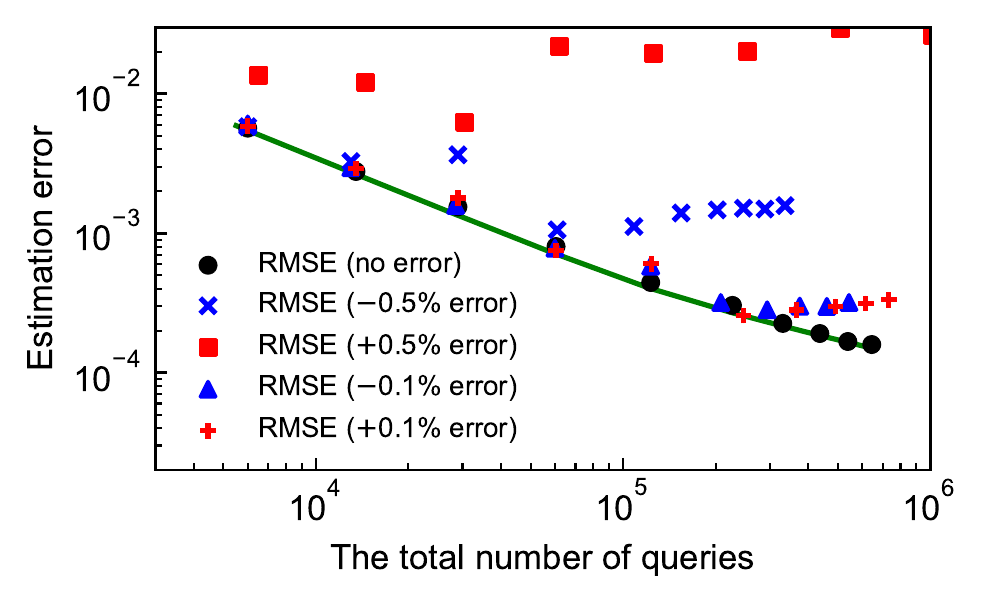}&&\includegraphics[scale=0.8]{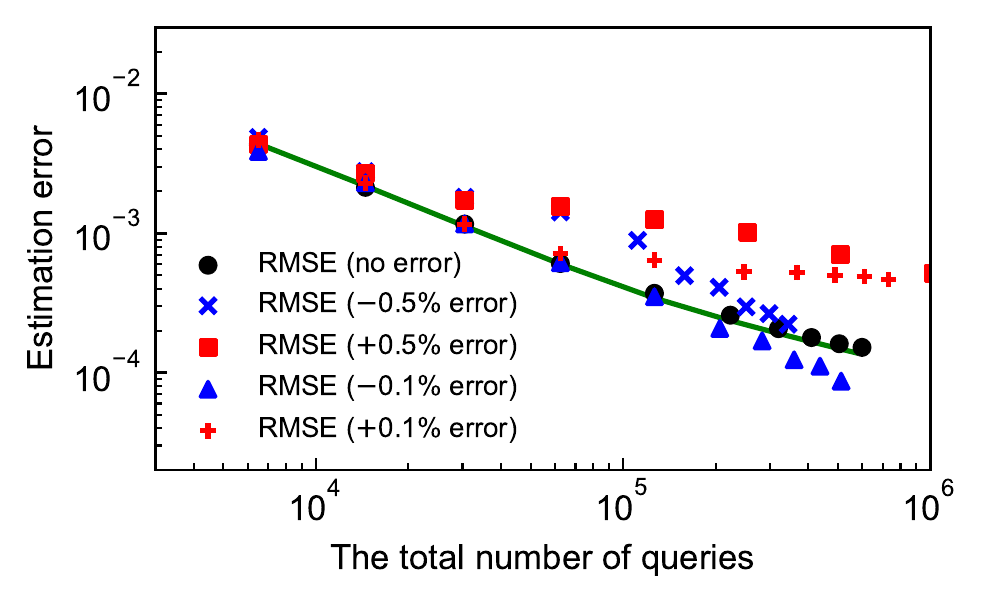}\\
 (a) $\cos\theta^*=0.042$&&(b) $\cos\theta^*=0.5$\\
 &&\\
 \end{tabular}
 \caption{
 The impact of calibration error of $p_{\rm q}$ on the estimation performance. 
 The plots indicate the root mean squared error (RMSE) of the method, evaluated with 200 trials, using the estimated noise parameter $p_{\rm q}$ including $\pm 0.1\%$ or $\pm 0.5\%$ calibration errors.
 The true value of $p_{\rm q}$ is taken as $0.995$.
 The green solid line represents the (asymptotic) QCR bound under the perfect calibration. 
 We note that the RMSE is plotted as a function of the maximum value of $N_{\rm q}$ in the trials because of the randomness of $N_{\rm q}$.
 }
 \label{apdx_fig:cal_error}
\end{figure*}

\subsection{Asymptotic properties in several noise parameters}

We provide additional results to confirm the asymptotic property of our estimator in several noise parameters.
Here, all settings except for the noise parameter are the same as Section~\ref{sec:confirm_asym_prop}, and the noise parameter is set to $p_{\rm q}=0.999$ (i.e., $0.1\%$ depolarization noise) or $p_{\rm q}=0.99$ (i.e., $1\%$ depolarization noise).
In Fig.~\ref{apdx_fig:rmse_vs_crbound}, we can confirm that the RMSE achieves the CCR bound (also QCR bound) sufficiently in $N=500$ as in Section~\ref{sec:confirm_asym_prop}.

\begin{figure*} 
 \centering
 \begin{tabular}{ccc}
     \includegraphics[scale=0.8]{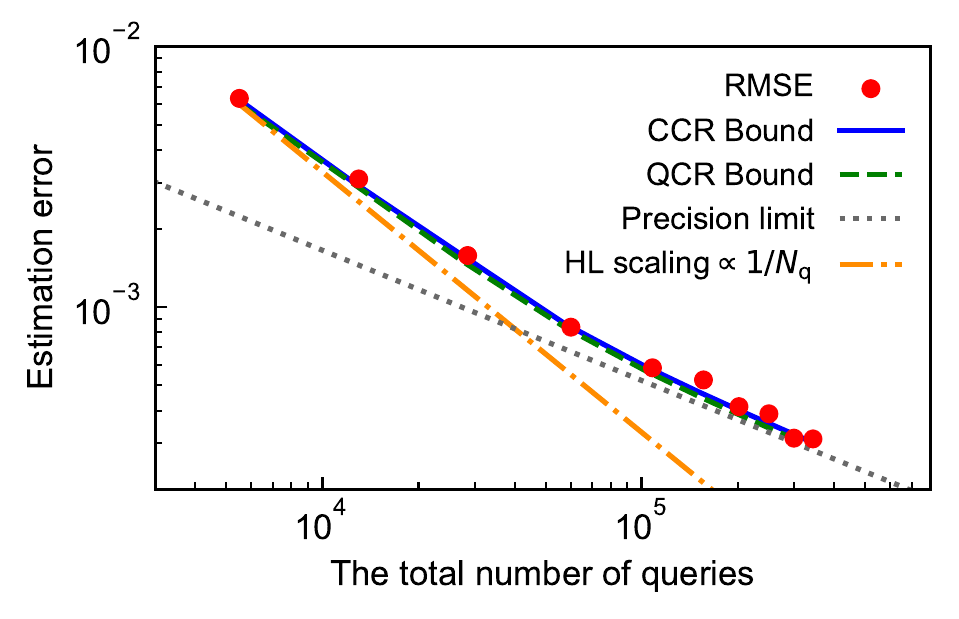}&&\includegraphics[scale=0.8]{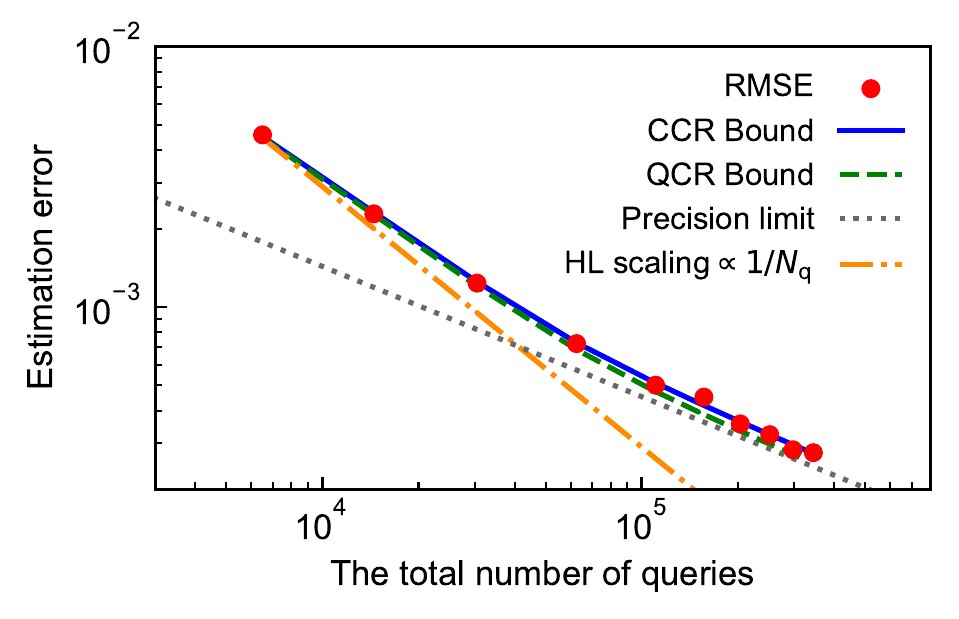}\\
     (a) $\cos\theta^*=0.042$ and $p_{\rm q}=0.99$&&(b) $\cos\theta^*=0.5$ and $p_{\rm q}=0.99$\\
     &&\\
     \includegraphics[scale=0.8]{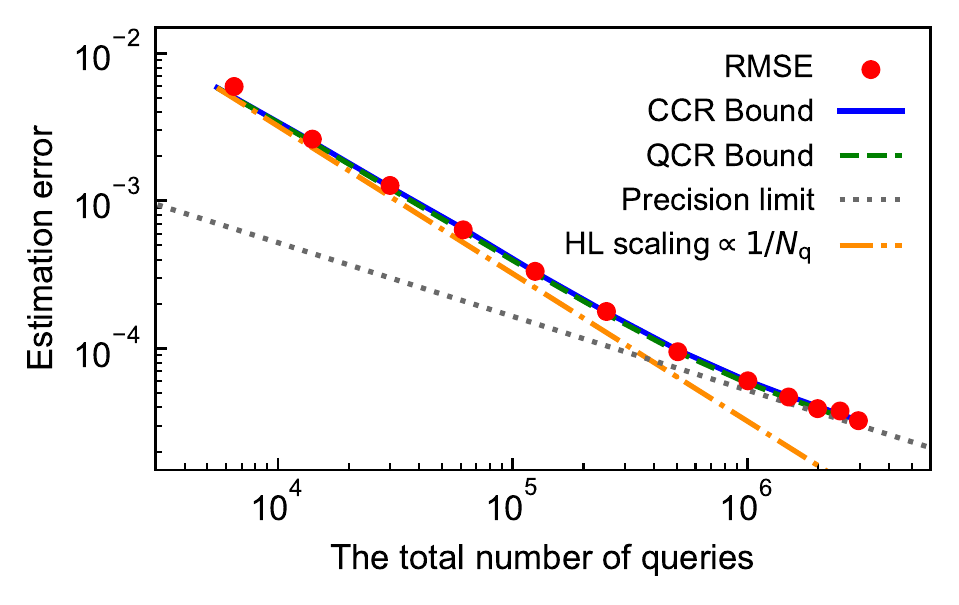}&&\includegraphics[scale=0.8]{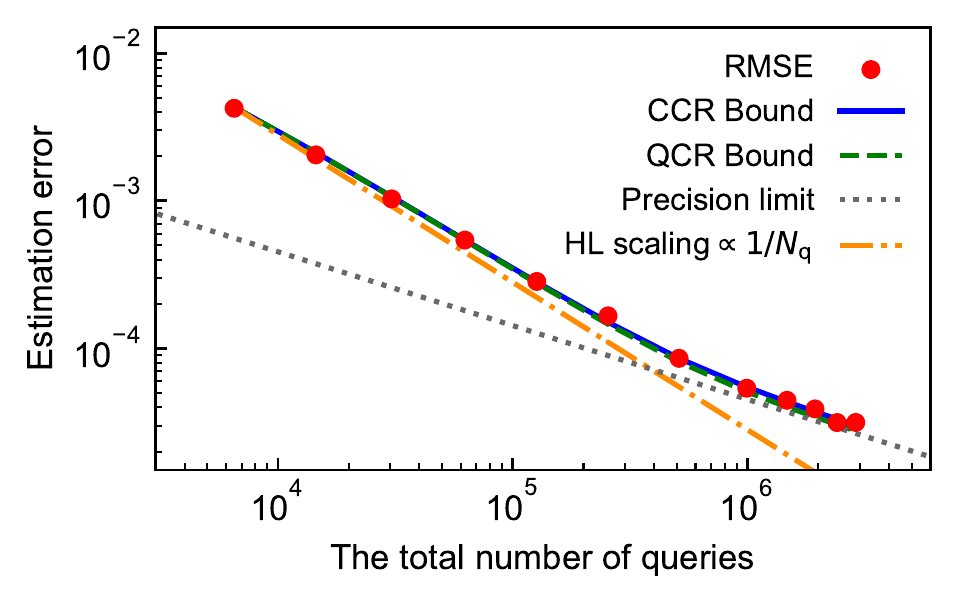}\\
     (c) $\cos\theta^*=0.042$ and $p_{\rm q}=0.999$&&(d) $\cos\theta^*=0.5$ and $p_{\rm q}=0.999$\\
     &&
 \end{tabular}
 \caption{The estimation error of $\braket{\mathcal{O}}=\cos{\theta^*}$ and the total number of queries under several depolarization noise parameters.
 The root mean squared error (RMSE) of the estimator $\cos{\hat{\theta}}$ is evaluated by 300 trials, and the red dots correspond to $M=3,4,\cdots,12$ (or 14) from left to right.
 Since $N_{\rm q}$ in Eq.~(\ref{total q}) is a random variable in our method, the RMSE is plotted as a function of the maximum value of $N_{\rm q}$ in 300 trials.
 The blue solid and green dashed lines represent the asymptotic values of CCR/QCR bounds obtained from the corresponding classical/quantum Fisher information $\mathcal{I}^*_{\rm c/q, tot}(\theta^*)$, respectively.
 The orange dash-dotted and the gray dotted lines represent the Heisenberg-limited scaling ${\rm RMSE}=O(1/N_{\rm q})$ and the precision limit derived in Theorem~\ref{thm:QFI_limit}, respectively.}
 \label{apdx_fig:rmse_vs_crbound}
\end{figure*}

\end{document}